\newtheorem{theorem}{Theorem}
\newtheorem{proposition}[theorem]{Proposition}
\newtheorem{corollary}[theorem]{Corollary}
\newtheorem{definition}[theorem]{Definition}
\newtheorem{lemma}[theorem]{Lemma}
\theoremstyle{definition}
\newtheorem{remark}{Remark}
\newtheoremstyle{red}{}{}{\normalfont}{}{\color{red!80!black}\bfseries}{.}{ }{}
\theoremstyle{red}
\let\nc\newcommand
\nc{\NegW}{W_\tau}
\nc{\RW}{\Omega_\FF}
\DeclareMathOperator{\Tr}{Tr}
\nc{\id}{\mathbbm{1}}
\renewcommand{\O}{\mathcal{O}}
\newcommand{\DD}{\mathbb{D}}
\newcommand{\FF}{\mathcal{F}}
\nc{\Dmax}{D_{\max}}
\let\O\OO
\nc{\SEP}{\mathrm{SEP}}
\nc{\STAB}{\mathrm{STAB}}
\nc{\PPT}{\mathrm{PPT}}
\nc{\PPTP}{\mathrm{PPTP}}
\nc{\SEPP}{\mathrm{SEPP}}
\nc{\SP}{\mathrm{KP}}
\nc{\FP}{\mathrm{FP}}
\nc{\CPTP}{{\mathrm{CPTP}}}
\nc{\Op}{\mathcal{O}}
\nc{\idc}{\mathrm{id}}
\nc{\ve}{\varepsilon}
\nc{\Omax}{\O_{\mathrm{max}}}
\nc{\sminfty}{{\infty,\bullet}}
\renewenvironment{boxed}[1]%
  {\expandafter\ifstrequal\expandafter{#1}{orange}{\begin{tcolorbox}[colback=orange!5,colframe=orange!15,breakable,enhanced]}{\begin{tcolorbox}[colback=white,colframe=gray!10,breakable,enhanced]}}%
  {\end{tcolorbox}}
\nc{\regrob}{\DD_{s,\FF}^{\infty}}
\nc{\regrobs}{\DD_{s,\FF}^{\sminfty}}
\renewcommand\onecolumngrid{%
\do@columngrid{one}{\@ne}%
\def\set@footnotewidth{\onecolumngrid}
\def\footnoterule{\kern-6pt\hrule width 1.5in\kern6pt}%
}
\let\oldproofname\proofname
\renewcommand{\proofname}{\rm\bf{\oldproofname}}
\newtheorem{example}[theorem]{Example}
\def\Tr{\operatorname{Tr}}
\def\id{\operatorname{id}}
\def\1{\openone}
\def\d{\textbf{d}}
\def\keywords{\xdef\@thefnmark{}\@footnotetext}
\def\DD{D}
\begin{document}

\title{On the optimal error exponents for classical and quantum antidistinguishability}

 \author{Hemant K. Mishra}
\email{hemant.mishra@cornell.edu}
  \affiliation{School of Electrical and Computer Engineering, Cornell University, Ithaca, New York 14850, USA}

\author{Michael Nussbaum}
\email{nussbaum@math.cornell.edu}
\affiliation{Department of Mathematics, Cornell University, Ithaca, New York 14850, USA}

\author{Mark M. Wilde}
\email{wilde@cornell.edu}
 \affiliation{School of Electrical and Computer Engineering, Cornell University, Ithaca, New York 14850, USA}
 

\begin{abstract}
            The concept of antidistinguishability of quantum states has been studied to investigate foundational questions in quantum mechanics. It is also called quantum state elimination, because the goal of such a protocol is to guess which state, among finitely many chosen at random, the system is not prepared in (that is, it can be thought of as the first step in a process of elimination). Antidistinguishability has been used to investigate the reality of quantum states, ruling out $\psi$-epistemic ontological models of quantum mechanics [Pusey~\textit{et~al.}, \textit{Nat.~Phys.}, 8(6):475-478, 2012]. Thus, due to the established importance of antidistinguishability in quantum mechanics, exploring it further is warranted. 
            
            In this paper, we provide a comprehensive study of the optimal error exponent---the rate at which the optimal error probability vanishes to zero asymptotically---for classical and quantum antidistinguishability. 
            We derive an exact expression for the optimal error exponent in the classical case and show that it is given by the {\it multivariate classical Chernoff divergence}. Our work thus provides this  divergence with a meaningful operational interpretation as the optimal error exponent for antidistinguishing a set of probability measures.
            For the quantum case, we provide several bounds on the optimal error exponent: a lower bound given by the best pairwise Chernoff divergence of the states, a single-letter semi-definite programming upper bound, and
            lower and upper bounds in terms of minimal and maximal {\it multivariate quantum Chernoff divergences}. It remains an open problem to obtain an explicit expression for the optimal error exponent for quantum antidistinguishability.
\end{abstract}

\maketitle

\noindent \textit{Dedicated to the memory of Mary Beth Ruskai. She was an important foundational figure in the field of quantum information, and her numerous seminal research contributions and reviews, including \cite{LR73,ruskai2002inequalities,horodecki2003entanglement}, have inspired many quantum information scientists.}

\tableofcontents

\section{Introduction}

Quantum state discrimination is a fundamental component of quantum information science, which plays a key role in quantum computing \cite{bacon2006}, quantum communication \cite{helstrom1969quantum, Barnett09}, and  quantum key distribution \cite{bae2015quantum}.
The state discrimination or distinguishability task is to infer the actual state of a quantum system by applying a quantum measurement to the system. More formally, consider a quantum system prepared in one of the quantum states $\rho_1,\ldots,\rho_r$.
A quantum measurement is specified by a positive operator-valued measure  $\{M_1,\ldots, M_r\}$ with output~$i$ indicating $\rho_i$ as the true state of the system with success probability $\Tr[M_i\rho_i]$, as given by the Born rule \cite{born1926quantum}.

The task that we consider here is in a sense opposite to the aforementioned task of distinguishability, and it is thus called \textit{antidistinguishability} of quantum states or \textit{quantum state elimination} \cite{caves2002conditions, pusey2012reality, barrett2014no, leifer2014quantum, bandyopadhyay2014conclusive, havlivcek2020simple, leifer2020noncontextuality, russo2023}. 
In particular, for the task of antidistinguishability, we are interested in designing a measurement whose outcome corresponds to a state that is not the actual state of the quantum system. 
In the classical version of the antidistinguishability problem, quantum states are replaced by probability measures on a measurable space, and the task is to rule out one of the probability measures upon observing i.i.d.~(independent and identically distributed) data that is not produced by the probability measure.

As an illustrative example in the classical case, suppose that one of three possible dice is tossed, a red one with probability distribution $p_R$, a green one with probability distribution $p_G$, or a blue one with probability distribution $p_B$. The task is then, after observing a sample, to output ``not red'' if the green or blue die is tossed, ``not green'' if the red or blue die is tossed, and ``not blue'' if the red or green die is tossed. It is also of interest to consider the antidistinguishability task when the same colored die is tossed multiple times, leading to several samples that one can use to arrive at a conclusion.

To the best of our knowledge, an analysis of the asymptotics of the error probability of antidistinguishability is missing in the literature for both cases, classical as well as quantum,  and it is this scenario that we consider in our paper.

\subsection{Contributions}

In this paper, we provide a comprehensive study of the optimal error exponent---the rate at which the optimal error probability vanishes to zero asymptotically---for classical and quantum antidistinguishability. 
\begin{itemize}
\item 
We derive an exact expression for the optimal error exponent in the classical case and show that it is given by the {\it multivariate classical Chernoff divergence} (Theorem~\ref{thm:optimal-classical-general}). 
Our work thus provides this multivariate divergence with a meaningful operational interpretation as the optimal error exponent for antidistinguishing a set of probability measures.

\item 
We provide several bounds on the optimal error exponent in the quantum case:
\begin{itemize}
\item[$\circ$] lower bound given by the best pairwise Chernoff divergence of the states (Theorem~\ref{newthm:lower_bound_optimal_error}),
\item [$\circ$] single-letter semi-definite programming upper bound (Theorem~\ref{thm:improved-one-letter-upper-bound}), and
\item[$\circ$]
lower and upper bounds in terms of minimal and maximal {\it multivariate quantum Chernoff divergences} (Theorem~\ref{thm:min-max-bounds-optimal-error-exponent}).
\end{itemize}

\item We also provide an upper bound on the optimal error probability of antidistinguishing an ensemble of         quantum states in terms of the pairwise optimal error probabilities of the
    states, and consequently, we deduce that the given quantum states are perfectly antidistinguishable if at least two of them are orthogonal to each other (Theorem~\ref{thm:upper_bound_optimal_error}).

    \item As a contribution of independent interest and auxiliary to Theorem~\ref{thm:improved-one-letter-upper-bound}, we establish several fundamental properties of the extended max-relative entropy, a quantity of interest originally defined in \cite{ww2020}.
\end{itemize}
\noindent It remains an intriguing open problem to determine an explicit expression for the optimal error exponent in the quantum case.

\subsection{Literature review}

Let us briefly review some prior contributions to the topic of antidistinguishability. We note here that quantum state discrimination is equivalent to finding a size-$(r-1)$ subset of $\{\rho_1,\ldots,\rho_r\}$ such that none of the states in the subset is the true state of the system; thus, the task is equivalent to what is called \textit{quantum $(r-1)$-state exclusion}. 
A generalization of this task is \textit{quantum $m$-state exclusion} for $1\leq m \leq r-1,$ which aims at detecting a size-$m$ subset of $\{\rho_1,\ldots,\rho_r\}$ such that none of the states in the subset is the true state of the system \cite{russo2023}.
Quantum $1$-state exclusion is therefore the same as antidistinguishability of quantum states.

The concept of antidistinguishability has been studied to investigate foundational questions in quantum mechanics \cite{caves2002conditions, pusey2012reality, barrett2014no, leifer2014quantum}. For example, it was used in \cite{pusey2012reality} to investigate the reality of quantum states, ruling out $\psi$-epistemic ontological models of quantum mechanics. It was also used in studying quantum communication complexity \cite{havlivcek2020simple}, in deriving noncontextuality inequalities \cite{leifer2020noncontextuality}, and has applications in quantum cryptography \cite{Collins2014}.
Thus, due to the established importance of antidistinguishability in quantum mechanics, exploring it further is warranted.
There have been a number of works that determine algebraic conditions on a set of quantum states such that perfect antidistinguishability is possible. A sufficient condition for perfect antidistinguishability of pure states \cite{heinosaari2018antidistinguishability} is that if some positive linear combination of the pure states is a projection with a ``special'' kernel, then the states are \textit{antidistinguishable}. 
In the same paper, a necessary and sufficient condition for antidistinguishability of pure states was given, which demands the existence of projections satisfying three non-trivial conditions. 
 Very recently, a necessary and sufficient condition for non-antidistinguishability of general quantum states was given in \cite{russo2023}, which also demands the existence of a Hermitian matrix with positive trace satisfying a set of non-trivial inequalities.
 Even though the conditions given in the aforementioned works are interesting and insightful, verifying them is not straightforward. 
One of the consequences of our work is that we provide a simple sufficient condition for perfect antidistinguishability of quantum states (Theorem~\ref{thm:upper_bound_optimal_error}).

\subsection{Paper organization}

The organization of our paper is as follows. In Section~\ref{sec:mathback}, we state some definitions and provide a brief mathematical background of relevant topics covered in our paper. 
We start Section~\ref{sec:classical-optimal-error-exponent} by building a theory of classical antidistinguishability, where we introduce the notions of optimal error probability and optimal error exponent. We then derive an explicit expression for the optimal error exponent in the classical case, and we show that it is given by the {\it multivariate classical Chernoff divergence} (Theorem~\ref{thm:optimal-classical-general}). The following sections deal with the optimal error exponent in the quantum case.
We begin Section~\ref{sec:quantum-achievable-error-exponent} by providing an upper bound on the optimal error probability of antidistinguishing an ensemble of quantum states in terms of the pairwise optimal error probabilities of the states (Theorem~\ref{thm:upper_bound_optimal_error}),
and we then use this result to derive a lower bound on the optimal error exponent (Theorem~\ref{newthm:lower_bound_optimal_error}). Next, we provide both lower and upper bounds on the optimal error exponent in terms of minimal and maximal {\it multivariate quantum Chernoff divergences} in Section~\ref{sec:bounds_optimal_error_exponent_xi_max_min} (Theorem~\ref{thm:min-max-bounds-optimal-error-exponent}). 
Lastly, in Section~\ref{sec:one-letter-upper-bound-optimal-error-exponent}, we derive a single-letter semi-definite programming upper bound on the optimal error exponent (Theorem~\ref{thm:improved-one-letter-upper-bound}).
Appendices~\ref{app:explicit-form-gamma-i-non-corner-points} through~\ref{app:max-rel-entropy-bounds-similarity} contain mathematical proofs of various claims made throughout the paper.


\section{Mathematical background}
\label{sec:mathback}


\subsection{Antidistinguishability of probability measures}

Let $P_1,\ldots, P_{r}$ be probability measures on a measurable space $(\Omega, \mathcal{A})$, where $\mathcal{A}$ is a $\sigma$-algebra on the set $\Omega$. 
Set $[r]\coloneqq \{1,\ldots,r\}$.
Let $\eta_1,\ldots, \eta_r$ be positive real numbers such that $\sum_{i\in[r]}\eta_i=1$.  Throughout the paper, we call
\begin{equation}
\mathcal{E}_{\operatorname{cl}} \coloneqq \{(\eta_i, P_i): i \in [r]\}  
\end{equation}
an ensemble of probability measures on the measurable space $(\Omega, \mathcal{A})$.
Let $\mu$ be the dominating measure
\begin{align}
    \mu \coloneqq \sum_{i\in [r]}\eta_i P_i, \label{eq:dominating-measure}
\end{align}
and $p_1,\ldots, p_r$ the induced densities
\begin{align}
    p_i \coloneqq \dfrac{\d P_i}{\d \mu}, \qquad i\in[r]\label{eq:induced-densities},
\end{align}
which are given by the Radon--Nikod\'ym theorem \cite{billingsley1995probability}.

The problem of distinguishability, i.e., identifying the correct probability density $p_i$ based on i.i.d.~(independent and identically distributed) data, has been well studied. This problem is as follows:
Suppose that $i$ is sampled with probability $\eta_i$, and then $n$ i.i.d.~samples are selected according to the product measure $P_i^{\otimes n}$. The task is to identify the correct value of $i$ based on the $n$ i.i.d.~samples observed.
It is known that the maximum likelihood method for the identification task is optimal, and the optimal success probability, in the case that $n=1$, is given by 
\begin{align}\label{eq:maximum_likelihood_principle}
    \int\!\!\d\mu\ \left( \eta_1 p_1 \vee \cdots \vee \eta_r p_r \right) \coloneqq \int\!\!\d\mu(\omega)\ \max \{\eta_1 p_1(\omega),\ldots, \eta_r p_r(\omega)\}.
\end{align}
Asymptotically, the optimal error vanish to zero exponentially and the error exponent is known to be equal to the Chernoff divergence for the least favorable pair $(p_i,p_j)$, for $i \neq j$ \cite{salikhov1973asymptotic,  torgersen1981measures, Leang1997, salikhov1999one, salikhov2003optimal}.

For the antidistinguishability problem in the classical case, the task is to guess a probability density that is not represented by the observed data. For this problem, no literature is available to the best of our knowledge. 
A reasonable first idea for selecting a density that is unlikely to be the true one is to choose the one such that $\eta_i p_i(\omega)$ is minimum if $\omega$ is observed.
This corresponds to a \textit{minimum likelihood principle}. In what follows, we discuss this idea more formally.

A deterministic decision rule for the antidistinguishability problem is a function 
\begin{align}
    \delta: \Omega \to \{\mathbf{e}_{i}: i\in[r]\},
\end{align}
where $\mathbf{e}_i$ is the $i$th standard unit vector in $\mathbb{R}^r$, such that $\delta(\omega)=\mathbf{e}_{i}$ means that we indicate $p_{i}$ to be our guess for the density that is not the true one. More generally, we can admit a randomized decision rule, along the following lines:
\begin{align}
    \delta: \Omega \to [0,1]^r,\qquad \sum_{i \in [r]} \delta_i(\omega)=1.
\end{align}
If $p_{i}$ is the true density, then the antidistinguishability error probability is given by 
\begin{align}
    \int\!\!\d\mu(\omega)\ \delta_{i}(\omega)p_{i}(\omega), 
\end{align}
and the total error probability is
\begin{align}  \operatorname{Err}_{\operatorname{cl}}(\delta; \mathcal{E}_{\operatorname{cl}}) &\coloneqq  \sum_{i \in [r]} \eta_i \int\!\!\d\mu(\omega)\ \delta_{i}(\omega)p_{i}(\omega)=  \int\!\!\d\mu(\omega)\ \sum_{i \in [r]} \delta_{i}(\omega) \eta_i p_{i}(\omega).
\end{align}
To minimize the above expression, we can minimize the integrand for every $\omega$. Since $\delta_i(\omega)$ is a weight, we should place maximum weight on the smallest of $\eta_i p_i(\omega)$. So, the optimal decision for given $\omega$ corresponds to the \textit{minimum likelihood rule}:   $\delta^{*}(\omega) = \mathbf{e}_i$, if $i \in [r]$ is the minimum index such that $\eta_ip_i(\omega)=\min\{\eta_1p_1(\omega),\ldots, \eta_rp_{r}(\omega)\}$. The total error probability when using the decision rule $\delta^*$ is the optimal error probability, given by
\begin{align}
\operatorname{Err}_{\operatorname{cl}}(\mathcal{E}_{\operatorname{cl}}) \coloneqq \operatorname{Err}_{\operatorname{cl}}(\delta^{*}; \mathcal{E}_{\operatorname{cl}}) 
&=  \int\!\!\d\mu(\omega)\ \min\{\eta_1 p_{1}(\omega),\ldots, \eta_r p_{r}(\omega)\} = \int\!\!\d\mu\ \left(\eta_1 p_{1} \land \cdots \land \eta_r p_{r}\right).\label{eq:optimal-AD-error-probability}
\end{align}

In the asymptotic treatment of the problem, we consider the \textit{$n$-fold ensemble} $\mathcal{E}_{\operatorname{cl}}^{n}\coloneqq \{(\eta_i, P_i^{\otimes n}): i \in [r]\}$ on the $n$-fold measurable space $(\Omega^n, \mathcal{A}^{(n)})$, where $\Omega^n$ is the $n$-fold Cartesian product of $\Omega$ and $\mathcal{A}^{(n)}$ is the $\sigma$-algebra on $\Omega^n$ generated by the $n$-fold Cartesian product of  $\mathcal{A}$. It then follows that the optimal antidistinguishability error probability, in this case, is
\begin{align}
\operatorname{Err}_{\operatorname{cl}}(\mathcal{E}_{\operatorname{cl}}^{n}) =  \int\!\!\d\mu^{\otimes n}\ \left(\eta_1 p_{1}^{\otimes n}\land \cdots \land \eta_r p_{r}^{\otimes n}\right).\label{eq:n-fold-ensemble-optimal-error-probability}
\end{align}
Set $\eta_{\operatorname{min}}\coloneqq \min\{\eta_1,\ldots, \eta_r\}$ and $\eta_{\operatorname{max}}\coloneqq \max\{\eta_1,\ldots, \eta_r\}$. It is easy to see that for all $n \in \mathbb{N}$, we have
\begin{align}
    \eta_{\operatorname{min}} \int\!\!\d\mu^{\otimes n}\ \left( p_{1}^{\otimes n}\land \cdots \land  p_{r}^{\otimes n}\right) \leq \operatorname{Err}_{\operatorname{cl}}(\mathcal{E}_{\operatorname{cl}}^{n}) \leq \eta_{\operatorname{max}} \int\!\!\d\mu^{\otimes n}\ \left( p_{1}^{\otimes n}\land \cdots \land  p_{r}^{\otimes n}\right).
\end{align}
This implies
\begin{align}
    \liminf_{n \to \infty} -\dfrac{1}{n}\ln \operatorname{Err}_{\operatorname{cl}}(\mathcal{E}_{\operatorname{cl}}^{n}) = \liminf_{n \to \infty} -\dfrac{1}{n}\ln \int\!\!\d\mu^{\otimes n}\ \left( p_{1}^{\otimes n}\land \cdots \land  p_{r}^{\otimes n}\right),\label{eq:optimal-error-exponent-ind-of-eta}
\end{align}
which is independent of $\eta_1,\ldots, \eta_r$.
\begin{boxed}{white}
    \begin{definition}
        The optimal error exponent for antidistinguishing the probability measures of a given ensemble $\mathcal{E}_{\operatorname{cl}}=\{(\eta_i, P_i): i \in [r]\}$ is defined by
        \begin{align}\label{eq:asy-error-rate-def}
            \operatorname{E}_{\operatorname{cl}}(P_1,\ldots, P_r) \coloneqq \liminf_{n \to \infty} -\dfrac{1}{n}\ln \operatorname{Err}_{\operatorname{cl}}(\mathcal{E}_{\operatorname{cl}}^{n}).
        \end{align}
    \end{definition}
\end{boxed}
\begin{remark}\label{rem:ind-of-dom-measure}
    We note that the above definition of the optimal error exponent is independent of the choice of the dominating measure $\mu$.
    This is because the development in \eqref{eq:maximum_likelihood_principle}--\eqref{eq:optimal-error-exponent-ind-of-eta} is independent of the choice of the probability measure $\mu$ dominating $P_1,\ldots, P_r$. Indeed, if $\mu^{\prime}$ is an arbitrary probability measure dominating $P_1,\ldots, P_r$, then $\mu^{\prime}$ also dominates $\mu$. Let $\nu \coloneqq \frac{\d \mu}{\d \mu^{\prime}}$. We then have 
    \begin{align}
        p_i^{\prime} 
            \coloneqq \frac{\d P_i}{\d \mu^{\prime}} 
            = \frac{\d P_i}{\d \mu} \cdot \frac{\d \mu}{\d \mu^{\prime}} 
            =  p_i \nu, \quad \text{for all } i \in [r].
    \end{align}
    Consequently, the quantity in \eqref{eq:maximum_likelihood_principle} is given by
    \begin{align}
        \int\!\!\d\mu\ \left( \eta_1 p_1 \vee \cdots \vee \eta_r p_r \right) 
            &= \int\!\!\d\mu^{\prime}\  \left( \eta_1  p_1 \vee \cdots \vee \eta_r  p_r \right)\nu \\
            &= \int\!\!\d\mu^{\prime}\  \left( \eta_1  p_1 \nu \vee \cdots \vee \eta_r  p_r \nu \right) \\
            &= \int\!\!\d\mu^{\prime}\ \left( \eta_1  p_1^{\prime} \vee \cdots \vee \eta_r  p_r^{\prime} \right).
    \end{align}
      Similarly, the remaining quantities in \eqref{eq:maximum_likelihood_principle}--\eqref{eq:optimal-error-exponent-ind-of-eta} can be shown to be independent of the choice of $\mu$. See \cite[p.~233]{shiryaev2016probability}.
\end{remark}

\subsection{Multivariate classical Chernoff divergence}

In 1909, Hellinger introduced a function of multiple probability distributions \cite{hellinger1909neue}. Known as the {\it Hellinger transform} in the literature on probability and statistics, this quantity plays an important role in our work. We recall its definition below.
Let $\mathcal{E}_{\operatorname{cl}}=\{(\eta_i, P_i): i \in [r]\}$ be an ensemble of probability measures on a measurable space $(\Omega, \mathcal{A})$. Let $\mu$ be the dominating measure defined by \eqref{eq:dominating-measure}, and let $p_1,\ldots,p_r$ be the induced probability densities given by \eqref{eq:induced-densities}.
Denote by $\mathbb{S}_r$ the unit simplex in $\mathbb{R}^{r}$:
\begin{align}\label{eq:unit_simplex}
    \mathbb{S}_r &\coloneqq \bigg\{ \mathbf{s}\in [0,1]^r: \mathbf{s}=(s_1,\ldots, s_r), \sum_{i\in [r]} s_i = 1 \bigg\}.
\end{align}
\begin{boxed}{white}
    \begin{definition}\label{def:hellinger-transform}
    The Hellinger transform of the probability measures $P_1,\ldots, P_r$ is a function on the unit simplex, defined as
        \begin{align}
    \mathbb{H}_\mathbf{s}(P_1,\ldots,P_r) \coloneqq \int\!\!\d\mu\  p_1^{s_1}\cdots p_r^{s_r}, \qquad \text{for } \mathbf{s}\coloneqq(s_1,\ldots, s_r) \in \mathbb{S}_r.
    \label{eq:hellinger-trans-def}
        \end{align}
    Here we use the convention $0^0=1$ of \cite[Definition~5.10]{strasser2011mathematical}.
\end{definition}
\end{boxed}
\noindent The term ``Hellinger transform'' was perhaps first used in \cite{lecam1970}, followed by several works in the area of probability and statistics. See \cite{toussaint1974some, torgersen1981measures, JACOD19893, torgersen1991comparison, Grigelionis1993, fazekas1996some, liese2010statistical, strasser2011mathematical} and references therein.
\begin{remark}
    We emphasize that the Hellinger transform given in Definition~\ref{def:hellinger-transform}, and hence our further analysis of the classical antidistinguishability error probability, is independent of the choice of the dominating measure $\mu$. This easily follows by similar arguments as in Remark~\ref{rem:ind-of-dom-measure}. See \cite[Chapter~3, Section~9, Lemma~3]{shiryaev2016probability} for a proof in the case of $r=2$.
\end{remark}

If $P_1,\ldots, P_r$ are mutually absolutely continuous, i.e., for all $A \subseteq \Omega$, $P_1(A)=0$ if and only if $P_i(A)=0$ for all $i \in [r]$, then the Hellinger transform is continuous on $\mathbb{S}_r$. Indeed, we have $P_i \leq \eta_i^{-1} \mu$, implying that $p_i \leq \eta_i^{-1}$ for all $i \in [r]$. This gives $\prod_{i \in [r]} (p_i+1)$ as an integrable upper bound on $\prod_{i \in [r]} p_i^{s_i}$ for all $(s_1,\ldots, s_r) \in \mathbb{S}_r$. 
If the probability measures are mutually absolutely continuous, then 
\begin{align}
    \mu(\{\omega \in \Omega: p_i(\omega)>0 \ \forall i \in [r]\})=1.
\end{align}
 Without loss of generality, we can then assume that $p_i(\omega)>0$ for all $\omega \in \Omega$ and $i \in [r]$. Thus, for every $\mathbf{s}\coloneqq (s_1,\ldots, s_r) \in \mathbb{S}_r$ and for all $n \in \mathbb{N}$ and every sequence $\mathbf{s}^{(n)}\coloneqq (s_1^{(n)},\ldots, s_r^{(n)})$  in the interior of $\mathbb{S}_r$ such that $\lim_{n \to \infty} \mathbf{s}^{(n)}=\mathbf{s}$, we have
\begin{align}
    \lim_{n \to \infty} \prod_{i \in [r]} \left(p_i(\omega)\right)^{s_i^{(n)}} = \prod_{i \in [r]}\left(p_i(\omega)\right)^{s_i}, \qquad \text{for } \omega \in \Omega.
\end{align}
By the Lebesgue dominated convergence theorem, we then have $\lim_{n \to \infty} \mathbb{H}_{\mathbf{s}^{(n)}}(P_1,\ldots, P_r) = \mathbb{H}_{\mathbf{s}}(P_1,\ldots, P_r)$, thereby proving continuity of the Hellinger transform on $\mathbb{S}_r$.

In general, the Hellinger transform is a measure of ``closeness'' among several probability distributions.
 It is easy to see that
\begin{align}
    0 \leq \mathbb{H}_\mathbf{s}(P_1,\ldots,P_r) \leq 1,
\end{align}
which follows from H\"older's inequality \cite[Lemma~53.3]{strasser2011mathematical}.
As the value of $\mathbb{H}_\mathbf{s}(P_1,\ldots,P_r)$ gets close to $0$, the distance among the measures increases in some sense \cite{fazekas1996some}. 

The following quantity plays an important role in our paper.
\begin{boxed}{white}
    \begin{definition}
        We define the multivariate Chernoff divergence of the probability measures $P_1,\ldots, P_r$ by
        \begin{align}
            \xi_{\operatorname{cl}}(P_1,\ldots,P_r)\coloneqq-\ln \inf_{\mathbf{s}\in \mathbb{S}_r} \mathbb{H}_\mathbf{s}(P_1,\ldots,P_r),    \label{eq:classical-CH-divergence}
        \end{align}
        where $\mathbb{H}_\mathbf{s}$ is defined in \eqref{eq:hellinger-trans-def}.
    \end{definition}
\end{boxed}
 The divergence can be viewed as a generalization of the classical Chernoff divergence, the latter being a special case of the former for $r=2$ \cite{chernoff1952}. One of the main results of our paper is that the optimal error exponent for antidistinguishing an ensemble of probability measures is equal to their multivariate Chernoff divergence (Theorem~\ref{thm:optimal-classical-general}).

\subsection{Quantum states, channels, and measurements}

A quantum system is associated with a complex Hilbert space $\mathcal{H}$. We focus exclusively on systems with finite-dimensional Hilbert spaces in this paper. Let $\dim(\mathcal{H})$ denote the dimension of~$\mathcal{H}$.
We denote every element of $\mathcal{H}$ using the \textit{ket} notation as $|\psi \rangle, |\phi \rangle$, etc., and every element of its dual using the \textit{bra} notation as $\langle\psi|, \langle \phi|$, etc. The notations go well with the natural action of a dual element $\langle\psi|$ on a vector $|\phi \rangle $ in terms of the inner product of the two vectors:  $\langle\psi| (|\phi \rangle)=\langle\psi|\phi \rangle$.

A quantum state of a system is identified by a density operator $\rho$, which is a self-adjoint, positive semi-definite operator of unit trace acting on $\mathcal{H}$. A pure state is given by a state vector $|\psi \rangle \in \mathcal{H}$ whose corresponding density operator is $|\psi\rangle\!\langle\psi|$.
The set of density operators forms a convex set with pure states as the extreme points. Let $\mathcal{D}(\mathcal{H})$ denote the set of density operators and $\mathcal{L}(\mathcal{H})$ the space of linear operators acting on $\mathcal{H}$. We shall use the notation $\mathcal{D}$ for the set of density operators whenever the underlying Hilbert space is clear from the context. A quantum channel~$\mathcal{N}$, between two quantum systems represented by Hilbert spaces $\mathcal{H}$ and $\mathcal{K}$,  is a completely positive, trace-preserving linear map from $\mathcal{L}(\mathcal{H})$ to $\mathcal{L}(\mathcal{K})$. In particular, for all $\rho \in \mathcal{D}(\mathcal{H})$, we have that $\mathcal{N}(\rho) \in \mathcal{D}(\mathcal{K})$.

A quantum measurement is described by a positive operator-valued measure (POVM) $\mathscr{M}=\{M_1, \ldots, M_r\}$, which is a finite set of positive semi-definite operators whose sum is the identity operator, i.e.,
\begin{align}
    M_i \geq 0 \text{ for all } i \in [r], \qquad \qquad \sum_{i\in [r]} M_i = \mathbb{I},
\end{align}
where $\mathbb{I}$ is the identity operator acting on $\mathcal{H}$.

The projection onto the support of an operator $A$ is denoted by $\operatorname{supp}(A)$, its absolute value is denoted by $|A|\coloneqq \sqrt{A^\dag A}$, and its positive part by $A_+ \coloneqq \frac{1}{2} (A + |A|)$. For two Hermitian operators~$A$ and $B$, we use the notation
 \begin{equation}
     A \land B \coloneqq \frac{1}{2}(A+B - |A-B|),
     \label{eq:operator-min}
 \end{equation}
 in analogy with $\min (a,b)=\frac{1}{2}(a+b - |a-b|) \equiv a \land b$ for $a,b \in \mathbb{R}$.



\subsection{Antidistinguishability of quantum states}

Suppose that a quantum system is prepared in one of the quantum states $\rho_1,\ldots, \rho_r$ with \textit{a priori} probability distribution $\eta_1,\ldots, \eta_r$ such that $\eta_i>0$ for all $i \in [r]$. Throughout the paper, we call $\{(\eta_i, \rho_i): i \in [r]\}$ an ensemble of quantum states over a Hilbert space $\mathcal{H}$ and denote it by $\mathcal{E}$. Antidistinguishability of the states performed by a POVM $\mathscr{M}=\{M_1,\ldots, M_r\}$ can be described as follows: ``the measurement outcome $i$ occurring corresponds to a guess that the true state of the system is not $\rho_i$.'' Thus, if $\rho_i$ is the true state of the system, then $\Tr[M_i \rho_i]$ is the error probability for making an incorrect guess. The average error probability of antidistinguishability, for a fixed POVM $\mathscr{M}$, is then given by
\begin{align}
    \operatorname{Err}(\mathscr{M}; \mathcal{E}) \coloneqq     \sum_{i\in [r]} \eta_i \Tr[M_i \rho_i].
\end{align}
We are interested in determining the optimal antidistinguishability error probability, which is optimized over all possible measurements:
\begin{align}\label{eq:optimal-error-ad}
            \operatorname{Err}(\mathcal{E}) \coloneqq \inf_{\mathscr{M}} \operatorname{Err}(\mathscr{M}; \mathcal{E}),
\end{align}
where the infimum is taken over all POVMs of the form $\mathscr{M}=\{M_1,\ldots, M_r\}$ acting on $\mathcal{H}$.

The quantum states are said to be perfectly antidistinguishable if there exists a quantum measurement whose outputs always correspond to a false state of the system; i.e., there exists a POVM~$\mathscr{M}$ such that $\operatorname{Err}(\mathscr{M}; \mathcal{E})=0$.  In general, an ensemble of quantum states may not be antidistinguishable, which means, for such an ensemble $\mathcal{E}$, that $\operatorname{Err}(\mathscr{M}; \mathcal{E})>0$ for every POVM~$\mathscr{M}$. For instance, two non-orthogonal quantum states are not perfectly antidistinguishable~\cite{leifer2014quantum}.

In the asymptotic treatment of the antidistinguishability problem for a given ensemble $\mathcal{E}=\{(\eta_i, \rho_i): i \in [r]\}$, we consider the {\it $n$-fold ensemble} $\mathcal{E}^n \coloneqq \{(\eta_i, \rho_i^{\otimes n}):i \in [r]\}$. The optimal error probability of antidistinguishability for $\mathcal{E}^n$ is by definition given as
\begin{align}
    \operatorname{Err}(\mathcal{E}^n) = \inf_{ \mathscr{M}^{(n)}} \sum_{i\in[r]} \Tr\!\left[\eta_i M_i^{(n)} \rho_i^{\otimes n} \right],\label{eq:optimal-error-n-fold}
\end{align}
where the infimum is taken over the set of POVMs $\mathscr{M}^{(n)}=\{M_1^{(n)},\ldots, M^{(n)}_r\}$ acting on the $n$-fold tensor product Hilbert space $\mathcal{H}^{\otimes n}$. 
This gives for all $n\in \mathbb{N}$,
\begin{align}
    \eta_{\operatorname{min}}  \inf_{ \mathscr{M}^{(n)}} \sum_{i\in[r]} \Tr\!\left[ M_i^{(n)} \rho_i^{\otimes n} \right] \leq \operatorname{Err}(\mathcal{E}^n) \leq \eta_{\operatorname{max}} \inf_{ \mathscr{M}^{(n)}} \sum_{i\in[r]} \Tr\!\left[ M_i^{(n)} \rho_i^{\otimes n} \right],
\end{align}
which implies that
\begin{align}\label{eq:optimal-error-exponent-independent-prior}
    \liminf_{n\to \infty} -\dfrac{1}{n} \ln \operatorname{Err}(\mathcal{E}^n) =  \liminf_{n\to \infty} -\dfrac{1}{n} \ln \inf_{ \mathscr{M}^{(n)}} \sum_{i\in[r]} \Tr\!\left[ M_i^{(n)} \rho_i^{\otimes n} \right],
\end{align}
the latter being independent of $\eta_1,\ldots, \eta_r$.
\begin{boxed}{white}
    \begin{definition}
        The optimal error exponent for antidistinguishing the states of a quantum ensemble $\mathcal{E}=\{(\eta_i, \rho_i): i \in [r]\}$ is defined by
        \begin{align}\label{eq:qu-asy-error-rate-def}
            \operatorname{E}(\rho_1,\ldots, \rho_r) \coloneqq \liminf_{n \to \infty} -\dfrac{1}{n}\ln \operatorname{Err}(\mathcal{E}^{n}).
        \end{align}
    \end{definition}
\end{boxed}

\subsection{Quantum Chernoff divergence}

Here we briefly recall known results for distinguishability of two or more states;  a key quantity for this purpose is as follows:

\begin{boxed}{white}
    \begin{definition}
        The quantum Chernoff divergence between two states $\rho_1$ and $\rho_2$ is defined as
    \begin{align}\label{eq:quantum_chernoff_divergence}
        \xi(\rho_1,\rho_2) \coloneqq -\ln \inf_{s \in [0,1]} \Tr [\rho_1^s \rho_2^{1-s}].
    \end{align}
    \end{definition}
\end{boxed}
If $\rho_1=|\psi\rangle\!\langle \psi|$ and $\rho_2=|\phi\rangle\!\langle \phi|$ are pure states, then
\begin{align}\label{eqn:pure_qcd}
    \xi(\rho_1,\rho_2)&= -\ln  |\langle \psi|\phi\rangle|^2.
\end{align}
The quantum Chernoff divergence between two states is known to be the {\it optimal error exponent} in distinguishing them \cite{audenaert2007discriminating, ns_chernoff}; i.e.,
\begin{equation}
    \lim_{n\to\infty} - \frac{1}{n} \ln \left(\operatorname{Tr}[ \eta_1 \rho_1^{\otimes n} \land \eta_2 \rho_2^{\otimes n}]\right) = \lim_{n\to\infty} - \frac{1}{n} \ln \left(\frac{1}{2}\left[1-\left\Vert \eta_1 \rho_1^{\otimes n} - \eta_2 \rho_2^{\otimes n}\right\Vert_1\right]\right) = \xi(\rho_1,\rho_2),
    \label{eq:q-chernoff-thm}
\end{equation}
where we have used the well known fact that the optimal error probability in distinguishing $\rho_1^{\otimes n} $ from $\rho_2^{\otimes n}$ is equal to 
\begin{equation}
    \operatorname{Tr}[ \eta_1 \rho_1^{\otimes n} \land \eta_2 \rho_2^{\otimes n}] = \frac{1}{2}\left[1-\left\Vert \eta_1 \rho_1^{\otimes n} - \eta_2 \rho_2^{\otimes n}\right\Vert_1\right],
\end{equation}
with  $\rho_1^{\otimes n} $  prepared with probability $\eta_1$ and  $\rho_2^{\otimes n} $   with probability $\eta_2$ \cite{helstrom1969quantum,Hol72}.

It is known more generally that the optimal error exponent in distinguishing the ensemble $\{(\eta_i, \rho_i^{\otimes n}):i\in [r]\}$ is equal to the minimum pairwise Chernoff divergence \cite{keli}.


\section{Multivariate Chernoff divergence as the optimal error exponent for classical antidistinguishability}
\label{sec:classical-optimal-error-exponent}

In this section, we prove that the optimal error exponent for antidistinguishing an ensemble of probability measures \eqref{eq:asy-error-rate-def} is equal to the multivariate Chernoff divergence of the probability measures.
We first prove this result for mutually absolutely continuous probability measures because the proof in this case is more readable. The proof in the general case is similar, albeit more technical, and so we present it separately in Appendix~\ref{app:general-classical-AD-error-exponent}. Some aspects of 
the proof presented here follow the development in the appendix of \cite{ns_chernoff}.
\begin{boxed}{white}
\begin{theorem}\mbox{}\label{thm:optimal-classical}
        Consider an ensemble $\mathcal{E}_{\operatorname{cl}}=\{(\eta_i, P_i): i \in [r]\}$ of mutually absolutely continuous probability measures on a measurable space $(\Omega, \mathcal{A})$. The optimal error exponent for antidistinguishing the probability measures is given by their multivariate Chernoff divergence, i.e.,
        \begin{align}
            \operatorname{E}_{\operatorname{cl}}(P_1,\ldots, P_r)=\lim_{n \to \infty} -\dfrac{1}{n}\ln \operatorname{Err}_{\operatorname{cl}}(\mathcal{E}_{\operatorname{cl}}^{n}) &= \xi_{\operatorname{cl}}(P_1,\ldots,P_r),\label{eq:classical-asymptotic-error-rate}
        \end{align}
        where, recalling \eqref{eq:unit_simplex}, \eqref{eq:hellinger-trans-def}, and \eqref{eq:classical-CH-divergence}, the multivariate classical Chernoff divergence $\xi_{\operatorname{cl}}$ is defined as
        \begin{equation}
            \xi_{\operatorname{cl}}(P_1,\ldots,P_r) \coloneqq -\ln \inf_{\mathbf{s}\in \mathbb{S}_r} \int\!\!\d\mu\  p_1^{s_1}\cdots p_r^{s_r} .
        \end{equation}
    \end{theorem}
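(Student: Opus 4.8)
The plan is to establish the two inequalities $\operatorname{E}_{\operatorname{cl}}(P_1,\ldots,P_r)\ge \xi_{\operatorname{cl}}(P_1,\ldots,P_r)$ and $\limsup_{n\to\infty}-\tfrac1n\ln\operatorname{Err}_{\operatorname{cl}}(\mathcal{E}_{\operatorname{cl}}^{n})\le \xi_{\operatorname{cl}}(P_1,\ldots,P_r)$ separately. Together with the reduction in \eqref{eq:optimal-error-exponent-ind-of-eta}, which already removed the $\eta_i$-dependence and reduced matters to $\int\!\d\mu^{\otimes n}(p_1^{\otimes n}\land\cdots\land p_r^{\otimes n})$, these force the $\liminf$ in \eqref{eq:asy-error-rate-def} to be a genuine limit equal to $\xi_{\operatorname{cl}}$. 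Throughout I abbreviate $\operatorname{Err}_n\coloneqq\operatorname{Err}_{\operatorname{cl}}(\mathcal{E}_{\operatorname{cl}}^{n})$ and $\beta\coloneqq \inf_{\mathbf{s}\in\mathbb{S}_r}\mathbb{H}_{\mathbf{s}}(P_1,\ldots,P_r)$, so that $\xi_{\operatorname{cl}}=-\ln\beta$, and I use the standing reduction $p_i(\omega)>0$ valid under mutual absolute continuity.

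The lower bound $\operatorname{E}_{\operatorname{cl}}\ge\xi_{\operatorname{cl}}$ is the easy direction and rests on the elementary pointwise inequality $\min_i a_i \le \prod_{i\in[r]} a_i^{s_i}$, valid for nonnegative $a_i$ and any $\mathbf{s}\in\mathbb{S}_r$ (since each $a_i\ge\min_j a_j$ and $\sum_i s_i=1$). Applying it with $a_i = p_i^{\otimes n}(\omega_1,\ldots,\omega_n)$ and integrating against $\mu^{\otimes n}$, the integral factorizes over the $n$ coordinates and yields $\operatorname{Err}_n \le \big(\int\!\d\mu\,\prod_i p_i^{s_i}\big)^n = \mathbb{H}_{\mathbf{s}}(P_1,\ldots,P_r)^n$; optimizing over $\mathbf{s}$ gives $\operatorname{Err}_n\le \beta^n$, and hence $-\tfrac1n\ln\operatorname{Err}_n\ge -\ln\beta = \xi_{\operatorname{cl}}$ for every $n$.

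The matching upper bound on the exponent is the substantive part, and I would prove it by a change of measure to the Chernoff-optimal tilted distribution. Since $\mathbb{H}_{\mathbf{s}}$ is continuous on the compact simplex $\mathbb{S}_r$ (as shown just above Definition of $\xi_{\operatorname{cl}}$), the infimum is attained at some $\mathbf{s}^\ast$; define the probability density $q^\ast\coloneqq \beta^{-1}\prod_j p_j^{s_j^\ast}$, which integrates to $1$ by the very definition of $\beta$. Writing $\tau_i\coloneqq \ln p_i - \sum_j s_j^\ast \ln p_j$ and $T_i\coloneqq\sum_{k=1}^n \tau_i(\omega_k)$, a direct computation gives the exact identity $\operatorname{Err}_n = \beta^n\,\mathbb{E}_{q^{\ast\otimes n}}\!\big[e^{\min_i T_i}\big]$. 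The structural input is the first-order optimality condition: because $\mathbf{s}\mapsto\ln\mathbb{H}_{\mathbf{s}}$ is convex (by Hölder) we are minimizing a convex function over the simplex, and the KKT conditions give $\partial_{s_i}\mathbb{H}_{\mathbf{s}^\ast}=\lambda$ on the support of $\mathbf{s}^\ast$ and $\ge\lambda$ off it. Since $\partial_{s_i}\mathbb{H}_{\mathbf{s}^\ast}=\beta\,\mathbb{E}_{q^\ast}[\ln p_i]$ and $\sum_j s_j^\ast\partial_{s_j}\mathbb{H}_{\mathbf{s}^\ast}=\lambda$, subtracting yields $\mathbb{E}_{q^\ast}[\tau_i]=0$ for $i$ in the support of $\mathbf{s}^\ast$ and $\mathbb{E}_{q^\ast}[\tau_i]\ge 0$ otherwise. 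By the weak law of large numbers under $q^{\ast\otimes n}$ we then get $T_i/n\to\mathbb{E}_{q^\ast}[\tau_i]$ in probability, so $\min_i T_i/n\to 0$, and for each $\varepsilon>0$ the event $B_{n,\varepsilon}\coloneqq\{\min_i T_i\ge -\varepsilon n\}$ has $q^{\ast\otimes n}$-probability tending to $1$. Restricting the expectation to $B_{n,\varepsilon}$ gives $\operatorname{Err}_n\ge \beta^n e^{-\varepsilon n}\,q^{\ast\otimes n}(B_{n,\varepsilon})\ge \tfrac12\beta^n e^{-\varepsilon n}$ for large $n$, whence $\limsup_n -\tfrac1n\ln\operatorname{Err}_n\le \xi_{\operatorname{cl}}+\varepsilon$; letting $\varepsilon\downarrow 0$ closes the argument.

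The main obstacle is the regularity required to differentiate $\mathbb{H}_{\mathbf{s}}$ at $\mathbf{s}^\ast$ and to apply the law of large numbers, i.e.\ finiteness of $\mathbb{E}_{q^\ast}[\tau_i]$. In the mutually absolutely continuous case the densities are bounded, $p_i\le\eta_i^{-1}$, and $x^s|\ln x|^k\to 0$ as $x\to 0^+$ for $s>0$ controls the coordinates in the support of $\mathbf{s}^\ast$, giving finite (indeed finite-variance) moments there. The boundary coordinates with $s_i^\ast=0$ and the differentiability of $\mathbb{H}$ at $\mathbf{s}^\ast$ need a little more care, but since only a lower-tail estimate on $T_i$ enters, a one-sided truncation argument (using $\mathbb{E}_{q^\ast}[\tau_i]\ge 0$, with the value $+\infty$ only helping) suffices. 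These are precisely the points that become delicate without the boundedness of densities, which is why the general, non-mutually-absolutely-continuous case is deferred to Appendix~\ref{app:general-classical-AD-error-exponent}.
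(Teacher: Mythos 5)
Your proposal is correct and follows essentially the same route as the paper's proof: the achievability direction via $\min_i a_i \le \prod_i a_i^{s_i}$ and tensorization, and the optimality direction via tilting to the exponential-family density at the minimizer of the Hellinger transform, extracting $\mathbb{E}_{q^\ast}[\tau_i]\ge 0$ from the first-order optimality conditions, and invoking the law of large numbers. The only cosmetic difference is that you impose KKT conditions directly on the symmetric simplex, whereas the paper parameterizes by $\mathbb{T}_r\subset\mathbb{R}^{r-1}$ and handles boundary minimizers via one-sided derivatives (Lemma~\ref{lem-right-deriv}, Appendix~\ref{app:explicit-form-gamma-i-non-corner-points}) and a reparameterization (its Case~B); the regularity caveats you flag at coordinates with $s_i^\ast=0$ are exactly the points the paper resolves there.
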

\end{boxed}
\begin{proof}
Let $\mu$ be the dominating measure given by \eqref{eq:dominating-measure} and $p_1,\ldots, p_r$ the induced densities defined in \eqref{eq:induced-densities}. 
The assumption that the probability measures are mutually absolutely continuous implies that
\begin{align}
    \mu(\{\omega \in \Omega: p_i(\omega)>0 \ \forall i \in [r]\})=1.
\end{align} 
So, without loss of generality, we assume that $p_i(\omega)>0$ for all $\omega \in \Omega$ and $i \in [r]$.

We have from \eqref{eq:n-fold-ensemble-optimal-error-probability} that 
\begin{align}
    \operatorname{Err}_{\operatorname{cl}}(\mathcal{E}_{\operatorname{cl}}^n) 
        &= \int\!\!\d\mu^{\otimes n}\ \left(\eta_1 p_1^{\otimes n} \land \cdots \land \eta_r  p_{r}^{\otimes n}\right) \leq \int\!\!\d\mu^{\otimes n}\ \left( p_1^{\otimes n} \land \cdots \land   p_{r}^{\otimes n}\right). \label{eq:upper-bound-optimal-error-n-case}
\end{align} 
Let $\mathbf{s}\in \mathbb{S}_r$ be arbitrary. We can write the right-hand side of \eqref{eq:upper-bound-optimal-error-n-case} as
\begin{align}
    \int\!\!\d\mu^{\otimes n}\ \left( p_1^{\otimes n} \land \cdots \land   p_{r}^{\otimes n}\right)
        &= \int\!\!\d\mu^{\otimes n}\ (p_1^{\otimes n} \land \cdots \land p_{r}^{\otimes n})^{s_1} \cdots (p_1^{\otimes n} \land \cdots \land p_{r}^{\otimes n})^{s_r} \\
        &\leq \int\!\!\d\mu^{\otimes n}\ (p_1^{\otimes n})^{s_1} \cdots (p_{r}^{\otimes n})^{s_r}. \label{neweqn5}
\end{align}
The expression on the right-hand side of \eqref{neweqn5} has a product structure. Indeed, for every generic $\omega^n\coloneqq(\omega_1,\ldots, \omega_n)\in \Omega^n$, we have that
\begin{align}
     \int\!\!\d\mu^{\otimes n}(\omega^n)\ (p_1^{\otimes n}(\omega^n))^{s_1} \cdots (p_r^{\otimes n}(\omega^n))^{s_r} 
        &= \int\!\! \prod_{k\in[n]} \d\mu(\omega_k)\ \left(\prod_{k\in[n]} p_1(\omega_k)\right)^{s_1} \cdots \left(\prod_{k\in[n]} p_r(\omega_k)\right)^{s_r} \\
        &= \int\!\! \prod_{k\in[n]}  \d\mu(\omega_k)\ \prod_{k\in[n]} p_1^{s_1}(\omega_k) \cdots p_r^{s_r}(\omega_k) \\
        &= \int\!\! \prod_{k\in[n]} \left( \d\mu(\omega_k)\ p_1^{s_1}(\omega_k) \cdots p_r^{s_r}(\omega_k)\right) \\
        &= \prod_{k\in[n]} \int\!\!\d\mu(\omega_k)\ (p_1^{s_1} \cdots p_r^{s_r})(\omega_k) \\
        &=\left(\int\!\!\d\mu\ p_1^{s_1} \cdots p_r^{s_r} \right)^n \\
        &= \mathbb{H}_{\mathbf{s}}(P_1,\ldots,P_r)^n.\label{neweqn6}
\end{align}
From \eqref{eq:upper-bound-optimal-error-n-case}, \eqref{neweqn5}, and \eqref{neweqn6}, we thus get
\begin{align}
    \operatorname{Err}_{\operatorname{cl}}(\mathcal{E}_{\operatorname{cl}}^n)
        &\leq \mathbb{H}_{\mathbf{s}}(P_1,\ldots,P_r)^n, \qquad \text{for all } \mathbf{s} \in \mathbb{S}_r.\label{eq:upper-bound-optimal-error-n-case-last}
\end{align}
This implies, for all $n\in \mathbb{N}$, that
\begin{align}
    -\dfrac{1}{n} \ln \operatorname{Err}_{\operatorname{cl}}(\mathcal{E}_{\operatorname{cl}}^{n}) \geq -\ln \inf_{\mathbf{s}\in\mathbb{S}_r}  \mathbb{H}_{\mathbf{s}}(P_1,\ldots,P_r)=\xi_{\operatorname{cl}}(P_1,\ldots,P_r).
\end{align}
Therefore, we get
\begin{align}
    \liminf_{n\to \infty} -\dfrac{1}{n} \ln \operatorname{Err}_{\operatorname{cl}}(\mathcal{E}_{\operatorname{cl}}^n) \geq \xi_{\operatorname{cl}}(P_1,\ldots,P_r).
\end{align}
This proves the achievability part of the optimal error exponent. 

To prove the optimality part, we apply multivariable calculus and the law of large numbers. For this purpose,
let us parameterize the unit simplex of $\mathbb{R}^r$ by the corner of the standard unit cube of $\mathbb{R}^{r-1}$, defined as
\begin{align}
    \mathbb{T}_r &\coloneqq \bigg \{ \mathbf{t} \in [0,1]^{r-1}: \mathbf{t}\coloneqq (t_1,\ldots, t_{r-1}), \sum_{i\in [r-1]} t_i \leq 1 \bigg \}.
\end{align}
The unit simplex \eqref{eq:unit_simplex} can be expressed as
\begin{align}
    \mathbb{S}_r &= \bigg\{(t_1,\ldots, t_{r-1}, 1-\sum_{i\in[r-1]}t_i): (t_1,\ldots, t_{r-1})\in \mathbb{T}_r \bigg\}.\label{eq:parametrization-unit-simplex}
\end{align}
Using the new parameterization, let us denote the elements of $\mathbb{S}_r$ by $\mathbf{s}_{\mathbf{t}} \coloneqq (t_1,\ldots, t_{r-1}, 1-\sum_{i\in[r-1]}t_i)$ for $\mathbf{t}\coloneqq (t_1,\ldots, t_{r-1})\in \mathbb{T}_r$. The Hellinger transform of $P_1,\ldots, P_r$ can then be expressed as the following function on $\mathbb{T}_r$:
\begin{align}
    \operatorname{H}(\mathbf{t}) \coloneqq \mathbb{H}_{\mathbf{s}_{\mathbf{t}}}(P_1,\ldots,P_r), \qquad \text{for } \mathbf{t}\in \mathbb{T}_r.\label{eq:definition-A(t)}
\end{align}
Thus, the multivariate Chernoff divergence of $P_1,\ldots, P_r$ has the form 
\begin{align}
    \xi_{\operatorname{cl}}(P_1,\ldots,P_r)= \sup_{\mathbf{t}\in \mathbb{T}_r} - \ln \operatorname{H}(\mathbf{t}).\label{eq:hellinger-divergence-a(t)}
\end{align}

In what follows, using the reparametrized Hellinger transform \eqref{eq:definition-A(t)}, we define an {\it exponential family} of densities $p_{\mathbf{t}}$, as given in \eqref{eq:density-pt-expression} and with $\mathbf{t} \in \mathbb{T}_r$, which enables us to express each $p_i^{\otimes n}$ in terms of $p_{\mathbf{t}}^{\otimes n}$ for all $n \in \mathbb{N}$. This then allows for the use of the law of large numbers to deduce a family of upper bounds on the asymptotic error exponent, given by $-\min_{1\leq i \leq r} \gamma_i (\mathbf{t})$ for the {\it non-corner} points in $\mathbb{T}_r$, as defined later on in \eqref{eq:def_gamma_i}. Lastly, we use multivariable calculus to prove that there exists a non-corner point $\mathbf{t}^*$ such that $\ln \operatorname{H}(\mathbf{t}^*)=\min_{1\leq i \leq r} \gamma_i (\mathbf{t}^*)$. This implies that the multivariate Chernoff divergence is the optimal bound for the asymptotic error rate.

For every $\mathbf{t} \in \mathbb{T}_r$, let us express $\operatorname{H}(\mathbf{t})$ in an \textit{exponential-integral} form as follows:
\begin{align}
    \operatorname{H}(\mathbf{t})
        &=  \int\!\!\d\mu\ p_1^{t_1}\cdots p_{r-1}^{t_{r-1}} p_r^{1-\sum_{i\in[r-1]}t_i} \\
        &=  \int\!\!\d\mu\ (p_1/p_r)^{t_1}\cdots (p_{r-1}/p_{r})^{t_{r-1}} p_r \\
        &=  \int\!\!\d\mu\ p_r \exp\! \left(\sum_{i\in[r-1]} t_i \ln\left( p_i/p_r\right)\right) \\
        &=  \int\!\!\d\mu\ p_r \exp\! \left(\sum_{i\in[r-1]} t_i q_i \right), \label{eq:exponential-integral-form-hellinger}
\end{align}
where $q_i \coloneqq \ln (p_i/p_r)$. 
Consider the following {\it exponential family} of densities with respect to $\mu$ given by 
\begin{equation}
    p_{\mathbf{t}} 
    \coloneqq \dfrac{1}{\operatorname{H}(\mathbf{t})} p_r \exp\! \left(\sum_{j\in[r-1]} t_j q_j \right) , \qquad \mathbf{t}\in \mathbb{T}_r.\label{eq:density-pt-expression}
\end{equation}
Also, define a function $\operatorname{K}:\mathbb{T}_r \to \mathbb{R}$ by 
\begin{align}
    \operatorname{K}(\mathbf{t}) \coloneqq \ln \operatorname{H}(\mathbf{t}).\label{eq:definition-K(t)}
\end{align}
Let $\mathbf{e}_1,\ldots, \mathbf{e}_{r-1}$ denote the standard unit vectors in $\mathbb{R}^{r-1}$, and let $\mathbb{T}_r^{\circ}$ denote the interior of $\mathbb{T}_r$ which is given by
\begin{align}\label{eq:def-T_r-interior}
    \mathbb{T}_r^{\circ} \coloneqq \bigg \{ (t_1,\ldots, t_{r-1}) \in (0,1)^{r-1}: \sum_{i\in [r-1]} t_i < 1 \bigg \}.
\end{align}
By Theorem~2.64 of \cite{schervish2012theory} we know that $\operatorname{H}$ is a smooth function on $\mathbb{T}_r^\circ$; also its partial derivatives are given for $\mathbf{t} \in \mathbb{T}_r^{\circ}$ and $i \in [r-1]$ by
\begin{align}
    \partial_i\!\operatorname{H}(\mathbf{t}) 
    &\coloneqq \lim_{h \to 0} \dfrac{\operatorname{H}(\mathbf{t}+h\mathbf{e}_i)-\operatorname{H}(\mathbf{t})}{h} \\
    &= \int\!\!\d\mu\ q_i p_r \exp\! \left(\sum_{j\in[r-1]} t_j q_j \right) \\
    &= \operatorname{H}(\mathbf{t}) \mathbb{E}_\mathbf{t}\left[q_i\right],\label{eq:partial-hellinger-modified-function}
\end{align}
where $\mathbb{E}_\mathbf{t}$ is the expectation under the density $p_{\mathbf{t}}.$
We know that the Hellinger transform is a continuous function taking only positive values on $\mathbb{S}_r$. This implies that $\operatorname{K}$ is a real-valued continuous function on $\mathbb{T}_r$. Additionally, the smoothness of $\operatorname{H}$ on $\mathbb{T}_r^{\circ}$ implies the smoothness of $\operatorname{K}$  on $\mathbb{T}_r^{\circ}$. From \eqref{eq:partial-hellinger-modified-function} we have that 
\begin{align}\label{neweqn3}
    \partial_i\!\operatorname{K}(\mathbf{t})
    =\dfrac{1}{\operatorname{H}(\mathbf{t})} \partial_i\!\operatorname{H}(\mathbf{t}) =  \mathbb{E}_\mathbf{t}\left[q_i\right], \qquad i \in [r-1], \ \mathbf{t} \in \mathbb{T}_r^\circ.
\end{align}
By Theorem~18.7 of \cite{dasgupta2011probability}, the Hessian matrix of $\operatorname{K}$ at $\mathbf{t} \in \mathbb{T}_r^{\circ}$ is the covariance matrix of the random vector $X \coloneqq (q_1,\ldots, q_{r-1})$ with the probability density $p_{\mathbf{t}}$, whence the Hessian matrix of $\operatorname{K}$ is positive semi-definite at each point in $\mathbb{T}_r^{\circ}$. This implies that $\operatorname{K}$ is a convex function on $\mathbb{T}_r^{\circ}$. By continuity, $\operatorname{K}$ is convex on $\mathbb{T}_r$. 
Let $\mathbb{T}_r^1$ denote the set
\begin{align}\label{eq:corner-point-def}
    \mathbb{T}_{r}^{1}\coloneqq \left\{  (t_1,\ldots, t_{r-1}) \in\mathbb{T}_{r}: \sum_{i\in\left[
r-1\right]  }t_{i}<1\right\}.
\end{align}
We call $\mathbb{T}_r^1$ the set of {\it non-corner} points of $\mathbb{T}_r$. It is easy to see that $\mathbb{T}_r^{\circ} \subset \mathbb{T}_r^1$. 
For any $\mathbf{t} \in \mathbb{T}_r^1$ and $i \in [r-1]$, the limit
\begin{align}
    \partial_i^+\!\operatorname{K}(\mathbf{t}) \coloneqq \lim_{h \searrow 0} \dfrac{\operatorname{K}(\mathbf{t}+h\mathbf{e}_i)-\operatorname{K}(\mathbf{t})}{h}\label{eq:one-sided-dir-der-K}
\end{align}
exists in $\mathbb{R} \cup \{-\infty\}$ (Lemma~\ref{lem-right-deriv}).
Observe that for $\mathbf{t} \in \mathbf{T}_r^\circ$, we have $\partial_i\!\operatorname{K}(\mathbf{t})=\partial_i^+\!\operatorname{K}(\mathbf{t})$ for all $i \in [r-1]$.
It is shown in Appendix~\ref{app:explicit-form-gamma-i-non-corner-points} that for $\mathbf{t} \in \mathbb{T}_r^1$ and $i \in [r-1]$, the expectation value $\mathbb{E}_{\mathbf{t}}[q_i]$ exists in $\mathbb{R}\cup \{-\infty\}$ and satisfies 
\begin{align}
    \partial_i^+\!\operatorname{K}(\mathbf{t})=\mathbb{E}_{\mathbf{t}}[q_i].\label{eq:expectation-qi-one-sided-derivative}
\end{align}
Define a set
\begin{equation}
\mathbb{T}_{r,f}^{1}\coloneqq \left\{  \mathbf{t}\in\mathbb{T}_{r}^{1}:\partial
_{i}^{+}\!\operatorname{K}(\mathbf{t})\neq-\infty, \ \forall i\in\left[  r-1\right]  \right\}
.\label{Trf-definition}%
\end{equation}
Note that $\mathbb{T}_{r}^{\circ}\subset\mathbb{T}_{r,f}^{1}$.

Using the definition \eqref{eq:density-pt-expression} of the density $p_{\mathbf{t}}$ for $\mathbf{t}\coloneqq (t_1,\ldots, t_{r-1}) \in \mathbb{T}_{r,f}^{1}$ and $i \in [r]$, we have
\begin{align}
    \ln \dfrac{p_i}{p_{\mathbf{t}}} 
        &=  \ln p_i - \ln p_{\mathbf{t}} \\
        &=  \ln p_i - \sum_{j\in[r-1]} t_j q_j - \ln p_r + \ln \operatorname{H}(\mathbf{t}) \\
        &=  \ln \dfrac{p_i}{p_r} - \sum_{j\in[r-1]} t_j q_j + \operatorname{K}(\mathbf{t})\\
        &= q_i - \sum_{j\in[r-1]} t_j q_j + \operatorname{K}(\mathbf{t}) \label{neweqn1},
\end{align}
where $q_r$ is the zero function on $\Omega$. 
We write \eqref{neweqn1} in a more compact form as
\begin{align}
    \ln \dfrac{p_i}{p_{\mathbf{t}}} &= \sum_{j\in[r-1]} (\delta_{ij}-t_j) q_j + \operatorname{K}(\mathbf{t}), \qquad \text{for } i\in[r], \quad \mathbf{t}\in \mathbb{T}_{r,f}^{1}. \label{neweqn2}
\end{align}
Here $\delta_{ij}$ is the Kronecker delta (taking the value $1$ if $i=j$, and $0$ otherwise).
By taking the expectation on both sides of \eqref{neweqn2} under the density $p_{\mathbf{t}}$, and then using \eqref{eq:expectation-qi-one-sided-derivative}, we get
\begin{align}
    \gamma_i(\mathbf{t}) \coloneqq \mathbb{E}_{\mathbf{t}} \!\left[\ln \dfrac{p_i}{p_{\mathbf{t}}}\right] 
        &= \sum_{j\in[r-1]} (\delta_{ij}-t_j) \mathbb{E}_{\mathbf{t}}[q_j] + \operatorname{K}(\mathbf{t}) \label{eq:def_gamma_i-expectation} \\
        &= \sum_{j\in[r-1]} (\delta_{ij}-t_j) \partial_i^+\!\operatorname{K}(\mathbf{t}) + \operatorname{K}(\mathbf{t})\label{eq:def_gamma_i}
\end{align}
 for all $i\in [r]$ and $\mathbf{t}\in \mathbb{T}_{r,f}^{1}$.
We can write \eqref{eq:def_gamma_i} in a more compact form as
\begin{align}\label{eq:gamma-i-compact-form}
    \gamma_i (\mathbf{t}) =
    \begin{cases}
        \partial_i^+\!\operatorname{K}(\mathbf{t}) - \mathbf{t}^T \nabla^+\! \operatorname{K}(\mathbf{t}) + \operatorname{K}(\mathbf{t}), & i\in [r-1], \\
        - \mathbf{t}^T \nabla^+\!\operatorname{K}(\mathbf{t}) + \operatorname{K}(\mathbf{t}), & i=r,
    \end{cases} \qquad \text{for }  \mathbf{t} \in \mathbb{T}_{r,f}^{1},
\end{align}
where $\nabla^+\!\operatorname{K}(\mathbf{t}) \coloneqq (\partial_1^+\!\operatorname{K}(\mathbf{t}), \ldots, \partial_{r-1}^+\!\operatorname{K}(\mathbf{t}))^T$.

Let $\omega^{n}\coloneqq (\omega_1,\ldots,\omega_n) \in \Omega^{n}$ and $\mathbf{t} \in \mathbb{T}_{r,f}^{1}$ be arbitrary. We have that
\begin{align}
    p_i^{\otimes n} (\omega^n)&= \left(\prod_{j\in [n]} \dfrac{p_i}{p_{\mathbf{t}}} (\omega_j)\right) p_{\mathbf{t}}^{\otimes n} (\omega^n)= \exp\! \left(n G^{(i)}_{\mathbf{t},n}(\omega^n)\right)p_{\mathbf{t}}^{\otimes n} (\omega^n),\label{eq:tensored-density-expression}
\end{align}
where 
\begin{align}
G^{(i)}_{\mathbf{t},n}(\omega^n) \coloneqq \dfrac{1}{n}  \sum_{j\in[n]} \ln \dfrac{p_i}{p_{\mathbf{t}}} (\omega_j), \qquad \text{for } i\in[r].\label{eq:def-G-i}
\end{align}
Let $P_{\mathbf{t}}^{\otimes n}$ be the product measure corresponding to the density $p_{\mathbf{t}}^{\otimes n}$, and let $\mathbb{E}_{\mathbf{t}}^n$ be the pertaining expectation. By the definition in \eqref{eq:def_gamma_i-expectation}, we then have that
\begin{align}
    \mathbb{E}_{\mathbf{t}}^n \!\left[G^{(i)}_{\mathbf{t},n}\right] &=  \gamma_{i}(\mathbf{t}), \qquad \text{for }i \in [r].
\end{align}
Since $G^{(i)}_{\mathbf{t},n}$ is an i.i.d.~average, the law of large numbers \cite{billingsley1995probability} implies that for arbitrary $\delta >0,$ there exists $n_\delta \in \mathbb{N}$ such that the probability of the event
\begin{align}
    U_{n,\delta} &\coloneqq \{\omega^n \in \Omega^n: \forall i \in [r], \  G^{(i)}_{\mathbf{t},n}(\omega^n) \geq \gamma_{i}(\mathbf{t})-\delta\}
\end{align}
satisfies
\begin{align}
    P_{\mathbf{t}}^{\otimes n}(U_{n,\delta}) &\geq 1-\delta, \qquad \text{ for }n \geq n_\delta.\label{eq:end-tensored-density-expression}
\end{align}
The development in \eqref{eq:tensored-density-expression}--\eqref{eq:end-tensored-density-expression} implies that, for all $n \geq n_\delta$, 
\begin{align}
    \operatorname{Err}_{\operatorname{cl}}(\mathcal{E}_{\operatorname{cl}}^n) 
        &= \int\!\!\d\mu^{\otimes n}\ \left(\eta_1 p_1^{\otimes n}  \wedge \cdots \wedge \eta_r p_{r}^{\otimes n} \right)\\
        &\geq \eta_{\operatorname{min}} \int\!\!\d\mu^{\otimes n}\ \left( p_1^{\otimes n}  \wedge \cdots \wedge  p_{r}^{\otimes n} \right)\\
        &= \eta_{\operatorname{min}}  \int\!\!\d\mu^{\otimes n}\ \left(\exp\! \left(n G_{\mathbf{t},n}^{(1)}\right)  \wedge \cdots \wedge \exp\! \left(n G_{\mathbf{t},n}^{(r)}\right)\right) p_{\mathbf{t}}^{\otimes n} \\
        &= \eta_{\operatorname{min}} \, \mathbb{E}_{\mathbf{t}}^n \!\left[\exp\!\left(n G_{\mathbf{t},n}^{(1)}\right)  \wedge \cdots \wedge \exp\! \left(n G_{\mathbf{t},n}^{(r)}\right)\right] \\
        &\geq \eta_{\operatorname{min}}\,  \mathbb{E}_{\mathbf{t}}^n \!\left[\mathbf{1}_{U_{n,\delta}} \!\left(\exp\!\left(n G_{\mathbf{t},n}^{(1)}\right)  \wedge \cdots \wedge \exp\! \left(n G_{\mathbf{t},n}^{(r)}\right)\right) \right] \\
        &\geq \eta_{\operatorname{min}}\,  P_{\mathbf{t}}^{\otimes n}(U_{n,\delta}) \exp\!\left(n\min_{1\leq i \leq r}(\gamma_{i}(\mathbf{t})-\delta) \right)\\
        &\geq \eta_{\operatorname{min}} (1-\delta)   \exp\!\left( n \min_{1\leq i \leq r}\gamma_{i}(\mathbf{t})-n\delta \right).
\end{align}
Here $\mathbf{1}_{U_{n,\delta}}$ denotes the indicator function of the set $U_{n,\delta}$.
Therefore, we have that
\begin{align}
    -\dfrac{1}{n} \ln \operatorname{Err}_{\operatorname{cl}}(\mathcal{E}_{\operatorname{cl}}^n) \leq -\dfrac{\ln(\eta_{\operatorname{min}}(1-\delta))}{n}-\left( \min_{1\leq i \leq r}\gamma_{i}(\mathbf{t})-\delta \right),\qquad \text{for } n \geq n_{\delta}.\label{eq:optimal-error-exponent-n-shot-upper-bound}
\end{align}
By taking the limit superior as $n \to \infty$ on both sides of \eqref{eq:optimal-error-exponent-n-shot-upper-bound} and then the limit $\delta \to 0,$ we thus get
\begin{align}
    \limsup_{n \to \infty} - \dfrac{1}{n} \ln \operatorname{Err}_{\operatorname{cl}}(\mathcal{E}_{\operatorname{cl}}^n) \leq  -\min_{1\leq i \leq r}\gamma_{i}(\mathbf{t}),\qquad \text{for } \mathbf{t} \in \mathbb{T}_{r,f}^{1}. \label{eq:classical-error-exp-upp-bound-gamma-i}
\end{align}
Recall from \eqref{eq:hellinger-divergence-a(t)} and the fact $\operatorname{K}(\mathbf{t})=\ln \operatorname{H}(\mathbf{t})$, our goal is to show that 
\begin{align}
    \limsup_{n\to \infty}- \dfrac{1}{n} \ln \operatorname{Err}_{\operatorname{cl}}(\mathcal{E}_{\operatorname{cl}}^n) \leq  \sup_{\mathbf{t}\in \mathbb{T}_r} -\operatorname{K}(\mathbf{t}).
\end{align}
In view of \eqref{eq:classical-error-exp-upp-bound-gamma-i}, it suffices to show that for some $\mathbf{t}^* \in \mathbb{T}_{r,f}^{1}$, the following holds
\begin{align}\label{eqn4}
    \min_{1\leq i \leq r} \gamma_i (\mathbf{t}^*) \geq \operatorname{K}(\mathbf{t}^*).
\end{align}
We now argue that such a $\mathbf{t}^*$ exists. 
Since $\operatorname{K}$ is a continuous function on the compact set $\mathbb{T}_r$, there exists $\mathbf{t}^*\coloneqq(t^*_1, \ldots, t^*_{r-1}) \in \mathbb{T}_r$ that minimizes $\operatorname{K}$ over $\mathbb{T}_r$, i.e.,
\begin{align}
    \operatorname{K}(\mathbf{t}^*) = \min_{\mathbf{t} \in \mathbb{T}_r} \operatorname{K}(\mathbf{t}).
\end{align}
Consider the following two cases.

\bigskip
\noindent \textbf{Case A:} Suppose $\mathbf{t}^* \in \mathbb{T}_r^1.$ Choose arbitrary $i \in [r-1]$.
If $t^*_i=0$, then by convexity, continuity of $\operatorname{K}$, and the fact that $\mathbf{t}^*$ is a minimizer, we have $\partial_i^+\!\operatorname{K}(\mathbf{t}^*)  \geq 0.$ (see Lemma~\ref{lem-right-deriv}). Else we have $0< t^*_i < 1$ and the first order necessary condition for a minimizer implies $\partial_i^+\!\operatorname{K}(\mathbf{t}^*)=0.$ Combining these, we get $\partial_i^+\!\operatorname{K}(\mathbf{t}^*) \geq 0$  for all $i\in[r-1]$ and hence $\mathbf{t}^* \in \mathbb{T}^1_{r,f}$, and $\mathbf{t}^{*T} \nabla^+\!\operatorname{K}(\mathbf{t}^*)=0.$ From \eqref{eq:gamma-i-compact-form}, we thus get
\begin{align}
    \gamma_i (\mathbf{t}^*) =
    \begin{cases}
        \partial_i^+\!\operatorname{K}(\mathbf{t}^*)  + \operatorname{K}(\mathbf{t}^*), & i\in [r-1], \\
         \operatorname{K}(\mathbf{t}^*), & i=r.
    \end{cases}
\end{align}
This implies that the inequality \eqref{eqn4} holds for the minimizer $\mathbf{t}^*$.

\bigskip
\noindent \textbf{Case B:} Suppose $\mathbf{t}^* \in \mathbb{T}_r \backslash \mathbb{T}_r^1$, i.e., $t^*_1 + \cdots + t^*_{r-1} = 1.$ 
For some $i \in [r-1]$, we have $t^*_i > 0.$ 
According to the current parameterization of the unit simplex given in \eqref{eq:parametrization-unit-simplex},  $\mathbf{t}^*$ corresponds to the vector
$(t_1^*,\ldots, t_{r-1}^*,0)$ in $\mathbb{S}_r$.
We reparameterize the unit simplex $\mathbb{S}_r$ as
\begin{align}
    \mathbf{s}_{\mathbf{u}}=\left(u_1, \ldots, u_i, 1-\sum_{j\in[r-1]}u_j, u_{i+1},\ldots, u_{r-1} \right),\qquad \mathbf{u} \in \mathbb{T}_r.
\end{align}
In the reparameterized problem, the corresponding minimizer $\mathbf{u}^*$ of $\operatorname{K}$ satisfies $\mathbf{s}_{\mathbf{u}^*}=(t_1^*,\ldots, t_{r-1}^*,0)$, which implies
\begin{align}\label{eq:reparametrization-minimizer}
    1- \sum_{j\in[r-1]} u_j^* = t^*_i > 0.
\end{align}
This reduces the problem to \textbf{Case A}, which implies that \eqref{eqn4} holds.
\bigskip

Combining the above two cases, we conclude that \eqref{eqn4}  holds for the minimizer $\mathbf{t}^*$ and this completes the proof. 
\end{proof}

\begin{boxed}{white}
\begin{theorem}\mbox{}\label{thm:optimal-classical-general}
        Consider an ensemble $\mathcal{E}_{\operatorname{cl}}=\{(\eta_i, P_i): i \in [r]\}$ of probability measures, with possibly unequal supports, on a measurable space $(\Omega, \mathcal{A})$. The optimal error exponent for antidistinguishing the probability measures is given by their multivariate Chernoff divergence, i.e.,
        \begin{align}
            \operatorname{E}_{\operatorname{cl}}(P_1,\ldots, P_r)=\lim_{n \to \infty} -\dfrac{1}{n}\ln \operatorname{Err}_{\operatorname{cl}}(\mathcal{E}_{\operatorname{cl}}^n) &= \xi_{\operatorname{cl}}(P_1,\ldots,P_r), \label{eq:classical-asymptotic-error-rate-general}
        \end{align}
        where, recalling \eqref{eq:unit_simplex}, \eqref{eq:hellinger-trans-def}, and \eqref{eq:classical-CH-divergence}, the multivariate classical Chernoff divergence $\xi_{\operatorname{cl}}$ is defined as
        \begin{equation}
            \xi_{\operatorname{cl}}(P_1,\ldots,P_r) \coloneqq -\ln \inf_{\mathbf{s}\in \mathbb{S}_r} \int\!\!\d\mu\  p_1^{s_1}\cdots p_r^{s_r} .
        \end{equation}
    \end{theorem}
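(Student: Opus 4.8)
The plan is to reduce the general statement to the mutually absolutely continuous case already established in Theorem~\ref{thm:optimal-classical}, by passing to the common support of the densities. Let $\mu$ and $p_1,\ldots,p_r$ be as in \eqref{eq:dominating-measure}--\eqref{eq:induced-densities}, and set $\Omega_0\coloneqq\{\omega\in\Omega:p_i(\omega)>0\ \forall i\in[r]\}=\bigcap_{i\in[r]}\operatorname{supp}(p_i)$. The achievability bound $\operatorname{E}_{\operatorname{cl}}(P_1,\ldots,P_r)\geq\xi_{\operatorname{cl}}(P_1,\ldots,P_r)$ carries over verbatim, since the chain \eqref{eq:upper-bound-optimal-error-n-case}--\eqref{eq:upper-bound-optimal-error-n-case-last} uses only the pointwise inequality $\min_i a_i\leq\prod_i a_i^{s_i}$ for $a_i\geq 0$ and $\mathbf{s}\in\mathbb{S}_r$ (with the convention $0^0=1$), which is insensitive to the supports; thus $\operatorname{Err}_{\operatorname{cl}}(\mathcal{E}_{\operatorname{cl}}^n)\leq\mathbb{H}_{\mathbf{s}}(P_1,\ldots,P_r)^n$ for all $\mathbf{s}$. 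In particular, if $\inf_{\mathbf{s}\in\mathbb{S}_r}\mathbb{H}_{\mathbf{s}}=0$ then $\xi_{\operatorname{cl}}=+\infty$ and achievability alone forces the exponent to equal $+\infty$; I may therefore assume $\inf_{\mathbf{s}}\mathbb{H}_{\mathbf{s}}>0$, which forces $\mu(\Omega_0)>0$.

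The optimality bound rests on two observations. First, the optimal error probability is concentrated on $\Omega_0$: because $\eta_1 p_1^{\otimes n}\wedge\cdots\wedge\eta_r p_r^{\otimes n}$ vanishes off $\Omega_0^n$, the integral in \eqref{eq:n-fold-ensemble-optimal-error-probability} may be restricted to $\Omega_0^n$. Second, $\xi_{\operatorname{cl}}$ is computed by the Hellinger transform restricted to $\Omega_0$: since $\prod_i p_i^{s_i}$ is supported on $\Omega_0$ whenever $\mathbf{s}$ is interior to $\mathbb{S}_r$, the functional $\mathbf{s}\mapsto\int_{\Omega_0}\mathbf{d}\mu\,\prod_i p_i^{s_i}$ agrees with $\mathbb{H}_{\mathbf{s}}$ on the interior, is continuous on all of $\mathbb{S}_r$, and (as I would verify) has the same infimum as $\mathbb{H}_{\mathbf{s}}$, both being approached from the interior; moreover this infimum is then attained.

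On $\Omega_0$ every density is strictly positive, so the setting coincides with the mutually absolutely continuous case. With $\operatorname{H}(\mathbf{t})\coloneqq\int_{\Omega_0}\mathbf{d}\mu\,p_1^{t_1}\cdots p_{r-1}^{t_{r-1}}p_r^{1-\sum_j t_j}$, the exponential family $p_{\mathbf{t}}$ of \eqref{eq:density-pt-expression} is a genuine probability density on $\Omega_0$, the $q_i=\ln(p_i/p_r)$ are finite $\mu$-a.e.\ there, and the smoothness, convexity, and one-sided-derivative identities \eqref{neweqn3}--\eqref{eq:expectation-qi-one-sided-derivative} hold as before. Writing $p_i^{\otimes n}=\exp(n G^{(i)}_{\mathbf{t},n})\,p_{\mathbf{t}}^{\otimes n}$ on $\Omega_0^n$ and running \eqref{eq:tensored-density-expression}--\eqref{eq:classical-error-exp-upp-bound-gamma-i} verbatim yields $\limsup_n-\tfrac1n\ln\operatorname{Err}_{\operatorname{cl}}(\mathcal{E}_{\operatorname{cl}}^n)\leq-\min_{1\leq i\leq r}\gamma_i(\mathbf{t})$ for every $\mathbf{t}\in\mathbb{T}_{r,f}^{1}$. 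Taking $\mathbf{t}^*$ to minimize the now-continuous $\operatorname{K}=\ln\operatorname{H}$ over the compact set $\mathbb{T}_r$, the minimizer condition of Lemma~\ref{lem-right-deriv} gives $\partial_i^+\!\operatorname{K}(\mathbf{t}^*)\geq 0$ in each feasible direction, so $\mathbf{t}^*\in\mathbb{T}_{r,f}^{1}$ automatically, and the Case~A/Case~B dichotomy of Theorem~\ref{thm:optimal-classical} reproduces \eqref{eqn4}, i.e.\ $\min_i\gamma_i(\mathbf{t}^*)\geq\operatorname{K}(\mathbf{t}^*)=-\xi_{\operatorname{cl}}(P_1,\ldots,P_r)$. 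The two bounds together give \eqref{eq:classical-asymptotic-error-rate-general}.

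The hard part will be the second observation---the infimum-preservation identity relating $\mathbb{H}_{\mathbf{s}}$ to its $\Omega_0$-restricted counterpart. The subtlety is that, with unequal supports and the convention $0^0=1$, the Hellinger transform is genuinely discontinuous across the faces of $\mathbb{S}_r$: as $s_i\searrow 0$ the factor $p_i^{s_i}$ tends to $\mathbf{1}_{\{p_i>0\}}$ rather than to $1$, so the value on a boundary face may strictly exceed the limit of interior values. Showing that this upward jump never lowers the infimum---so that the minimization may be performed with the continuous, $\Omega_0$-restricted transform---together with the dominated-convergence bookkeeping needed to re-derive \eqref{eq:expectation-qi-one-sided-derivative} on $\Omega_0$ (where $q_i$ is finite but unbounded, so $\mathbb{E}_{\mathbf{t}}[q_i]$ may still equal $-\infty$), is precisely the extra technical content deferred to Appendix~\ref{app:general-classical-AD-error-exponent}.
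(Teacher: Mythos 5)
Your proposal is correct and follows essentially the same route as the paper's Appendix~\ref{app:general-classical-AD-error-exponent}: restrict to the common support $D=\Omega_0$, show that the infimum of the Hellinger transform is unchanged when computed with the continuous $\Omega_0$-restricted transform (the upward jump at the faces of $\mathbb{S}_r$ never lowers the infimum), and then rerun the exponential-family/law-of-large-numbers optimality argument on $\Omega_0$. The only cosmetic difference is that you work with the unnormalized restrictions $p_i|_{\Omega_0}$ directly, whereas the paper renormalizes to conditional densities $\tilde p_i=p_i/\alpha_i$ and tracks the resulting linear correction $\operatorname{L}(\mathbf{t})$; since the exponential family $p_{\mathbf{t}}$ and the quantities $\gamma_i(\mathbf{t})$ coincide under both bookkeepings, this changes nothing.
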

\end{boxed}

\begin{proof}
 See Appendix~\ref{app:general-classical-AD-error-exponent} for a detailed proof.
\end{proof}

\subsection{Multivariate Chernoff divergence versus pairwise Chernoff divergences}

Identifying the true probability measure out of the given $r$ probability measures is the same as eliminating all the remaining $r-1$ false probability measures.
As such, general intuition says that, upon observing i.i.d.~data, it is easier to eliminate a false probability measure than to identify the true probability measure. This also means that the optimal error exponent of classical antidistinguishability should be greater than that of multiple classical hypothesis testing, the former being the   multivariate classical Chernoff divergence and the latter being the minimum of the pairwise Chernoff divergences of the probability measures \cite{salikhov1973asymptotic} (see also \cite[Theorem~4.2]{torgersen1981measures} and   \cite{Leang1997,salikhov1999one,salikhov2003optimal}). Indeed, for any two indices $i, j \in [r]$ define a subset of $\mathbb{S}_r$:
 \begin{align}
     \mathbb{S}_r^{(i,j)} = \{\mathbf{s}\in \mathbb{S}_r: \mathbf{s}\coloneqq(s_1,\ldots, s_r), s_i+s_j=1\}.
 \end{align}
 By definition, we have
 \begin{align}
     \xi_{\operatorname{cl}}(P_1,\ldots,P_r) 
     &\geq -\ln \inf_{\mathbf{s}\in \mathbb{S}_r^{(i,j)}} \mathbb{H}_\mathbf{s}(P_1,\ldots,P_r) \\
     &= -\ln \inf_{s \in [0,1]} \int\!\!\d\mu\  p_i^{s} p_j^{(1-s)} \\
     &= \xi_{\operatorname{cl}}(P_i,P_j),
 \end{align}
 where $\xi_{\operatorname{cl}}(P_i,P_j)$ is the Chernoff divergence of the probability measures $P_i$ and $P_j$.
 This gives
 \begin{align}
     \xi_{\operatorname{cl}}(P_1,\ldots,P_r) \geq \max_{i<j} \xi_{\operatorname{cl}}(P_i,P_j)\geq \min_{i<j} \xi_{\operatorname{cl}}(P_i,P_j). \label{eq:classical-ch-div-lower-bounds}
 \end{align}
The following example illustrates an instance for which the first inequality in \eqref{eq:classical-ch-div-lower-bounds} is strict.
\begin{boxed}{white}
    \begin{example}\label{example:mult-chernoff-strict-ineq-max-pairwise-chernoff}
        Consider a uniform ensemble $\mathcal{E}_{\operatorname{cl}}=\{(1/3, P_1),(1/3, P_2),(1/3, P_3)\}$ of probability measures on a discrete space $\Omega=\{x,y,z\}$ whose densities with respect to the counting measure $\mu$ are given by
        \begin{align}
            p_1 = \dfrac{1}{2}\mathbf{1}_{\{x,y\}}, \qquad
            p_2 = \dfrac{1}{2}\mathbf{1}_{\{x,z\}},\qquad
            p_3 = \dfrac{1}{3}\mathbf{1}_{\Omega}.
        \end{align}
        We have for $\omega^n \in \Omega^n$,
        \begin{align}
           (p_1^{\otimes n} \land p_2^{\otimes n} \land p_3^{\otimes n})(\omega^n) =\begin{cases}
                \dfrac{1}{3^n}, \qquad \text{if } \omega^n=(\underbrace{x,\ldots,x}_{n \text{ times}}), \\
                0, \qquad \text{otherwise}.
            \end{cases}
        \end{align}
        By the minimum likelihood principle, we thus get 
        \begin{align}
            \operatorname{Err}_{\operatorname{cl}}(\mathcal{E}_{\operatorname{cl}}^{n})
            &=\dfrac{1}{3}  \int\!\!\d\mu^{\otimes n}\ \left(p_{1}^{\otimes n}\land   p_{2}^{\otimes n} \land p_3^{\otimes n}\right)\\
            &= \dfrac{1}{3} \cdot \mu^{\otimes n}(\{(\underbrace{x,\ldots,x}_{n \text{ times}})\}) \cdot \dfrac{1}{3^{n}}\\
            &=\dfrac{1}{3} \cdot \mu(\{x\})^n \cdot \dfrac{1}{3^{n}}\\
            &= \dfrac{1}{3} \cdot 1 \cdot \dfrac{1}{3^{n}}\\
            &= \dfrac{1}{3^{n+1}}.
        \end{align}
        This gives the optimal error exponent
        \begin{align}
            \operatorname{E}_{\operatorname{cl}}(P_1,P_2,P_3)=\liminf_{n\to \infty}-\dfrac{1}{n}\ln \operatorname{Err}_{\operatorname{cl}}(\mathcal{E}_{\operatorname{cl}}^{n})=\ln 3. \label{eq:example-optimal-error-exponent}
        \end{align}
        We now compute the pairwise Chernoff divergences of the probability measures as follows.
        \begin{align}
            \xi_{\operatorname{cl}}(P_1, P_2) 
            &= -\ln \inf_{s\in [0,1]} \int\!\!\d\mu\ p_{1}^s   p_{2}^{(1-s)} \\
            &= -\ln \inf_{s\in [0,1]} \int_{\{x\}}\!\!\d\mu\ \dfrac{1}{2^s}   \dfrac{1}{2^{(1-s)}} \\
            &= -\ln \inf_{s\in [0,1]} \dfrac{1}{2^s}   \dfrac{1}{2^{(1-s)}} \\
            &= -\ln \left(\dfrac{1}{2}\right)  \\
            &= \ln 2.
        \end{align}
        Also,
        \begin{align}
            \xi_{\operatorname{cl}}(P_1, P_3) 
            &= -\ln \inf_{s\in [0,1]} \int\!\!\d\mu\ p_{1}^s   p_{3}^{(1-s)} \\
            &= -\ln \inf_{s\in [0,1]} \int_{\{x,y\}}\!\!\d\mu\ \dfrac{1}{2^s}   \dfrac{1}{3^{(1-s)}} \\
            &= -\ln \left[\mu(\{x,y\}) \cdot \dfrac{1}{3} \cdot\inf_{s\in [0,1]} \left(\dfrac{3}{2}\right)^s\right] \\
            &=-\ln \left[2 \cdot \dfrac{1}{3} \cdot 1\right] \\
            &= \ln(3/2).
        \end{align}
        By similar arguments, we get $\xi_{\operatorname{cl}}(P_2, P_3)=\ln (3/2)$.
        This implies
        \begin{align}
            \max\{\xi_{\operatorname{cl}}(P_1, P_2), \xi_{\operatorname{cl}}(P_1, P_3), \xi_{\operatorname{cl}}(P_2, P_3)\} & = \ln 2.\label{eq:example-pairwise-max-chernoff} \\       
            \min\{\xi_{\operatorname{cl}}(P_1, P_2), \xi_{\operatorname{cl}}(P_1, P_3), \xi_{\operatorname{cl}}(P_2, P_3)\} & = \ln (3/2).\label{eq:example-pairwise-min-chernoff} 
        \end{align}
        From \eqref{eq:example-optimal-error-exponent}, \eqref{eq:example-pairwise-max-chernoff}, and \eqref{eq:example-pairwise-min-chernoff}, we have
        \begin{align}
            \operatorname{E}_{\operatorname{cl}}(P_1,P_2,P_3) & > \max\{\xi_{\operatorname{cl}}(P_1, P_2), \xi_{\operatorname{cl}}(P_1, P_3), \xi_{\operatorname{cl}}(P_2, P_3)\} \\
            & > \min\{\xi_{\operatorname{cl}}(P_1, P_2), \xi_{\operatorname{cl}}(P_1, P_3), \xi_{\operatorname{cl}}(P_2, P_3)\}  .
        \end{align}
    \end{example}
\end{boxed}

\section{Achievable error exponent for quantum antidistinguishability}

\label{sec:quantum-achievable-error-exponent}

\subsection{One-shot case}

Observe that the ``antidistinguishability problem'' between any two states $\rho_1$ and $\rho_2$ is the same as the state discrimination problem. Indeed, if we say that ``$\rho_1$ is not the true state,'' then we are saying ``$\rho_2$ is the true state.''
Using this observation, we obtain an upper bound on the optimal error probability of antidistinguishing the states of a given quantum ensemble by considering ``special'' POVMs that focus on pairs of states, as  expounded upon in the proof of the following theorem:   
\begin{boxed}{white}
    \begin{theorem}\mbox{}\label{thm:upper_bound_optimal_error}
         Consider a quantum ensemble $\mathcal{E}=\{(\eta_i, \rho_i): i \in [r]\}$.
        An upper bound on the optimal error probability of antidistinguishing the  states of the ensemble is given by
        \begin{align}
        \label{eq:err-prob-pairwise}
            \operatorname{Err}(\mathcal{E}) & \leq \min_{1 \leq i < j \leq r} \Tr[\eta_i \rho_i \land \eta_j \rho_j]\\
            & = \min_{1 \leq i < j \leq r}  \frac{1}{2}\left(\eta_i + \eta_j - \left \Vert 
            \eta_i \rho_i - \eta_j \rho_j
            \right \Vert_1\right)
            .   
        \label{eq:err-prob-pairwise-2}
        \end{align}
        In particular, if at least two states in $\rho_1,\ldots, \rho_r$ are mutually orthogonal then $\operatorname{Err}(\mathcal{E})=0$. 
    \end{theorem}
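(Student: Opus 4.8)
The plan is to reduce the multi-state antidistinguishability problem to a collection of two-state discrimination problems, one for each pair $(i,j)$, and to exploit the observation (made just before the theorem statement) that antidistinguishing two states is identical to discriminating them. For a fixed pair $(i,j)$ with $i<j$, I would construct a POVM $\mathscr{M}=\{M_1,\ldots,M_r\}$ that only ever outputs $i$ or $j$. Concretely, let $\{N_i,N_j\}$ be the optimal two-outcome Helstrom measurement for discriminating the weighted pair $\eta_i\rho_i$ versus $\eta_j\rho_j$, so that $N_i+N_j=\mathbb{I}$. I would then set $M_i\coloneqq N_i$, $M_j\coloneqq N_j$, and $M_k\coloneqq 0$ for all $k\notin\{i,j\}$. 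This is a valid POVM since all operators are positive semi-definite and $\sum_k M_k = N_i+N_j = \mathbb{I}$.

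The next step is to compute the antidistinguishability error of this POVM. Since $M_k=0$ for $k\neq i,j$, the error $\operatorname{Err}(\mathscr{M};\mathcal{E})=\sum_k \eta_k\Tr[M_k\rho_k]$ collapses to $\eta_i\Tr[N_i\rho_i]+\eta_j\Tr[N_j\rho_j]$. Here I must be careful about the orientation: in \emph{anti}distinguishing, outputting $i$ means guessing ``not $\rho_i$,'' so the error is incurred when we output $i$ while $\rho_i$ is true. Thus I want $N_i$ to be the operator that the Helstrom test assigns to the hypothesis ``$\rho_j$,'' i.e.\ $N_i$ should detect $\rho_j$, not $\rho_i$. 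With this matching, $\eta_i\Tr[N_i\rho_i]+\eta_j\Tr[N_j\rho_j]$ is precisely the minimum-error probability for \emph{discriminating} the weighted pair, which by the Helstrom bound equals $\Tr[\eta_i\rho_i\wedge\eta_j\rho_j]=\tfrac12(\eta_i+\eta_j-\|\eta_i\rho_i-\eta_j\rho_j\|_1)$, using the operator-minimum notation of \eqref{eq:operator-min} and the identity recorded in the excerpt. Taking the infimum over all POVMs, and then minimizing over the choice of pair $(i,j)$, yields the bound \eqref{eq:err-prob-pairwise}--\eqref{eq:err-prob-pairwise-2}.

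For the final ``in particular'' claim, suppose $\rho_i\perp\rho_j$ for some pair $i<j$, meaning $\Tr[\rho_i\rho_j]=0$ so their supports are orthogonal. Then the Helstrom test discriminates them perfectly: taking $N_i=\operatorname{supp}(\rho_j)$ and $N_j=\mathbb{I}-\operatorname{supp}(\rho_j)$ (the latter dominating $\operatorname{supp}(\rho_i)$) gives $\Tr[N_i\rho_i]=0$ and $\Tr[N_j\rho_j]=0$. Equivalently, orthogonality forces $\|\eta_i\rho_i-\eta_j\rho_j\|_1=\eta_i+\eta_j$, so the corresponding term in \eqref{eq:err-prob-pairwise-2} vanishes, and hence $\operatorname{Err}(\mathcal{E})=0$.

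The only genuine subtlety, and the step I would take most care over, is the orientation bookkeeping described above: getting the correspondence between measurement outcomes and eliminated states right, so that the antidistinguishability error of the constructed POVM really equals the minimum-error \emph{discrimination} probability rather than its complement $\eta_i+\eta_j$ minus that. Everything else is a direct invocation of the Helstrom bound and the operator-minimum identity already stated in the excerpt.
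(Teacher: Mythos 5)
Your proposal is correct and follows essentially the same route as the paper: restricting to POVMs supported on a single pair $(i,j)$, applying the Helstrom--Holevo measurement with the orientation you flag (the outcome ``not $\rho_i$'' must detect $\rho_j$), and identifying the resulting error with $\Tr[\eta_i\rho_i\wedge\eta_j\rho_j]$. The orientation bookkeeping you single out is exactly the point the paper also emphasizes (``note the order of $i$ and $j$''), and you resolve it correctly.
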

\end{boxed}

\begin{proof}
Given two fixed indices $i,j\in [r]$, let $\Xi_r^{(i,j)}$ denote the set of POVMs $\mathscr{M}=\{M_1,\ldots, M_r\}$ such that $M_k=0$ if $k \notin \{i,j\}$.  
For such POVMs, we have $M_i+M_j=\mathbb{I}$ and 
\begin{align}
    \operatorname{Err}(\mathscr{M}; \mathcal{E})
    &=\eta_i \Tr \!\left[M_i \rho_i\right] + \eta_j\Tr \!\left[(\mathbb{I}-M_i) \rho_j\right] \\
    &=\eta_i \Tr \!\left[M_i \rho_i\right] + \eta_j  - \eta_j\Tr \!\left[M_i\rho_j\right] \\
    &= \eta_j  - \Tr \!\left[M_i \left(\eta_j\rho_j-\eta_i \rho_i \right)\right].\label{eqn14}
\end{align}
By taking the infimum over $\Xi_r^{(i,j)}$ on both sides of \eqref{eqn14}, we get
\begin{align}
    \inf_{\mathscr{M} \in \Xi_r^{(i,j)}} \operatorname{Err}(\mathscr{M}; \mathcal{E}) = \eta_j  - \sup_{0 \leq M_i \leq \mathbb{I}} \Tr \!\left[M_i \left(\eta_j\rho_j-\eta_i \rho_i \right)\right],\label{eq:inf_special_povm}
\end{align}
 where the supremum on the right-hand side of \eqref{eq:inf_special_povm} is taken over every positive semi-definite operator $M_i$ such that $0 \leq M_i \leq \mathbb{I}$.
 The supremum is attained by the \textit{Helstrom--Holevo measurement} \cite{helstrom1969quantum, Hol72} given by $M_i=\operatorname{supp} (\eta_j \rho_j-\eta_i\rho_i)_+$ (note the order of $i$ and $j$). We thus get
 \begin{align}
    \inf_{\mathscr{M} \in \Xi_r^{(i,j)}} \operatorname{Err}(\mathscr{M}; \mathcal{E})
    &= \eta_j  - \Tr \!\left[ \left(\eta_j\rho_j-\eta_i \rho_i \right)_+\right] \\
    &=\Tr \!\left[\eta_j\rho_j \right]  - \Tr \!\left[ \left(\eta_j\rho_j-\eta_i \rho_i + |\eta_j\rho_j-\eta_i \rho_i|\right)/2\right] \\ 
    &=\Tr \!\left[\left( \eta_i \rho_i+\eta_j \rho_j - |\eta_i\rho_i-\eta_j \rho_j|\right)/2\right] \\
    & = \Tr \!\left[\eta_i \rho_i \land \eta_j \rho_j \right]
 \label{eq2:inf_special_povm}\\
 & = \frac{1}{2}\left(\eta_i + \eta_j - \left \Vert 
            \eta_i \rho_i - \eta_j \rho_j
            \right \Vert_1\right).
\end{align} 
 It is clear that the optimal antidistinguishability error probability satisfies
 \begin{align}
     \operatorname{Err}(\mathcal{E}) \leq \inf_{\mathscr{M} \in \Xi_r^{(i,j)}} \operatorname{Err}(\mathscr{M}; \mathcal{E})=\Tr \!\left[\eta_i \rho_i \land \eta_j \rho_j \right], \qquad \text{for all } 1 \leq i < j \leq r.\label{eq:optimal_error_special_povm_error}
 \end{align}
By combining \eqref{eq2:inf_special_povm}--\eqref{eq:optimal_error_special_povm_error}, we thus get the upper bound on the optimal antidistinguishability error probability stated in the  theorem. 
\end{proof}

The expression on the right-hand side of \eqref{eq:err-prob-pairwise} can be further simplified for pure states. This is a consequence of the following identity (see Proposition~\ref{prop:pure-state-identity-TD} in Appendix~\ref{app:pure-state-identity-TD}):
\begin{equation}
\left\Vert |\varphi\rangle\!\langle\varphi|-|\zeta\rangle\!\langle\zeta
|\right\Vert _{1}^{2}=\left(  \langle\varphi|\varphi\rangle+\langle\zeta
|\zeta\rangle\right)  ^{2}-4\left\vert \langle\zeta|\varphi\rangle\right\vert
^{2},
\label{eq:pure-state-identity-TD}
\end{equation}
which holds for vectors $|\varphi\rangle$ and $|\zeta\rangle$,
as well as
Theorem~1 of \cite{audenaert2007discriminating} which states that for all positive semi-definite operators $A,B$ and all $0 \leq s \leq 1$, we have
\begin{align}\label{op_min_inequality}
    \operatorname{Tr} [A \wedge B] \leq \operatorname{Tr} A^s B^{1-s}.
\end{align}
\begin{boxed}{white}
    \begin{corollary}\mbox{} \label{ad_pure_states}
        If the quantum states in Theorem~\ref{thm:upper_bound_optimal_error} are pure, i.e., given by $\rho_i=|\psi_i \rangle\!\langle \psi_i|$, then we have
        \begin{align}
             \operatorname{Err}(\mathcal{E}) & \leq \min_{1\leq i<j \leq r} \frac{\eta_i + \eta_j}{2} \left(1 - \sqrt{1- \frac{4 \eta_i \eta_j |\langle \psi_i|\psi_j \rangle|^2}{(\eta_i + \eta_j)^2}}\right) \label{eqn6}\\
             &\leq \dfrac{1}{2} \min_{1\leq i<j \leq r} |\langle \psi_i|\psi_j \rangle|^2.\label{eq:pure-state-inequality-ad-error}
        \end{align}
    \end{corollary}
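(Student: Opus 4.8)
The plan is to combine the pairwise trace-norm bound of Theorem~\ref{thm:upper_bound_optimal_error} with the pure-state identity~\eqref{eq:pure-state-identity-TD} to obtain~\eqref{eqn6}, and then to relax it to~\eqref{eq:pure-state-inequality-ad-error} using~\eqref{op_min_inequality}. First I would start from the form of the bound in~\eqref{eq:err-prob-pairwise-2}, namely $\operatorname{Err}(\mathcal{E})\le\min_{i<j}\tfrac12\!\left(\eta_i+\eta_j-\|\eta_i\rho_i-\eta_j\rho_j\|_1\right)$, and evaluate the trace norm on each pair. Substituting $|\varphi\rangle=\sqrt{\eta_i}\,|\psi_i\rangle$ and $|\zeta\rangle=\sqrt{\eta_j}\,|\psi_j\rangle$ into~\eqref{eq:pure-state-identity-TD}, and using the normalizations $\langle\psi_i|\psi_i\rangle=\langle\psi_j|\psi_j\rangle=1$, would give $\|\eta_i\rho_i-\eta_j\rho_j\|_1^2=(\eta_i+\eta_j)^2-4\eta_i\eta_j|\langle\psi_i|\psi_j\rangle|^2$. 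Taking the square root, factoring $(\eta_i+\eta_j)$ out of the radicand, and substituting back reproduces precisely the expression inside the minimum in~\eqref{eqn6}; thus~\eqref{eqn6} is in fact the Theorem~\ref{thm:upper_bound_optimal_error} bound rewritten, holding pairwise as an identity.

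For the second inequality, I would bound each pairwise term separately. Recalling from~\eqref{eq2:inf_special_povm} that this term equals $\Tr[\eta_i\rho_i\wedge\eta_j\rho_j]$, I would apply~\eqref{op_min_inequality} with $s=1/2$, $A=\eta_i\rho_i$, and $B=\eta_j\rho_j$. Since $\rho_i$ and $\rho_j$ are rank-one projections, $(\eta_i\rho_i)^{1/2}=\eta_i^{1/2}\rho_i$, so $\Tr[(\eta_i\rho_i)^{1/2}(\eta_j\rho_j)^{1/2}]=\sqrt{\eta_i\eta_j}\,\Tr[\rho_i\rho_j]=\sqrt{\eta_i\eta_j}\,|\langle\psi_i|\psi_j\rangle|^2$, which yields $\Tr[\eta_i\rho_i\wedge\eta_j\rho_j]\le\sqrt{\eta_i\eta_j}\,|\langle\psi_i|\psi_j\rangle|^2$. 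The AM--GM bound $\sqrt{\eta_i\eta_j}\le(\eta_i+\eta_j)/2\le 1/2$, valid because $\eta_i+\eta_j\le\sum_{k\in[r]}\eta_k=1$, then gives the pairwise estimate $\tfrac12|\langle\psi_i|\psi_j\rangle|^2$. Taking the minimum over $i<j$ on both sides finishes the argument, since a term-by-term inequality forces the inequality of the minima (if $(i^*,j^*)$ minimizes the right-hand side, then $\min_{i<j}\operatorname{LHS}\le\operatorname{LHS}_{i^*j^*}\le\operatorname{RHS}_{i^*j^*}=\min_{i<j}\operatorname{RHS}$).

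There is no serious obstacle here; the only points requiring care are the bookkeeping of the two minimizations and the probability constraint $\eta_i+\eta_j\le 1$ that makes $\sqrt{\eta_i\eta_j}\le\tfrac12$. As a consistency check, the second inequality can alternatively be derived purely algebraically from~\eqref{eqn6} by applying $1-\sqrt{1-x}\le x$ on $[0,1]$ with $x=4\eta_i\eta_j|\langle\psi_i|\psi_j\rangle|^2/(\eta_i+\eta_j)^2$, followed again by AM--GM; the route through~\eqref{op_min_inequality} is cleaner and matches the ingredients stated in the paper.
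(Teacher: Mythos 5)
Your proposal is correct and follows essentially the same route as the paper: the first inequality is obtained exactly as in the paper by combining \eqref{eq:err-prob-pairwise-2} with the identity \eqref{eq:pure-state-identity-TD}, and the second by applying \eqref{op_min_inequality} to the pairwise term $\Tr[\eta_i\rho_i\wedge\eta_j\rho_j]$. The only (immaterial) difference is that you fix $s=1/2$ and use $\sqrt{\eta_i\eta_j}\leq 1/2$ via AM--GM, whereas the paper optimizes over $s\in[0,1]$ to get the slightly tighter intermediate bound $(\eta_i\wedge\eta_j)\,|\langle\psi_i|\psi_j\rangle|^2$ before relaxing to $\tfrac12|\langle\psi_i|\psi_j\rangle|^2$.
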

\end{boxed}
\begin{proof}
Applying \eqref{eq:err-prob-pairwise-2} and \eqref{eq:pure-state-identity-TD}, we find that for all $1 \leq i < j \leq r$,
\begin{align}
    \operatorname{Err}(\mathcal{E}) &\leq \frac{1}{2}\left(\eta_i + \eta_j - \left \Vert 
            \eta_i \vert \psi_{i}\rangle \!\langle \psi_{i}\vert - \eta_j \vert \psi_{j}\rangle \!\langle \psi_{j}\vert
            \right \Vert_1\right)\\
            & = 
            \frac{1}{2}\left(\eta_i + \eta_j - \sqrt{(\eta_i + \eta_j)^2 - 4 \eta_i \eta_j |\langle \psi_i|\psi_j \rangle|^2}\right) \\
            & = \frac{\eta_i + \eta_j}{2} \left(1 - \sqrt{1- \frac{4 \eta_i \eta_j |\langle \psi_i|\psi_j \rangle|^2}{(\eta_i + \eta_j)^2}}\right).
\end{align}
This proves the inequality \eqref{eqn6}. By \eqref{op_min_inequality}, we get that for all $1 \leq i < j \leq r$ and $0 \leq s \leq 1$,
\begin{align}\label{pure_min_inequality}
    \mathrm{Tr}\!\left[  \eta_i\vert \psi_{i}\rangle \!\langle \psi_{i}\vert \wedge\eta_j\vert \psi_{j}\rangle \!\langle \psi_{j}\vert \right]
    \leq \eta_i^s \eta_j^{1-s} 
    \operatorname{Tr}\!\left[\vert \psi_{i}\rangle \!\langle \psi_{i} \vert \psi_{j}\rangle \!\langle \psi_{j}\vert \right]=\eta_i^s \eta_j^{1-s} \vert \langle \psi_{i}|\psi_{j}\rangle \vert ^{2}.
\end{align}
Since \eqref{pure_min_inequality} holds for all $s \in [0,1]$, we get
\begin{align}
    \mathrm{Tr}\!\left[  \eta_i\vert \psi_{i}\rangle \!\langle \psi_{i}\vert \wedge\eta_j\vert \psi_{j}\rangle \!\langle \psi_{j}\vert \right] & \leq (\eta_i \wedge \eta_j) \vert \langle \psi_{i}|\psi_{j}\rangle \vert ^{2}  \leq \dfrac{1}{2} \vert \langle \psi_{i}|\psi_{j}\rangle \vert ^{2}. \label{pure_min_inequality_final}
\end{align}
The desired inequality \eqref{eq:pure-state-inequality-ad-error} thus follows by using the inequality \eqref{pure_min_inequality_final} in \eqref{eq:err-prob-pairwise}.
\end{proof}

The sufficient condition for perfect antidistinguishability given in Theorem~\ref{thm:upper_bound_optimal_error} is not a necessary condition, even in the simple case of commuting states. This is illustrated in the following example.
\begin{boxed}{white}
    \begin{example}
        Consider states $\rho_1$, $\rho_2$, and $\rho_3$ diagonalizable in a common eigenbasis $\{|1 \rangle\!\langle 1|, |2 \rangle\!\langle 2 |, |3 \rangle\!\langle 3 |\}$, given by
        \begin{align}
            \rho_1 &=\dfrac{1}{2} \left(|1 \rangle\!\langle 1|+|2 \rangle\!\langle 2 | \right),\\
            \rho_2 &= \dfrac{1}{2} \left(|1 \rangle\!\langle 1|+|3 \rangle\!\langle 3 | \right),\\
            \rho_3 &= \dfrac{1}{2} \left(|2 \rangle\!\langle 2|+|3 \rangle\!\langle 3 | \right).
        \end{align}
        Consider a POVM $\mathscr{M}=\{M_1,M_2,M_3\}$ given by
        \begin{align}
            M_1 &=|3 \rangle\!\langle 3 |,\\
            M_2 &= |2 \rangle\!\langle 2|,\\
            M_3 &= |1 \rangle\!\langle 1 |.
        \end{align}
        The POVM $\mathscr{M}$ antidistinguishes the states perfectly because $\Tr[M_i \rho_i]=0$ for $i\in [3]$. 
        However, no pair of states are mutually orthogonal to each other.
    \end{example}
\end{boxed}


\subsection{Asymptotic case}

As a consequence of Theorem~\ref{thm:upper_bound_optimal_error}, we arrive at a lower bound on the optimal error exponent, as stated in the following theorem.
\begin{boxed}{white}
    \begin{theorem}\mbox{}\label{newthm:lower_bound_optimal_error}
    Consider a quantum ensemble $\mathcal{E}=\{(\eta_i, \rho_i): i \in [r]\}$.
    A lower bound on the optimal error exponent for antidistinguishing the states of the ensemble is given by the maximum of the pairwise Chernoff divergence of the states; i.e., we have
        \begin{align}\label{eqn18}
            \operatorname{E}(\rho_1,\ldots, \rho_r)  \geq \max_{1\leq i<j\leq r} \xi(\rho_i, \rho_j).
        \end{align}
    \end{theorem}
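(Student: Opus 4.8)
The plan is to apply the one-shot upper bound of Theorem~\ref{thm:upper_bound_optimal_error} to the tensor-power ensemble and then invoke the quantum Chernoff theorem \eqref{eq:q-chernoff-thm} on each pair. First I would observe that $\mathcal{E}^n=\{(\eta_i,\rho_i^{\otimes n}):i\in[r]\}$ is itself a legitimate quantum ensemble, so Theorem~\ref{thm:upper_bound_optimal_error} applies verbatim and gives, for every $n\in\mathbb{N}$,
\begin{align}
\operatorname{Err}(\mathcal{E}^n)\leq \min_{1\leq i<j\leq r}\Tr\!\left[\eta_i\rho_i^{\otimes n}\land\eta_j\rho_j^{\otimes n}\right].
\end{align}

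Next I would apply the decreasing map $x\mapsto-\tfrac{1}{n}\ln x$ to both sides. Since this map reverses order and converts a minimum into a maximum, the bound becomes
\begin{align}
-\frac{1}{n}\ln\operatorname{Err}(\mathcal{E}^n)\;\geq\;\max_{1\leq i<j\leq r}\left(-\frac{1}{n}\ln\Tr\!\left[\eta_i\rho_i^{\otimes n}\land\eta_j\rho_j^{\otimes n}\right]\right).
\end{align}
For any fixed pair $(i,j)$ the right-hand side dominates the corresponding single term, so I would take the $\liminf$ as $n\to\infty$ and use that each pairwise limit exists and equals $\xi(\rho_i,\rho_j)$ by \eqref{eq:q-chernoff-thm}, to obtain
\begin{align}
\operatorname{E}(\rho_1,\ldots,\rho_r)=\liminf_{n\to\infty}-\frac{1}{n}\ln\operatorname{Err}(\mathcal{E}^n)\;\geq\;\lim_{n\to\infty}\left(-\frac{1}{n}\ln\Tr\!\left[\eta_i\rho_i^{\otimes n}\land\eta_j\rho_j^{\otimes n}\right]\right)=\xi(\rho_i,\rho_j).
\end{align}
Since the pair $(i,j)$ was arbitrary, maximizing over all pairs then yields the claimed inequality \eqref{eqn18}.

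There is essentially no serious obstacle here, as the argument reduces the multivariate problem to the known pairwise Chernoff asymptotics. The only point that warrants a little care is the interchange of the $\liminf$ with the finite maximum; I would sidestep this by the elementary observation that $\max_{i<j}(\cdots)$ is bounded below by each individual term, which lets me push the $\liminf$ inside. This way I never need the limit of the maximized sequence itself to exist—only the pairwise limits are required, and those are guaranteed by the quantum Chernoff theorem \eqref{eq:q-chernoff-thm}.
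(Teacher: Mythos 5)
Your proposal is correct and follows exactly the paper's own argument: apply Theorem~\ref{thm:upper_bound_optimal_error} to the $n$-fold ensemble and combine the resulting pairwise bound with the quantum Chernoff theorem \eqref{eq:q-chernoff-thm}. The only difference is that you spell out the (routine) interchange of the $\liminf$ with the finite maximum, which the paper leaves implicit.
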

\end{boxed}
\begin{proof}
By Theorem~\ref{thm:upper_bound_optimal_error}, we have
\begin{align} \label{eqn17}
    \operatorname{Err}(\mathcal{E}^{n})   \leq \min_{1 \leq i < j \leq r} \Tr[\eta_i \rho_i^{\otimes n} \land \eta_j \rho_j^{\otimes n}].
\end{align}
By combining \eqref{eqn17} with \eqref{eq:q-chernoff-thm}, 
we get the desired inequality in \eqref{eqn18}.
\end{proof}
\medskip
\noindent Let us recall from Example~\ref{example:mult-chernoff-strict-ineq-max-pairwise-chernoff} that the inequality in \eqref{eqn18} can be strict in some cases.
\begin{boxed}{white}
    \begin{corollary}\label{cor1}
    If the quantum states in Theorem~\ref{newthm:lower_bound_optimal_error} are pure, given by $\rho_i=|\psi_i \rangle\!\langle \psi_i|$, then we have
        \begin{align}\label{eqn7}
             \operatorname{E}(|\psi_1 \rangle\!\langle \psi_1|,\ldots, |\psi_r\rangle\!\langle \psi_r|)  \geq \max_{1\leq i<j \leq r}  - \ln  |\langle \psi_i|\psi_j \rangle|^2.
        \end{align}
    \end{corollary}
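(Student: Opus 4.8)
The plan is to specialize the general lower bound of Theorem~\ref{newthm:lower_bound_optimal_error} to pure states by substituting the closed-form expression for the pairwise quantum Chernoff divergence between two pure states. Since Theorem~\ref{newthm:lower_bound_optimal_error} already establishes that $\operatorname{E}(\rho_1,\ldots,\rho_r) \geq \max_{1\leq i<j\leq r} \xi(\rho_i,\rho_j)$ for an arbitrary ensemble, no new asymptotic analysis is required and the corollary reduces to a direct evaluation.

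First I would recall the pure-state identity~\eqref{eqn:pure_qcd}, namely that for $\rho_i=|\psi_i\rangle\!\langle\psi_i|$ and $\rho_j=|\psi_j\rangle\!\langle\psi_j|$ one has $\xi(\rho_i,\rho_j) = -\ln|\langle\psi_i|\psi_j\rangle|^2$. The reason is that for rank-one projectors the infimum defining $\xi$ in~\eqref{eq:quantum_chernoff_divergence} is trivial to evaluate: for every $s\in(0,1)$ we have $\rho_i^s=|\psi_i\rangle\!\langle\psi_i|$ and $\rho_j^{1-s}=|\psi_j\rangle\!\langle\psi_j|$, so that $\Tr[\rho_i^s\rho_j^{1-s}] = |\langle\psi_i|\psi_j\rangle|^2$ is constant in $s$. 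Hence the infimum over $s\in[0,1]$ equals $|\langle\psi_i|\psi_j\rangle|^2$, giving the stated closed form.

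Next I would substitute this expression into the bound~\eqref{eqn18}, obtaining
\begin{align}
\operatorname{E}(|\psi_1\rangle\!\langle\psi_1|,\ldots,|\psi_r\rangle\!\langle\psi_r|) \geq \max_{1\leq i<j\leq r} \xi(\rho_i,\rho_j) = \max_{1\leq i<j\leq r} -\ln|\langle\psi_i|\psi_j\rangle|^2,
\end{align}
which is precisely the claimed inequality~\eqref{eqn7}. There is no genuine obstacle here: the corollary is essentially a one-line consequence of Theorem~\ref{newthm:lower_bound_optimal_error} together with~\eqref{eqn:pure_qcd}. The only point meriting a brief remark is the behavior of $\Tr[\rho_i^s\rho_j^{1-s}]$ at the endpoints $s\in\{0,1\}$, where one checks that the support/continuity conventions already in force yield the same value $|\langle\psi_i|\psi_j\rangle|^2$ and therefore do not lower the infimum.
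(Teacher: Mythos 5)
Your proposal is correct and matches the paper's proof, which likewise derives the corollary directly from the bound \eqref{eqn18} of Theorem~\ref{newthm:lower_bound_optimal_error} together with the pure-state representation \eqref{eqn:pure_qcd} of the quantum Chernoff divergence. Your additional verification that $\Tr[\rho_i^s\rho_j^{1-s}]$ is constant in $s$ for rank-one projectors is a harmless elaboration of \eqref{eqn:pure_qcd}, which the paper simply cites.
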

\end{boxed}
\begin{proof}
    It follows directly from \eqref{eqn18} and the representation \eqref{eqn:pure_qcd} of the quantum Chernoff divergence for pure states.
\end{proof}


\section{Bounds on the optimal error exponent for quantum antidistinguishability from multivariate quantum Chernoff divergences}\label{sec:bounds_optimal_error_exponent_xi_max_min}

In this section, we begin by introducing the general concept of multivariate quantum Chernoff divergences, and after that, we employ this concept in order to obtain bounds on the optimal error exponent for quantum antidistinguishability. The reasoning used here is inspired by similar reasoning used for distinguishability problems between two states \cite{Mat13,matsumoto2014maximization,mosonyi2015quantum,HT2016,hiaimosonyi2017,Matsumoto2018}.

\subsection{Multivariate quantum Chernoff divergences}

\begin{boxed}{white}
    \begin{definition}
    \label{def:quantum-CH-div-def}
        Let $r \geq 2$ be an integer. We call a function $\boldsymbol\xi:\mathcal{D}^r \to [0, \infty]$ a multivariate quantum Chernoff divergence if it satisfies the following properties:
\begin{enumerate}
    \item Data processing: for states $\rho_1,\ldots, \rho_r$ and a channel $\mathcal{N}$,
    \begin{align}
         \boldsymbol\xi(\rho_1,\ldots, \rho_r) \geq \boldsymbol\xi(\mathcal{N}(\rho_1),\ldots, \mathcal{N}(\rho_r)) ,
    \end{align}
    \item Reduction to the   multivariate classical Chernoff divergence for commuting states: if the states $\rho_1,\ldots, \rho_r$ commute, then
    \begin{align}
        \boldsymbol\xi(\rho_1,\ldots, \rho_r)= \xi_{\operatorname{cl}}(P_1,\ldots, P_r),
    \end{align}
    where $\xi_{\operatorname{cl}}$ is defined in \eqref{eq:classical-CH-divergence}, $P_1,\ldots, P_r$ are probability measures on $[\dim(\mathcal{H})]$,
    \begin{align}
        P_{\ell}(X) \coloneqq \sum_{i \in X} \lambda_{\ell, i}, \qquad \text{for }X \subseteq [\dim(\mathcal{H})],\label{eq:comm-states-prob-meaures}
    \end{align}
     given by a spectral decomposition of the states in a common eigenbasis
     \begin{align}
         \rho_{\ell} = \sum_{i \in [\dim(\mathcal{H})]} \lambda_{\ell, i}|i \rangle\!\langle i|, \qquad \text{for }  \ell \in [r].
     \end{align}
\end{enumerate}
    \end{definition}
\end{boxed}
As stated above, all  multivariate quantum Chernoff divergences agree on commuting states and are equal to the   multivariate classical Chernoff divergence of the corresponding probability measures induced by the states in a common eigenbasis. If $\rho_1,\ldots, \rho_r$ are commuting states, then we denote their divergence by $\xi_{\operatorname{cl}}(\rho_1,\ldots, \rho_r)$. In this case, it is easy to verify that
\begin{align}
    \xi_{\operatorname{cl}}(\rho_1,\ldots, \rho_r) = -\ln \inf_{\mathbf{s} \in \mathbb{S}_r} \sum_{i \in [\dim(\mathcal{H})]} \left(\prod_{\ell \in [r]} \lambda_{\ell, i}^{s_\ell}\right).
\end{align}

As a first starting point, let us explicitly note that the optimal error exponent in \eqref{eq:qu-asy-error-rate-def} is itself a  multivariate quantum Chernoff divergence.
\begin{boxed}{white}
    \begin{proposition}\label{prop:optimal-error-exponent-ch-divergence}
        The optimal error exponent $\operatorname{E}:\mathcal{D}^r \to [0,\infty]$ defined by \eqref{eq:qu-asy-error-rate-def} is a  multivariate quantum Chernoff divergence.
    \end{proposition}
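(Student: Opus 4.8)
The plan is to verify the two defining properties of Definition~\ref{def:quantum-CH-div-def} directly: the data-processing inequality and the reduction to the multivariate classical Chernoff divergence on commuting states.

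For data processing, I would argue at the level of the $n$-fold optimal error probability. Fix a channel $\mathcal{N}$ and let $\mathcal{N}^{\dagger}$ denote its adjoint (Heisenberg-picture) map, which is completely positive and unital precisely because $\mathcal{N}$ is trace preserving. Given any POVM $\{M_i^{(n)}\}$ used to antidistinguish $\mathcal{N}(\rho_1)^{\otimes n},\ldots,\mathcal{N}(\rho_r)^{\otimes n}$, the operators $\{(\mathcal{N}^{\otimes n})^{\dagger}(M_i^{(n)})\}$ form a valid POVM on the input system, and the adjoint identity $\Tr[(\mathcal{N}^{\otimes n})^{\dagger}(M_i^{(n)})\,\rho_i^{\otimes n}] = \Tr[M_i^{(n)}\,\mathcal{N}(\rho_i)^{\otimes n}]$ shows that every output strategy is matched by an input strategy with identical error. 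Taking the infimum over POVMs yields $\operatorname{Err}(\mathcal{E}^n) \leq \operatorname{Err}((\mathcal{N}\circ\mathcal{E})^n)$ for every $n$, where $\mathcal{N}\circ\mathcal{E} \coloneqq \{(\eta_i,\mathcal{N}(\rho_i))\}$. Applying $-\tfrac{1}{n}\ln(\cdot)$, which reverses the inequality, and then $\liminf_{n\to\infty}$, gives $\operatorname{E}(\rho_1,\ldots,\rho_r) \geq \operatorname{E}(\mathcal{N}(\rho_1),\ldots,\mathcal{N}(\rho_r))$.

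For the reduction on commuting states, let $\rho_\ell = \sum_i \lambda_{\ell,i}\,|i\rangle\!\langle i|$ be a simultaneous diagonalization, and let $\mathcal{P}$ be the pinching (completely dephasing) channel onto the associated diagonal algebra; in the $n$-fold problem I would use the pinching onto the tensor-product eigenbasis. The key facts are that $\mathcal{P}$ is unital, self-adjoint as a superoperator, and fixes each $\rho_\ell^{\otimes n}$. Consequently, for any POVM $\{M_i^{(n)}\}$ the pinched operators $\{\mathcal{P}(M_i^{(n)})\}$ again form a POVM with unchanged error, since $\Tr[M_i^{(n)}\rho_i^{\otimes n}] = \Tr[M_i^{(n)}\mathcal{P}(\rho_i^{\otimes n})] = \Tr[\mathcal{P}(M_i^{(n)})\rho_i^{\otimes n}]$. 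Thus the optimization may be restricted to POVMs diagonal in the eigenbasis, and such POVMs are in exact correspondence with randomized classical decision rules for the induced ensemble $\mathcal{E}_{\operatorname{cl}} = \{(\eta_i,P_i)\}$ whose densities with respect to the counting measure are $p_\ell = \lambda_{\ell,\cdot}$. This identifies $\operatorname{Err}(\mathcal{E}^n) = \operatorname{Err}_{\operatorname{cl}}(\mathcal{E}_{\operatorname{cl}}^n)$ for every $n$, and Theorem~\ref{thm:optimal-classical-general} then gives $\operatorname{E}(\rho_1,\ldots,\rho_r) = \xi_{\operatorname{cl}}(P_1,\ldots,P_r)$, as required.

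I expect the main obstacle to lie in the bookkeeping of the commuting case rather than in any deep difficulty: one must check carefully that pinching a POVM again yields a POVM, that the error functional is genuinely invariant under pinching, and that diagonal POVMs correspond bijectively to classical randomized decision rules, so that the $n$-fold quantum and classical optimal error probabilities coincide exactly (including the matching of $\rho_\ell^{\otimes n}$ with $P_\ell^{\otimes n}$). Once this identification is secured, the classical theorem supplies the value of the exponent with no further work, whereas the data-processing step is short and self-contained given the adjoint-channel construction.
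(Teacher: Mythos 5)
Your proposal is correct and follows essentially the same route as the paper's proof in Appendix~\ref{app:proof-optimal-error-exponent-ch-divergence}: the data-processing step is the identical adjoint-POVM argument, and your pinching step is just a repackaging of the paper's direct correspondence $M_i \mapsto \delta_i(\omega) = \langle \omega|M_i|\omega\rangle$ between POVMs and randomized classical decision rules (with the reverse map sending a decision rule to a diagonal POVM), which yields $\operatorname{Err}(\mathcal{E}^n)=\operatorname{Err}_{\operatorname{cl}}(\mathcal{E}_{\operatorname{cl}}^n)$ and hence the reduction via Theorem~\ref{thm:optimal-classical-general}. The bookkeeping you flag as the main obstacle is exactly what the paper carries out, and it goes through without difficulty.
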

\end{boxed}
\begin{proof}
    See Appendix~\ref{app:proof-optimal-error-exponent-ch-divergence}.
\end{proof}

Let us note that other  multivariate quantum Chernoff divergences can be constructed from the multivariate log-Euclidean divergence, as discussed in Remark~\ref{rem:log-euc-multivariate} below, as well as by means of the multivariate quantum R\'enyi divergences proposed in \cite{furuya2023monotonic,mosonyi2022geometric}. In what follows, we discuss some other constructions of  multivariate quantum Chernoff divergences.

We say that a multivariate quantum Chernoff divergence $\xi_{\operatorname{min}}$ is minimal if it is a lower bound to any other multivariate quantum Chernoff divergence; i.e.,  for any  multivariate quantum Chernoff divergence $\boldsymbol\xi$, we have
\begin{align}
    \xi_{\operatorname{min}}(\rho_1,\ldots,\rho_r) \leq \boldsymbol\xi(\rho_1,\ldots,\rho_r), \qquad \text{for } (\rho_1,\ldots,\rho_r) \in \mathcal{D}^r. \label{eq:minimal-ch-divergence}
\end{align}
A minimal  multivariate quantum Chernoff divergence is unique by definition, and it can be obtained as an optimization over \textit{quantum-to-classical} or \textit{measurement} channels as presented in Proposition~\ref{prop:minimal-ch-divergence} below.

 Let $\mathcal{K}$ be a complex Hilbert space of dimension $t$ with an orthonormal basis $\{|1 \rangle,\ldots, |t \rangle\}$. Associated with a POVM $\{M_1,\ldots, M_t\}$ acting on the Hilbert space $\mathcal{H}$ is a channel $\mathcal{M}$, called measurement channel, which has the following action on an input state $\rho \in \mathcal{D}(\mathcal{H})$:
\begin{align}
    \mathcal{M}(\rho)=\sum_{\omega \in [t]} \operatorname{Tr}[M_\omega \rho] |\omega \rangle\!\langle \omega|.\label{eq:measurement-channel}
\end{align}
 The action of the measurement channel on any given states $\rho_1,\ldots, \rho_r$ produces commuting states $\mathcal{M}(\rho_1),\ldots, \mathcal{M}(\rho_r)$. This induces probability measures $P^\mathcal{M}_1,\ldots, P^{\mathcal{M}}_r$ on the discrete space $\Omega=[t]$, defined by
\begin{align}
    P^{\mathcal{M}}_i (X) 
    &\coloneqq \sum_{x \in X} \operatorname{Tr}[M_x \rho_i], \qquad \text{for } X \subseteq \Omega.\label{eq:probability-measure-measurement-channel}
\end{align}
It can be easily verified that the optimal error probability of antidistinguishing the commuting states $\mathcal{M}(\rho_1),\ldots, \mathcal{M}(\rho_r)$ is equal to that of antidistinguishing the corresponding probability measures $P^\mathcal{M}_1,\ldots, P^{\mathcal{M}}_r$. See  \eqref{eq:commuting-states-representation}--\eqref{eq:optimal-error-equality-classical-quantum} in Appendix~\ref{app:proof-optimal-error-exponent-ch-divergence}. 
\begin{boxed}{white}
    \begin{proposition} \label{prop:minimal-ch-divergence}
        The minimal  multivariate quantum Chernoff divergence is given by
        \begin{align}
             \xi_{\operatorname{min}}(\rho_1,\ldots,\rho_r) =\sup_{\mathcal{M}} \xi_{\operatorname{cl}}(P^{\mathcal{M}}_1,\ldots, P^{\mathcal{M}}_r),
             \label{eq:min-CH-optim-expr}
        \end{align}
        where the supremum is taken over all measurement channels $\mathcal{M}$ with a $t$-dimensional classical output space for all $t \in \mathbb{N}$ and each probability measure $P^{\mathcal{M}}_i$ is defined in \eqref{eq:probability-measure-measurement-channel}.
    \end{proposition}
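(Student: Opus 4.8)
The plan is to write $\xi_*(\rho_1,\ldots,\rho_r) \coloneqq \sup_{\mathcal{M}} \xi_{\operatorname{cl}}(P^{\mathcal{M}}_1,\ldots,P^{\mathcal{M}}_r)$ for the right-hand side of \eqref{eq:min-CH-optim-expr} and to establish two facts: first, that $\xi_*$ is itself a multivariate quantum Chernoff divergence in the sense of Definition~\ref{def:quantum-CH-div-def}; and second, that $\xi_*$ is a lower bound on every multivariate quantum Chernoff divergence $\boldsymbol\xi$. Since the minimal divergence is unique by definition \eqref{eq:minimal-ch-divergence}, these two facts together identify $\xi_*$ with $\xi_{\operatorname{min}}$.

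For the first fact I would verify the two defining properties. Data processing follows from the observation that composing a channel with a measurement channel yields another measurement channel: given a channel $\mathcal{N}$ and a POVM $\{M_x\}$, the measures induced on $\mathcal{N}(\rho_i)$ satisfy $P^{\mathcal{M}}_i(\{x\}) = \Tr[M_x \mathcal{N}(\rho_i)] = \Tr[\mathcal{N}^{\dagger}(M_x)\rho_i]$, and $\{\mathcal{N}^{\dagger}(M_x)\}$ is again a POVM because $\mathcal{N}^{\dagger}$ is completely positive and unital. Hence every measure arising on the outputs $\mathcal{N}(\rho_i)$ also arises directly from the $\rho_i$, so the supremum defining $\xi_*(\mathcal{N}(\rho_1),\ldots,\mathcal{N}(\rho_r))$ is dominated by the one defining $\xi_*(\rho_1,\ldots,\rho_r)$. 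For the reduction property, suppose $\rho_1,\ldots,\rho_r$ commute with common spectral data $\lambda_{\ell,i}$. Choosing $\mathcal{M}$ to be the projective measurement in the common eigenbasis gives $P^{\mathcal{M}}_\ell = P_\ell$, so $\xi_* \geq \xi_{\operatorname{cl}}(P_1,\ldots,P_r)$. For the reverse inequality I would note that for an arbitrary POVM $\{M_x\}$ the numbers $W(x\mid j) \coloneqq \langle j| M_x |j\rangle$ form a stochastic matrix (each column of $W=(W(x\mid j))$ sums to one since $\sum_x M_x = \mathbb{I}$), and $P^{\mathcal{M}}_i$ is exactly the pushforward of $P_i$ under $W$; data processing of the classical multivariate Chernoff divergence then yields $\xi_{\operatorname{cl}}(P^{\mathcal{M}}_1,\ldots,P^{\mathcal{M}}_r) \leq \xi_{\operatorname{cl}}(P_1,\ldots,P_r)$, and taking the supremum over $\mathcal{M}$ gives $\xi_* \leq \xi_{\operatorname{cl}}(P_1,\ldots,P_r)$.

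For the second fact, let $\boldsymbol\xi$ be any multivariate quantum Chernoff divergence and $\mathcal{M}$ any measurement channel. The output states $\mathcal{M}(\rho_1),\ldots,\mathcal{M}(\rho_r)$ commute and are diagonal in $\{|\omega\rangle\}$ with eigenvalues $P^{\mathcal{M}}_i(\{\omega\})$, so the reduction property of $\boldsymbol\xi$ gives $\boldsymbol\xi(\mathcal{M}(\rho_1),\ldots,\mathcal{M}(\rho_r)) = \xi_{\operatorname{cl}}(P^{\mathcal{M}}_1,\ldots,P^{\mathcal{M}}_r)$. Combining this with data processing of $\boldsymbol\xi$ yields $\boldsymbol\xi(\rho_1,\ldots,\rho_r) \geq \xi_{\operatorname{cl}}(P^{\mathcal{M}}_1,\ldots,P^{\mathcal{M}}_r)$; taking the supremum over all $\mathcal{M}$ proves $\boldsymbol\xi(\rho_1,\ldots,\rho_r) \geq \xi_*(\rho_1,\ldots,\rho_r)$, as required.

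The step I expect to require the most care is the data processing inequality for the classical divergence $\xi_{\operatorname{cl}}$ used in the reverse half of the reduction property, since it is applied here to a genuine (possibly non-deterministic, dimension-changing) stochastic map $W$. I would justify it either from monotonicity of the Hellinger transform under stochastic maps---$\mathbb{H}_{\mathbf{s}}$ can only increase under data processing, so $-\ln \inf_{\mathbf{s}} \mathbb{H}_{\mathbf{s}}$ can only decrease---or, operationally, from Theorem~\ref{thm:optimal-classical-general}, since any antidistinguishability strategy on the pushforward data composes with $W$ into a strategy on the original data and therefore cannot achieve a larger error exponent. A minor additional check is that the supremum over all output dimensions $t$ causes no difficulty, as the arguments above manipulate individual measurement channels and never require passing to a limit.
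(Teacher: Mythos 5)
Your proposal is correct and follows essentially the same route as the paper: show that $\sup_{\mathcal{M}}\xi_{\operatorname{cl}}(P_1^{\mathcal{M}},\ldots,P_r^{\mathcal{M}})$ is a lower bound on every multivariate quantum Chernoff divergence via data processing plus the reduction property, and that it is itself such a divergence, using $\mathcal{N}^{\dagger}$ of the POVM for data processing and the common-eigenbasis measurement together with classical data processing for the reduction. The only (harmless) variation is that you make the stochastic matrix $W(x\mid j)=\langle j|M_x|j\rangle$ explicit and offer a direct Hellinger-transform monotonicity argument for the classical data-processing step, whereas the paper justifies that step by appealing to its operational interpretation via Theorem~\ref{thm:optimal-classical-general} and Proposition~\ref{prop:optimal-error-exponent-ch-divergence}.
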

\end{boxed}
\begin{proof}
    See Appendix~\ref{app:proof-prop:minimal-ch-divergence}.
\end{proof}

Similar to the definition of minimal multivariate quantum Chernoff divergence, we can define the maximal multivariate quantum Chernoff divergence.
We say that a multivariate quantum Chernoff divergence $\xi_{\operatorname{max}}$ is maximal if it is an upper bound to any other multivariate quantum Chernoff divergence; i.e.,  for any  multivariate quantum Chernoff divergence $\boldsymbol\xi$, we have
\begin{align}
    \xi_{\operatorname{max}}(\rho_1,\ldots,\rho_r) \geq \boldsymbol\xi(\rho_1,\ldots,\rho_r), \qquad \text{for } (\rho_1,\ldots,\rho_r) \in \mathcal{D}^r. \label{eq:maximal-ch-divergence}
\end{align}
A maximal  multivariate quantum Chernoff divergence is unique by definition, and it can be obtained as an optimization over \textit{classical-to-quantum} or \textit{preparation} channels as given in Proposition~\ref{prop:maximal-ch-divergence} below.

We can view any probability measure
 $P$ on the discrete space $\Omega=[t]$ as a quantum state in $\mathcal{K}$ with the fixed eigenbasis $\{|1 \rangle\!\langle 1|,\ldots, |t \rangle\!\langle t|\}$, i.e.,
 \begin{align}
     P \equiv \sum_{\omega \in \Omega} P(\{\omega\}) |\omega\rangle\!\langle \omega |.
     \label{eq:prob-meas-into-q-state}
 \end{align}
A quantum channel $\mathcal{P}:\mathcal{L}(\mathcal{K}) \to \mathcal{L}(\mathcal{H})$ is said to prepare a state $\rho \in \mathcal{D}(\mathcal{H})$ from a probability measure $P$ if it satisfies $\mathcal{P}(P)=\rho$ and is called a preparation channel or classical--to--quantum channel (see \cite[Section~4.6.5]{wilde2017quantum} for a review of classical--to--quantum channels).
\begin{boxed}{white}
    \begin{proposition}\mbox{}\label{prop:maximal-ch-divergence}
        The maximal  multivariate quantum Chernoff divergence is given by
        \begin{align}
             \xi_{\operatorname{max}}(\rho_1,\ldots,\rho_r) =\inf_{\substack{(\mathcal{P}, \{P_i\}_{i\in [r]})  }} \left\{\xi_{\operatorname{cl}}(P_1,\ldots,P_r) :\mathcal{P}(P_i)=\rho_i \quad \text{for all } i \in [r] \right\},
             \label{eq:max-CH-div}
        \end{align}
        where the infimum involves  preparation channels $\mathcal{P}$ with $t$-dimensional classical input system,  for all $t \in \mathbb{N}$,  as well as probability measures $\{P_1,\ldots, P_r\}$ of the form in \eqref{eq:prob-meas-into-q-state}. 
    \end{proposition}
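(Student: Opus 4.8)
The plan is to show that the right-hand side of \eqref{eq:max-CH-div}, which I will denote by $\widetilde\xi_{\operatorname{max}}$, is itself a multivariate quantum Chernoff divergence and that it dominates every other such divergence; by the uniqueness built into the defining inequality \eqref{eq:maximal-ch-divergence}, this forces $\widetilde\xi_{\operatorname{max}}=\xi_{\operatorname{max}}$. The argument is dual to the one for the minimal divergence in Proposition~\ref{prop:minimal-ch-divergence}, with preparation (classical-to-quantum) channels playing the role that measurement channels play there. First I would record that the feasible set is always nonempty: taking the classical input to have $t=r$ symbols, the point masses $P_i=|i\rangle\!\langle i|$ together with the channel $\mathcal{P}(\sigma)=\sum_{i\in[r]}\langle i|\sigma|i\rangle\,\rho_i$ satisfy $\mathcal{P}(P_i)=\rho_i$, so the infimum in \eqref{eq:max-CH-div} ranges over a nonempty set and $\widetilde\xi_{\operatorname{max}}$ is well defined with values in $[0,\infty]$.

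Next I would verify the two properties of Definition~\ref{def:quantum-CH-div-def}. For data processing, given a channel $\mathcal{N}$ and any feasible pair $(\mathcal{P},\{P_i\})$ for $(\rho_1,\ldots,\rho_r)$, the composition $(\mathcal{N}\circ\mathcal{P},\{P_i\})$ is feasible for $(\mathcal{N}(\rho_1),\ldots,\mathcal{N}(\rho_r))$, since $\mathcal{N}\circ\mathcal{P}$ is again a preparation channel with $(\mathcal{N}\circ\mathcal{P})(P_i)=\mathcal{N}(\rho_i)$; taking the infimum over feasible pairs yields $\widetilde\xi_{\operatorname{max}}(\mathcal{N}(\rho_1),\ldots,\mathcal{N}(\rho_r))\le\widetilde\xi_{\operatorname{max}}(\rho_1,\ldots,\rho_r)$. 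For the reduction property when $\rho_1,\ldots,\rho_r$ commute, the upper bound $\widetilde\xi_{\operatorname{max}}\le\xi_{\operatorname{cl}}$ follows from the explicit feasible choice in which $P_i$ is the eigenvalue distribution of $\rho_i$ in a common eigenbasis and $\mathcal{P}$ is the completely dephasing channel in that basis, regarded as a preparation channel; for this choice $\xi_{\operatorname{cl}}(P_1,\ldots,P_r)$ coincides with $\xi_{\operatorname{cl}}(\rho_1,\ldots,\rho_r)$ by definition.

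The maximality statement, that $\widetilde\xi_{\operatorname{max}}\ge\boldsymbol\xi$ for every multivariate quantum Chernoff divergence $\boldsymbol\xi$, is the crux and is short: for any feasible $(\mathcal{P},\{P_i\})$, data processing of $\boldsymbol\xi$ gives $\boldsymbol\xi(\rho_1,\ldots,\rho_r)=\boldsymbol\xi(\mathcal{P}(P_1),\ldots,\mathcal{P}(P_r))\le\boldsymbol\xi(P_1,\ldots,P_r)$, and since the $P_i$, viewed as diagonal states via \eqref{eq:prob-meas-into-q-state}, commute, the reduction property of $\boldsymbol\xi$ gives $\boldsymbol\xi(P_1,\ldots,P_r)=\xi_{\operatorname{cl}}(P_1,\ldots,P_r)$; taking the infimum over feasible pairs yields $\boldsymbol\xi(\rho_1,\ldots,\rho_r)\le\widetilde\xi_{\operatorname{max}}(\rho_1,\ldots,\rho_r)$. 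Applying this same inequality to the particular divergence $\boldsymbol\xi=\operatorname{E}$ from Proposition~\ref{prop:optimal-error-exponent-ch-divergence} also supplies the remaining lower bound $\widetilde\xi_{\operatorname{max}}\ge\xi_{\operatorname{cl}}$ needed to finish the reduction property on commuting states, because $\operatorname{E}$ agrees with $\xi_{\operatorname{cl}}$ there; this neatly routes around a direct appeal to classical data processing of $\xi_{\operatorname{cl}}$, although that alternative works equally well.

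The step I expect to demand the most care is the reduction property rather than maximality: one must confirm that composing the preparation channel with the dephasing map and restricting to diagonal inputs genuinely produces a classical channel carrying $P_i$ to the eigenvalue distribution of $\rho_i$, so I would write out that identification explicitly instead of leaving it implicit. Everything else is a mechanical transcription of the dual argument, and once both properties of Definition~\ref{def:quantum-CH-div-def} are established together with $\widetilde\xi_{\operatorname{max}}\ge\boldsymbol\xi$ for all $\boldsymbol\xi$, the equality $\widetilde\xi_{\operatorname{max}}=\xi_{\operatorname{max}}$ is immediate.
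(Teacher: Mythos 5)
Your proposal is correct and follows essentially the same route as the paper's Appendix proof: show the right-hand side satisfies data processing (via composition $\mathcal{N}\circ\mathcal{P}$) and reduces to $\xi_{\operatorname{cl}}$ on commuting states, show it dominates every multivariate quantum Chernoff divergence by combining data processing with the reduction property of $\boldsymbol\xi$ applied to the commuting diagonal states $P_i$, and conclude by uniqueness. The only (harmless) deviation is that you obtain the lower bound in the reduction step by specializing the maximality inequality to $\boldsymbol\xi=\operatorname{E}$, whereas the paper invokes the classical data-processing inequality for $\xi_{\operatorname{cl}}$ directly; both are valid.
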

\end{boxed}
\begin{proof}
    See Appendix~\ref{app:proof-prop:maximal-ch-divergence}.
\end{proof}

\subsection{Bounds on the optimal error exponent for quantum antidistinguishability}
\label{sec:quantum-antidist-bnds}

The optimal error exponent for quantum antidistinguishability can be bounded from above and  below by the minimal and the maximal  multivariate quantum Chernoff divergences, respectively, as stated in the following theorem. 
\begin{boxed}{white}
    \begin{theorem}\label{thm:min-max-bounds-optimal-error-exponent}
        Let $\mathcal{E}=\{(\eta_i, \rho_i): i \in [r]\}$ be a quantum ensemble. We have
        \begin{align}
    \xi_{\operatorname{min}}(\rho_1,\ldots, \rho_r) \leq \operatorname{E}(\rho_1,\ldots, \rho_r) \leq \xi_{\operatorname{max}}(\rho_1,\ldots, \rho_r),\label{eq:bounds-optimal-error-exponent}
\end{align}
    where $\xi_{\operatorname{min}}$ and $\xi_{\operatorname{max}}$ are given by \eqref{eq:min-CH-optim-expr} and \eqref{eq:max-CH-div}, respectively.
    Additionally, the bounds in \eqref{eq:bounds-optimal-error-exponent} can be strengthened through regularization as
    \begin{align}
    \sup_{\ell \in \mathbb{N}}\dfrac{1}{\ell}\xi_{\operatorname{min}}(\rho_1^{\otimes \ell},\ldots, \rho_r^{\otimes \ell}) \leq \operatorname{E}(\rho_1,\ldots, \rho_r) \leq \inf_{\ell \in \mathbb{N}} \dfrac{1}{\ell}\xi_{\operatorname{max}}(\rho_1^{\otimes \ell},\ldots, \rho_r^{\otimes \ell}).\label{eq:bounds-regularized-error-exponent}
\end{align}
    \end{theorem}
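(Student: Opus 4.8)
The plan is to dispatch the two single-copy bounds in \eqref{eq:bounds-optimal-error-exponent} by a purely structural argument and then to upgrade each of them to its regularized form by a blocking argument. For \eqref{eq:bounds-optimal-error-exponent}, observe that Proposition~\ref{prop:optimal-error-exponent-ch-divergence} tells us the optimal error exponent $\operatorname{E}$ is itself a multivariate quantum Chernoff divergence. Hence the minimality property \eqref{eq:minimal-ch-divergence} of $\xi_{\operatorname{min}}$ and the maximality property \eqref{eq:maximal-ch-divergence} of $\xi_{\operatorname{max}}$ apply verbatim to the choice $\boldsymbol\xi=\operatorname{E}$, giving $\xi_{\operatorname{min}}(\rho_1,\ldots,\rho_r)\le \operatorname{E}(\rho_1,\ldots,\rho_r)\le \xi_{\operatorname{max}}(\rho_1,\ldots,\rho_r)$ with no further computation.

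For the regularized lower bound in \eqref{eq:bounds-regularized-error-exponent} I would argue directly by measuring blocks of $\ell$ copies. Fix $\ell$ and a measurement channel $\mathcal{M}$ acting on $\mathcal{H}^{\otimes\ell}$ with induced probability measures $P_1^{\mathcal{M}},\ldots,P_r^{\mathcal{M}}$ as in \eqref{eq:probability-measure-measurement-channel}, and set $\mathcal{E}_{\operatorname{cl}}\coloneqq\{(\eta_i,P_i^{\mathcal{M}})\}$. Applying $\mathcal{M}^{\otimes m}$ to $\rho_i^{\otimes\ell m}=(\rho_i^{\otimes\ell})^{\otimes m}$ and then running the optimal classical minimum-likelihood rule is one legitimate (hence suboptimal) quantum POVM, so $\operatorname{Err}(\mathcal{E}^{\ell m})\le \operatorname{Err}_{\operatorname{cl}}(\mathcal{E}_{\operatorname{cl}}^{m})$. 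The key technical point is that $\operatorname{E}$ is a $\liminf$ over all $n$, whereas blocking only controls the subsequence $n=\ell m$; I would bridge this by monotonicity of $\operatorname{Err}(\mathcal{E}^n)$ in $n$, which follows from tensoring an optimal $n'$-copy POVM with the identity on the remaining copies to get $\operatorname{Err}(\mathcal{E}^n)\le\operatorname{Err}(\mathcal{E}^{n'})$ whenever $n\ge n'$. Thus for general $n$, with $m=\floor{n/\ell}$, we have $\operatorname{Err}(\mathcal{E}^n)\le\operatorname{Err}(\mathcal{E}^{\ell m})\le \operatorname{Err}_{\operatorname{cl}}(\mathcal{E}_{\operatorname{cl}}^{m})$. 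Invoking the classical Theorem~\ref{thm:optimal-classical-general} and $m/n\to 1/\ell$ yields $\operatorname{E}(\rho_1,\ldots,\rho_r)\ge \tfrac{1}{\ell}\xi_{\operatorname{cl}}(P_1^{\mathcal{M}},\ldots,P_r^{\mathcal{M}})$; taking the supremum over $\mathcal{M}$ and using the expression \eqref{eq:min-CH-optim-expr} for $\xi_{\operatorname{min}}(\rho_1^{\otimes\ell},\ldots,\rho_r^{\otimes\ell})$, then over $\ell$, gives the left inequality of \eqref{eq:bounds-regularized-error-exponent}.

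For the regularized upper bound I would avoid a second direct argument and instead exploit superadditivity of the exponent under blocking. Writing $g(n)\coloneqq-\tfrac{1}{n}\ln\operatorname{Err}(\mathcal{E}^n)$, the exponent of the blocked ensemble equals $\operatorname{E}(\rho_1^{\otimes\ell},\ldots,\rho_r^{\otimes\ell})=\ell\,\liminf_{m\to\infty}g(\ell m)$ (using that the exponent is independent of the prior, cf.\ \eqref{eq:optimal-error-exponent-independent-prior}). Since a $\liminf$ along the subsequence $\{\ell m\}_m$ is at least the $\liminf$ over all $n$, we obtain $\operatorname{E}(\rho_1^{\otimes\ell},\ldots,\rho_r^{\otimes\ell})\ge \ell\,\operatorname{E}(\rho_1,\ldots,\rho_r)$. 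Combining this with the single-copy upper bound of \eqref{eq:bounds-optimal-error-exponent} applied to the states $\rho_i^{\otimes\ell}$, namely $\operatorname{E}(\rho_1^{\otimes\ell},\ldots,\rho_r^{\otimes\ell})\le \xi_{\operatorname{max}}(\rho_1^{\otimes\ell},\ldots,\rho_r^{\otimes\ell})$, gives $\ell\,\operatorname{E}(\rho_1,\ldots,\rho_r)\le \xi_{\operatorname{max}}(\rho_1^{\otimes\ell},\ldots,\rho_r^{\otimes\ell})$, i.e.\ $\operatorname{E}\le \tfrac{1}{\ell}\xi_{\operatorname{max}}(\rho_1^{\otimes\ell},\ldots,\rho_r^{\otimes\ell})$ for every $\ell$; taking the infimum over $\ell$ yields the right inequality of \eqref{eq:bounds-regularized-error-exponent}.

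I expect the main obstacle to be exactly the asymmetry between the full $\liminf$ defining $\operatorname{E}$ and the $\liminf$ along the subsequence $n=\ell m$ created by blocking. This asymmetry is harmless for the upper bound, where the full $\liminf$ is automatically dominated by the subsequential one, but for the lower bound it forces the monotonicity argument, since there one genuinely needs control of the exponent at every $n$, not merely at multiples of $\ell$. A secondary point to verify carefully is that the measurement-then-classical-rule construction really is a valid POVM, so that the classical optimal error upper-bounds the quantum optimal error, and symmetrically that the single-copy bounds \eqref{eq:bounds-optimal-error-exponent} may be applied verbatim to the tensor-power states $\rho_i^{\otimes\ell}$.
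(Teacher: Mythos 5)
Your proposal is correct and follows essentially the same route as the paper: the single-copy bounds come from Proposition~\ref{prop:optimal-error-exponent-ch-divergence} together with the minimality and maximality properties exactly as in the paper, and your regularization step uses the same two ingredients (the subsequence comparison of liminfs, and the monotonicity $\operatorname{Err}(\mathcal{E}^n)\le\operatorname{Err}(\mathcal{E}^{n'})$ for $n\ge n'$ obtained by tensoring an optimal POVM with the identity) that the paper packages into the additivity statement $\operatorname{E}(\rho_1,\ldots,\rho_r)=\frac{1}{\ell}\operatorname{E}(\rho_1^{\otimes\ell},\ldots,\rho_r^{\otimes\ell})$ of Lemma~\ref{lem:additivity-exp}. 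The only organizational difference is that for the regularized lower bound you re-derive achievability of $\frac{1}{\ell}\xi_{\operatorname{cl}}(P_1^{\mathcal{M}},\ldots,P_r^{\mathcal{M}})$ operationally via measure-then-classically-postprocess, whereas the paper simply applies the abstract inequality $\xi_{\operatorname{min}}\le\operatorname{E}$ to the blocked states $\rho_i^{\otimes\ell}$ and then invokes additivity; both are valid.
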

\end{boxed}

\begin{proof}
    We know from Proposition~\ref{prop:optimal-error-exponent-ch-divergence} that the optimal error exponent is a  multivariate quantum Chernoff divergence, which, along with \eqref{eq:minimal-ch-divergence} and \eqref{eq:maximal-ch-divergence}, justifies the inequalities in \eqref{eq:bounds-optimal-error-exponent}. 

We know from Lemma~\ref{lem:additivity-exp} in  Appendix~\ref{app:additivity-optimal-error-exponent} that
\begin{align}
    \operatorname{E}(\rho_1,\ldots, \rho_r) = \dfrac{1}{\ell}\operatorname{E}(\rho_1^{\otimes \ell},\ldots, \rho_r^{\otimes \ell}) \qquad \text{for all } \ell \in \mathbb{N}.
\end{align}
Substituting the above equality into \eqref{eq:bounds-optimal-error-exponent} gives
\begin{align}
    \dfrac{1}{\ell}\xi_{\operatorname{min}}(\rho_1^{\otimes \ell},\ldots, \rho_r^{\otimes \ell}) \leq \operatorname{E}(\rho_1,\ldots, \rho_r) \leq \dfrac{1}{\ell}\xi_{\operatorname{max}}(\rho_1^{\otimes \ell},\ldots, \rho_r^{\otimes \ell}) \qquad \text{for all } \ell \in \mathbb{N},
\end{align}
which implies the inequalities \eqref{eq:bounds-regularized-error-exponent}.
\end{proof}

\bigskip
We note that in the upper bound in \eqref{eq:bounds-regularized-error-exponent}, the infimum over $\ell \in \mathbb{N}$ can be replaced with the limit $\ell \to \infty$:
\begin{equation}
    \inf_{\ell \in \mathbb{N}} \dfrac{1}{\ell}\xi_{\operatorname{max}}(\rho_1^{\otimes \ell},\ldots, \rho_r^{\otimes \ell}) = \lim_{\ell \to \infty} \dfrac{1}{\ell}\xi_{\operatorname{max}}(\rho_1^{\otimes \ell},\ldots, \rho_r^{\otimes \ell}).\label{eq:limit-maximal-regularized-ch-divergence}
\end{equation}
See Appendix~\ref{app:limits-regularized-ch-divergences}. 
It is open to determine whether the supremum over $\ell \in \mathbb{N}$ in the lower bound in \eqref{eq:bounds-regularized-error-exponent} can be replaced with the limit $\ell \to \infty$, if the limit exists.

It is known from \cite[Corollary~III.8]{mosonyi2015quantum} and  \cite[Corollary~4]{HT2016} (see also \cite[Section~9.3]{matsumoto2014maximization}) that when $r=2$, the following equality holds
\begin{equation}
\sup_{\ell\in \mathbb{N}} \frac{1}{\ell}\xi_{\min}(\rho_{1}^{\otimes\ell
},\rho_{2}^{\otimes\ell})=\widetilde{\xi}(\rho_{1},\rho_{2})\coloneqq \sup
_{s\in\left(  0,1\right)  }\left[-\ln\widetilde{Q}_{s}(\rho_{1},\rho_{2})\right],
\end{equation}
where
\begin{equation}
\widetilde{Q}_{s}(\rho_{1},\rho_{2})\coloneqq \begin{cases}
    \operatorname{Tr}\!\left[  \left(
\rho_{2}^{\left(  1-s\right)  /2s}\rho_{1}\rho_{2}^{\left(  1-s\right)
/2s}\right)  ^{s}\right]  & : s \in [1/2,1)\\
\operatorname{Tr}\!\left[  \left(
\rho_{1}^{  s  /2(1-s)}\rho_{2}\rho_{1}^{  s  /2(1-s)}\right)  ^{1-s}\right] & : s \in (0,1/2)
\end{cases}.
\end{equation}
Since the optimal error exponent is known in this case to be $\xi(\rho
_{1},\rho_{2})$, which is defined in \eqref{eq:quantum_chernoff_divergence}, and it is also known from \cite[Lemma~3]{datta2014limit} that
\begin{equation}
\xi(\rho_{1},\rho_{2})\geq\widetilde{\xi}(\rho_{1},\rho_{2}),
\end{equation}
where the inequality is strict if $\rho_{1}$ and $\rho_{2}$ are invertible and do not commute (see \cite[Theorem~2.1]{hiai1994}), it follows that the lower bound in \eqref{eq:bounds-regularized-error-exponent} cannot be optimal in general.

It is also known from \cite{Mat13,Matsumoto2018}, that when $r=2$, we have
\begin{equation}
\inf_{\ell\in \mathbb{N}} \dfrac{1}{\ell}\xi_{\max}(\rho_{1}^{\otimes\ell},\rho
_{2}^{\otimes\ell}) \geq \widehat{\xi}(\rho_{1},\rho_{2})\coloneqq \sup_{s\in\left(
0,1\right)  }-\ln\widehat{Q}_{s}(\rho_{1},\rho_{2}),
\end{equation}
where%
\begin{equation}
\widehat{Q}_{s}(\rho_{1},\rho_{2})\coloneqq \operatorname{Tr}\!\left[  \rho
_{2}\left(  \rho_{2}^{-1/2}\tilde{\rho}_{1}\rho_{2}^{-1/2}\right)  ^{s}\right].
\end{equation}
Here $\tilde{\rho}_1$ is the absolutely continuous part of $\rho_1$ with respect to $\rho_2$ \cite{ando1976lebesgue}, and the negative power of $\rho_2$ is taken in on its support.
Since the optimal error exponent is known in this case to be $\xi(\rho
_{1},\rho_{2})$ given in \eqref{eq:quantum_chernoff_divergence}, and it is also known from \cite{Mat13,Matsumoto2018}
that
\begin{equation}
\xi(\rho_{1},\rho_{2})\leq\widehat{\xi}(\rho_{1},\rho_{2}),
\end{equation}
where the inequality is strict if $\rho_{1}$ and $\rho_{2}$ are invertible and do not commute (see \cite[Theorem~4.3]{hiaimosonyi2017}), it follows that the upper bound in \eqref{eq:bounds-regularized-error-exponent} cannot be the tightest possible upper bound in general.

\section{Single-letter semi-definite programming upper bound on the optimal error exponent for antidistinguishability}

\label{sec:one-letter-upper-bound-optimal-error-exponent}

    In this section, we derive a single-letter semi-definite programming upper bound on the optimal error exponent.
    Let us begin by recalling that the minimum error probability of
    antidistinguishability of an ensemble $\mathcal{E}\coloneqq \left\{  \left(  \eta_{i},\rho_{i}\right)  :i\in\left[  r\right]  \right\}  $ can also be expressed
in terms of the following primal and dual semi-definite programs \cite[Section~II]{bandyopadhyay2014conclusive} 
(see also \cite[(III.15)]{yuen_kennedy_lax1975}):
\begin{align}
\operatorname{Err}(\mathcal{E}) &  = \inf_{\left\{  M_{i}\right\}_{i\in [r]}%
}\left\{  \sum_{i\in\left[  r\right]  }\eta_{i}\operatorname{Tr}[M_{i}\rho
_{i}]:M_{i}\geq0\quad\text{for all } i\in\left[  r\right]  ,\sum_{i\in\left[
r\right]  }M_{i}=\mathbb{I}\right\}  \label{eq:primal-SDP-antidist}\\
&  =\sup_{Y\in\text{Herm}}\left\{  \operatorname{Tr}[Y]:Y\leq\eta_{i}%
\rho_{i}\quad\text{for all } i\in\left[  r\right]  \right\}
,\label{eq:dual-SDP-antidist}%
\end{align}
where Herm denotes the set of Hermitian operators. The equality holds as a consequence of Slater's condition; indeed we see this by noting that $M_i = \mathbb{I}/r$ is strictly feasible for the primal and $Y = 0$ is feasible for the dual. Defining $\eta_{\min} \coloneq \min_{i\in [r]} \eta_i$,
then it follows that
\begin{align}
    \operatorname{Err}(\mathcal{E}) & =\sup_{Y\in\text{Herm}}\left\{  \operatorname{Tr}
[Y]:Y\leq\eta_{i}\rho_{i}\ \ \forall i\in\left[  r\right]  \right\} \label{eq:err-lower-bound-kappa-1} \\
& \geq \sup_{Y\in\text{Herm}}\left\{  \operatorname{Tr}
[Y]:Y\leq\eta_{\min}\rho_{i}\ \ \forall i\in\left[  r\right]  \right\} \\
& = \sup_{Y\in\text{Herm}}\left\{  \operatorname{Tr}
[\eta_{\min}Y]:\eta_{\min}Y\leq\eta_{\min}\rho_{i}\ \ \forall i\in\left[  r\right]  \right\} \\
& = \eta_{\min} \cdot\sup_{Y\in\text{Herm}}\left\{  \operatorname{Tr}
[Y]:Y\leq\rho_{i}\ \ \forall i\in\left[  r\right]  \right\} \\
& \geq \eta_{\min} \kappa(\rho_1,\ldots, \rho_r)
\label{eq:err-lower-bound-kappa-last} 
\end{align}
where 
\begin{equation} \label{eq:definition-kappa-quantity}
    \kappa(\rho_1,\ldots, \rho_r) \coloneqq \sup_{Y\in\text{Herm}}\left\{  \operatorname{Tr}[Y]:-\rho_{i}\leq Y\leq
\rho_{i}\ \ \forall i\in\left[  r\right]  \right\}.
\end{equation}
The first inequality follows because \begin{equation}
    Y\leq\eta_{\min}\rho_{i}\ \ \forall i\in\left[  r\right] \qquad \Rightarrow \qquad Y\leq\eta_{i}\rho_{i}\ \ \forall i\in\left[  r\right].
\end{equation}
The second equality follows because optimizing over all Hermitian $Y$ is equivalent to optimizing over $\eta_{\min} Y$ since $\eta_{\min} >0$. The third equality follows because $\eta_{\min}Y\leq\eta_{\min}\rho_{i} \Leftrightarrow Y\leq\rho_{i}$ and by factoring $\eta_{\min}$ out of the optimization. The final inequality follows because the optimization in the definition of $\kappa(\rho_1,\ldots, \rho_r)$ adds extra constraints.

The main advantage of the $\kappa$ quantity over the antidistinguishability error probability itself is that it is supermultiplicative, as stated below. For this reason, we can use it to bound the error exponent.

\begin{boxed}{white}
\begin{lemma}\mbox{}
\label{lem:supermult-kappa}
For the tuples of states, $(\rho_1,\ldots, \rho_r)$ and
$(\sigma_1,\ldots, \sigma_r)$, the following supermultiplicativity
inequality holds
\begin{equation}
    \kappa(\rho_1\otimes \sigma_1,\ldots, \rho_r \otimes \sigma_r)
        \geq
            \kappa(\rho_1,\ldots, \rho_r)\cdot\kappa(\sigma_1,\ldots, \sigma_r).\label{eq:super-mult}%
\end{equation}

\end{lemma}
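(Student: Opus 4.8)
The plan is to prove \eqref{eq:super-mult} by a tensoring construction: take operators that are feasible (and optimal) for the two individual $\kappa$-problems and show that their tensor product is feasible for the joint problem, with the trace objective factorizing multiplicatively. First I would note that the feasible region in \eqref{eq:definition-kappa-quantity}, namely $\{Y \in \text{Herm} : -\rho_i \leq Y \leq \rho_i \ \forall i\}$, is closed and bounded (the constraint $-\rho_1 \leq Y \leq \rho_1$ already bounds $\|Y\|$), hence compact in finite dimensions, so the supremum is attained; one could equally use approximately optimal operators. Let $Y$ achieve $\operatorname{Tr}[Y] = \kappa(\rho_1,\ldots,\rho_r)$ subject to $-\rho_i \leq Y \leq \rho_i$ for all $i$, and let $Z$ achieve $\operatorname{Tr}[Z] = \kappa(\sigma_1,\ldots,\sigma_r)$ subject to $-\sigma_i \leq Z \leq \sigma_i$ for all $i$.

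The heart of the argument is to verify that the Hermitian operator $Y \otimes Z$ is feasible for the joint problem, i.e., that $-\rho_i\otimes\sigma_i \leq Y\otimes Z \leq \rho_i\otimes\sigma_i$ for every $i \in [r]$. For this I would use the algebraic identities
\[
\rho_i\otimes\sigma_i - Y\otimes Z = \frac{1}{2}\big[(\rho_i-Y)\otimes(\sigma_i+Z) + (\rho_i+Y)\otimes(\sigma_i-Z)\big],
\]
\[
\rho_i\otimes\sigma_i + Y\otimes Z = \frac{1}{2}\big[(\rho_i-Y)\otimes(\sigma_i-Z) + (\rho_i+Y)\otimes(\sigma_i+Z)\big],
\]
both of which are checked by expanding the right-hand sides, where the cross terms $\rho_i \otimes Z$ and $Y \otimes \sigma_i$ cancel. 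The constraints on $Y$ and $Z$ guarantee that $\rho_i \pm Y \geq 0$ and $\sigma_i \pm Z \geq 0$, so each summand on the right is a tensor product of positive semi-definite operators and is therefore positive semi-definite. Hence both $\rho_i\otimes\sigma_i \mp Y\otimes Z$ are positive semi-definite, which is exactly the desired two-sided operator inequality.

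Given feasibility, the conclusion is immediate: since $\operatorname{Tr}[Y\otimes Z] = \operatorname{Tr}[Y]\,\operatorname{Tr}[Z]$, the operator $Y\otimes Z$ is a feasible point of the optimization defining $\kappa(\rho_1\otimes\sigma_1,\ldots,\rho_r\otimes\sigma_r)$ whose objective value equals $\kappa(\rho_1,\ldots,\rho_r)\cdot\kappa(\sigma_1,\ldots,\sigma_r)$. Taking the supremum over the joint feasible set then yields \eqref{eq:super-mult}.

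The main obstacle is precisely the feasibility step: a priori $Y\otimes Z$ need not have a definite sign, so one cannot argue from positivity of $Y$ or $Z$ directly. The key realization is that the sandwich constraints $-\rho_i \leq Y \leq \rho_i$ and $-\sigma_i \leq Z \leq \sigma_i$ recombine, via the displayed identities, into sums of manifestly positive tensor products; this cross-term cancellation is what makes the two-sided bounds tensorize and is the essential reason supermultiplicativity holds. Everything else in the proof is routine.
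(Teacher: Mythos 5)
Your proof is correct and follows essentially the same route as the paper: tensor two feasible operators, verify that the two-sided operator constraints tensorize, and use multiplicativity of the trace. The only difference is that where the paper invokes Lemma~12.35 of \cite{KW20} for the step $-\rho_i\otimes\sigma_i \leq Y\otimes Z \leq \rho_i\otimes\sigma_i$, you prove that lemma inline via the (correct) cross-term-cancelling decomposition into sums of tensor products of positive semi-definite operators.
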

\end{boxed}

\begin{proof}
Let $Y_{\rho}, Y_{\sigma} \in \operatorname{Herm}$ satisfy $-\rho_{i}\leq Y_{\rho}\leq\rho_{i}$ and  $-\sigma_{i}\leq
Y_{\sigma}\leq\sigma_{i}$ for all $i\in\left[  r\right]  $. 
Now invoking
Lemma~12.35 of \cite{KW20}, we conclude that, for all $i\in\left[
r\right]  $,%
\begin{equation}
-\rho_{i}\otimes\sigma_{i}\leq Y_{\rho}\otimes Y_{\sigma}\leq\rho_{i}%
\otimes\sigma_{i}.
\end{equation}
It then follows that%
\begin{align}
\operatorname{Tr}[Y_{\rho}]\cdot\operatorname{Tr}[Y_{\sigma}]  &
=\operatorname{Tr}[Y_{\rho}\otimes Y_{\sigma}]\\
& \leq\sup_{Y\in\text{Herm}}\left\{  \operatorname{Tr}[Y]:-\rho_{i}%
\otimes\sigma_{i}\leq Y\leq\rho_{i}\otimes\sigma_{i}\ \ \forall i\in\left[
r\right]  \right\}  \\
& =\kappa (\rho_1\otimes \sigma_1,\ldots, \rho_r \otimes \sigma_r).
\end{align}
Since the inequality holds for all $Y_{\rho}$ and $Y_{\sigma}$ satisfying the
aforementioned constraints, we conclude~\eqref{eq:super-mult}.
\end{proof}

By applying the supermultiplicativity result inductively, combined with the development in \eqref{eq:err-lower-bound-kappa-1}--\eqref{eq:err-lower-bound-kappa-last}, we conclude the following:

\begin{boxed}{white}
\begin{theorem}\mbox{} \label{thm:improved-one-letter-upper-bound}
For states $\rho_1, \ldots, \rho_r$, the following upper bound holds for the asymptotic error exponent of quantum
antidistinguishability:%
\begin{equation}\label{eq:improved-one-letter-upper-bound}
\operatorname{E}(\rho_1,\ldots, \rho_r) \leq -\ln \kappa(\rho_1,\ldots, \rho_r).
\end{equation}
\end{theorem}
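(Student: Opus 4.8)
The plan is to lower-bound the $n$-fold optimal error probability directly by $\kappa(\rho_1,\ldots,\rho_r)^n$ up to a constant factor, so that the negative normalized logarithm produces the claimed bound in the limit. First I would apply the chain of inequalities \eqref{eq:err-lower-bound-kappa-1}--\eqref{eq:err-lower-bound-kappa-last} not to $\mathcal{E}$ but to the $n$-fold ensemble $\mathcal{E}^n=\{(\eta_i,\rho_i^{\otimes n}):i\in[r]\}$. Since $\mathcal{E}^n$ carries the same prior distribution $(\eta_i)_{i\in[r]}$, and hence the same $\eta_{\min}$, that development yields
\begin{equation}
    \operatorname{Err}(\mathcal{E}^n)\geq \eta_{\min}\,\kappa(\rho_1^{\otimes n},\ldots,\rho_r^{\otimes n}).
\end{equation}

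Next I would apply Lemma~\ref{lem:supermult-kappa} inductively, at each step taking $\sigma_i=\rho_i^{\otimes(n-1)}$, to obtain the supermultiplicative estimate $\kappa(\rho_1^{\otimes n},\ldots,\rho_r^{\otimes n})\geq \kappa(\rho_1,\ldots,\rho_r)^n$. Combining the two facts gives $\operatorname{Err}(\mathcal{E}^n)\geq \eta_{\min}\,\kappa(\rho_1,\ldots,\rho_r)^n$. Taking $-\tfrac{1}{n}\ln(\cdot)$ of both sides then gives
\begin{equation}
    -\frac{1}{n}\ln\operatorname{Err}(\mathcal{E}^n)\leq -\frac{1}{n}\ln\eta_{\min}-\ln\kappa(\rho_1,\ldots,\rho_r),
\end{equation}
and passing to $\liminf_{n\to\infty}$ through definition \eqref{eq:qu-asy-error-rate-def}, together with the fact that $-\tfrac{1}{n}\ln\eta_{\min}\to 0$, yields $\operatorname{E}(\rho_1,\ldots,\rho_r)\leq-\ln\kappa(\rho_1,\ldots,\rho_r)$.

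The only subtlety worth flagging is the degenerate case $\kappa(\rho_1,\ldots,\rho_r)=0$: there the right-hand side is $+\infty$ and the inequality holds vacuously, while in the non-degenerate regime $\kappa>0$ (note that $Y=0$ is always feasible, so $\kappa\geq 0$ in all cases) the logarithmic step above is legitimate. I do not expect a genuine obstacle in this proof: the two substantive ingredients---the supermultiplicativity of $\kappa$ and the primal--dual SDP lower bound $\operatorname{Err}(\mathcal{E})\geq\eta_{\min}\kappa(\rho_1,\ldots,\rho_r)$---have already been established, so what remains is the routine assembly of these facts together with the observation that the additive $-\tfrac{1}{n}\ln\eta_{\min}$ term vanishes asymptotically, leaving a clean single-letter bound.
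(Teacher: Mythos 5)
Your proposal is correct and follows essentially the same route as the paper: the authors also apply the chain \eqref{eq:err-lower-bound-kappa-1}--\eqref{eq:err-lower-bound-kappa-last} to the $n$-fold ensemble, invoke Lemma~\ref{lem:supermult-kappa} inductively to get $\kappa(\rho_1^{\otimes n},\ldots,\rho_r^{\otimes n})\geq\kappa(\rho_1,\ldots,\rho_r)^n$, and let the $-\tfrac{1}{n}\ln\eta_{\min}$ term vanish in the limit. Your explicit handling of the degenerate case $\kappa=0$ is a small but welcome addition that the paper leaves implicit.
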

\end{boxed}

\begin{proof}
    Consider that
    \begin{align}
        \operatorname{E}(\rho_1,\ldots, \rho_r) & = \liminf_{n \to \infty} -\frac{1}{n} \ln \operatorname{Err}(\mathcal{E}^n) \\
        & \leq \liminf_{n \to \infty} -\frac{1}{n} \ln \left(\eta_{\min} \kappa(\rho_1^{\otimes n}, \ldots,\rho_r^{\otimes n})\right) \\
        & \leq \liminf_{n \to \infty} -\frac{1}{n} \ln \left( \kappa(\rho_1, \ldots,\rho_r)^n\right) \\
        & = -\ln \kappa(\rho_1, \ldots,\rho_r).
    \end{align}
    The first inequality follows from \eqref{eq:err-lower-bound-kappa-1}--\eqref{eq:err-lower-bound-kappa-last}. The second inequality follows from $\liminf -\frac{1}{n} \ln \eta_{\min}=0$ and Lemma~\ref{lem:supermult-kappa} applied inductively.
\end{proof}

The upper bound in \eqref{eq:improved-one-letter-upper-bound} can be bounded from above by a quantity expressed in terms of the extended max-relative entropy, defined for a Hermitian operator $X \neq 0$ and a positive semi-definite operator $\sigma \neq 0$ as \cite[Eqs.~(14)--(16)]{ww2020}:
\begin{equation}
D_{\max}(X\Vert \sigma)   \coloneqq \ln\inf_{\lambda\geq0}\left\{  \lambda:-\lambda \sigma \leq X \leq\lambda \sigma\right\}.
\end{equation}
We have that $D_{\max}(X\Vert \sigma)=+\infty$ if the support of $X$ is not contained in the support of $\sigma$. 
Also, whenever the support of $X$ is contained in
the support of $\sigma$, we have $D_{\max}(X\Vert \sigma) < +\infty$ and in this case,
\begin{equation}
   D_{\max}(X\Vert \sigma) =\ln\left\Vert \sigma^{-\frac{1}{2}}X\sigma^{-\frac{1}{2}}\right\Vert _{\infty},
\end{equation}
where the inverse is understood to be taken on the support of $\sigma$. In Appendix~\ref{app:properties_extended_dmax}, we derive several fundamental properties of the extended max-relative entropy, including data processing, joint quasi-convexity, lower semi-continuity, non-negativity, faithfulness, monotonicity, and additivity.

\begin{boxed}{white}
\begin{theorem}\mbox{}\label{thm:extended-dmax-err-exp-bnd}
For  quantum states $\rho_1,\ldots, \rho_r$, the quantity $\kappa(\rho_1,\ldots, \rho_r)$ is bounded from below in terms of the extended max-relative entropy, as follows:
\begin{equation}\label{eq:dual-err-double-prime-dmax}
\kappa(\rho_1,\ldots, \rho_r) \geq \exp\left(  -\inf_{\omega\in\mathcal{D}^{\prime}%
    }\max_{i\in\left[  r\right]  }D_{\max}(\omega\Vert\rho_{i})\right),
\end{equation}
where $\mathcal{D}^{\prime} \coloneqq \left\{ \omega : \omega = \omega^\dag, \ \operatorname{Tr}[\omega]=1\right\}$ is the set of all Hermitian operators with trace one. 
Consequently, we have
\begin{align}\label{eq:optimal-error-dmax-upper-bound}
    \operatorname{E}(\rho_1,\ldots, \rho_r) 
        &\leq \inf_{\omega\in\mathcal{D}^{\prime}%
    }\max_{i\in\left[  r\right]  }D_{\max}(\omega\Vert\rho_{i}) \\
    &= \max_{\left\{  s_{i}\right\}  _{i\in\left[  r\right]  }}\inf_{\omega
\in\mathcal{D}^{\prime}}\sum_{i\in\left[  r\right]  }s_{i}D_{\max}(\omega\Vert\rho_{i}),
\end{align}
where $\left\{  s_{i}\right\}  _{i\in\left[  r\right]  }$ is a probability
distribution.
\end{theorem}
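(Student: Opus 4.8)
The plan is to obtain \eqref{eq:dual-err-double-prime-dmax} by an explicit feasible-point construction for the semi-definite program \eqref{eq:definition-kappa-quantity} defining $\kappa$. Fix any $\omega\in\mathcal{D}^{\prime}$ for which $\max_{i\in[r]}D_{\max}(\omega\Vert\rho_i)<\infty$, and set $\lambda\coloneqq\exp(\max_{i\in[r]}D_{\max}(\omega\Vert\rho_i))$. By the definition of the extended max-relative entropy, each inequality $D_{\max}(\omega\Vert\rho_i)\leq\ln\lambda$ is equivalent to $-\lambda\rho_i\leq\omega\leq\lambda\rho_i$, and these hold simultaneously since $\lambda$ does not depend on $i$. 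Hence $Y\coloneqq\omega/\lambda$ is Hermitian with $-\rho_i\leq Y\leq\rho_i$ for all $i$, so it is feasible in \eqref{eq:definition-kappa-quantity}, giving $\kappa(\rho_1,\ldots,\rho_r)\geq\operatorname{Tr}[Y]=\operatorname{Tr}[\omega]/\lambda=1/\lambda=\exp(-\max_{i\in[r]}D_{\max}(\omega\Vert\rho_i))$. Taking the supremum over $\omega$ and using $\sup_\omega\exp(-\max_i(\cdot))=\exp(-\inf_\omega\max_i(\cdot))$ yields \eqref{eq:dual-err-double-prime-dmax}.

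The first line of \eqref{eq:optimal-error-dmax-upper-bound} is then immediate: Theorem~\ref{thm:improved-one-letter-upper-bound} gives $\operatorname{E}(\rho_1,\ldots,\rho_r)\leq-\ln\kappa(\rho_1,\ldots,\rho_r)$, and applying the order-reversing map $-\ln(\cdot)$ to \eqref{eq:dual-err-double-prime-dmax} gives $-\ln\kappa(\rho_1,\ldots,\rho_r)\leq\inf_{\omega\in\mathcal{D}^{\prime}}\max_{i\in[r]}D_{\max}(\omega\Vert\rho_i)$; chaining these two bounds proves it.

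For the remaining minimax identity, I would first rewrite $\max_{i\in[r]}D_{\max}(\omega\Vert\rho_i)=\sup_{\mathbf{s}\in\mathbb{S}_r}\sum_{i\in[r]}s_iD_{\max}(\omega\Vert\rho_i)$, since the supremum of a linear functional over the simplex is attained at a vertex; thus the left-hand side equals $\inf_\omega\sup_{\mathbf{s}}\sum_i s_iD_{\max}(\omega\Vert\rho_i)$. The inequality toward the claimed $\max_{\mathbf{s}}\inf_\omega$ form that holds automatically is weak duality, so the substance is the reverse inequality, i.e.\ strong duality for this saddle-point problem. The SDP structure is the right vehicle here: the rescaling $Y=\omega/\lambda$ from the first paragraph is in fact reversible (any feasible $\omega,\lambda$ yields a feasible $Y$ with $\operatorname{Tr}[Y]=1/\lambda$, and conversely), so $\inf_{\omega\in\mathcal{D}^{\prime}}\max_{i\in[r]}D_{\max}(\omega\Vert\rho_i)=-\ln\kappa(\rho_1,\ldots,\rho_r)$ exactly, and $\kappa$ is the value of a semi-definite program satisfying Slater's condition. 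The plan is to extract, from an optimal solution of the dual SDP, a weight vector $\mathbf{s}^\ast\in\mathbb{S}_r$ and an operator $\omega^\ast$ that together form a saddle point of $(\omega,\mathbf{s})\mapsto\sum_i s_iD_{\max}(\omega\Vert\rho_i)$, which certifies $\max_{\mathbf{s}}\inf_\omega\sum_i s_iD_{\max}(\omega\Vert\rho_i)\geq-\ln\kappa$ and closes the gap.

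The main obstacle is precisely this strong-duality step. It is tempting to invoke Sion's minimax theorem using the affineness (hence concavity and continuity) of the pairing in $\mathbf{s}$ over the compact simplex, together with lower semicontinuity of $D_{\max}$ in its first argument (established in Appendix~\ref{app:properties_extended_dmax}); however, Sion's theorem also requires $\omega\mapsto\sum_i s_iD_{\max}(\omega\Vert\rho_i)$ to be quasi-convex for each fixed $\mathbf{s}$, and this fails, because $D_{\max}(\cdot\Vert\rho_i)$ is only quasi-convex and a nonnegative weighted sum of quasi-convex functions need not have convex sublevel sets. Consequently the exchange cannot be justified by a generic convex minimax theorem and must instead be derived from the SDP optimality conditions for $\kappa$. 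Translating those conditions into the saddle-point statement for the quasi-convex functional $\sum_i s_iD_{\max}(\omega\Vert\rho_i)$ — identifying the optimal weights $\mathbf{s}^\ast$ with the dual SDP variables and verifying the complementary-slackness relations — is where essentially all the effort is concentrated; the preliminary step of showing the infimum over $\omega$ is attained on the compact effective domain $\{\omega\in\mathcal{D}^{\prime}:\operatorname{supp}(\omega)\subseteq\bigcap_i\operatorname{supp}(\rho_i)\}$ with bounded max-relative entropies is routine by comparison.
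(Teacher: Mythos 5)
Your derivation of \eqref{eq:dual-err-double-prime-dmax} via the explicit feasible point $Y=\omega/\lambda$ is correct and is, up to presentation, the same argument the paper gives (the paper performs the identical substitution $Y=\lambda\omega$, $\lambda=1/\lambda'$ as a chain of rewritings of the SDP \eqref{eq:definition-kappa-quantity}); likewise your deduction of the first line of \eqref{eq:optimal-error-dmax-upper-bound} from Theorem~\ref{thm:improved-one-letter-upper-bound} matches the paper exactly.

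The gap is in the final minimax identity, which you do not actually prove: you correctly reduce it to exchanging $\inf_{\omega}$ with $\max_{\mathbf{s}}$ and observe that weak duality gives one direction for free, but the strong-duality direction is left as a plan to be extracted from SDP optimality conditions, and the saddle-point construction --- which you yourself identify as where ``essentially all the effort is concentrated'' --- is never carried out. Moreover, the obstacle that drives you away from Sion's theorem is not actually present. It is true that a nonnegative combination of quasi-convex functions need not be quasi-convex in general, but this particular objective has more structure: on the set of $\omega$ supported in the common support of the relevant $\rho_i$ one has $\sum_{i}s_{i}D_{\max}(\omega\Vert\rho_{i})=\ln\prod_{i}\bigl\Vert\rho_{i}^{-1/2}\omega\rho_{i}^{-1/2}\bigr\Vert_{\infty}^{s_{i}}$, and a weighted geometric mean of seminorms of $\omega$ is a convex function of $\omega$: it is positively homogeneous of degree one, and subadditivity follows from the superadditivity of the weighted geometric mean, $\prod_{i}(a_{i}+b_{i})^{s_{i}}\geq\prod_{i}a_{i}^{s_{i}}+\prod_{i}b_{i}^{s_{i}}$. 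Hence $\sum_{i}s_{i}D_{\max}(\omega\Vert\rho_{i})$ is the logarithm of a convex function and therefore quasi-convex in $\omega$ (its sublevel sets are the intersection of a convex set with the convex support constraint on which the sum is finite), which is precisely the hypothesis the paper invokes to apply Sion's minimax theorem together with linearity and continuity in $\mathbf{s}$ and lower semi-continuity in $\omega$. So the route you abandoned is the one that works, and the route you propose in its place is not completed; as written, the equality in \eqref{eq:optimal-error-dmax-upper-bound} remains unproven.
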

\end{boxed}

\begin{proof}
    By the definition \eqref{eq:definition-kappa-quantity} and the fact that $Y=0$ is always feasible for $\kappa(\rho_1,\ldots, \rho_r) $, we conclude that
\begin{align}
\kappa(\rho_1,\ldots, \rho_r) 
& = \sup_{Y\in\text{Herm}}\left\{  \operatorname{Tr}[Y] \geq 0:-\rho_{i}\leq Y\leq
\rho_{i}\ \ \forall i\in\left[  r\right]  \right\} \label{eq:ADerror-lowerbound-trace-positive} \\
& = \sup_{\substack{Y\in\text{Herm} :  \operatorname{Tr}[Y]\geq 0}}\left\{  \operatorname{Tr}[Y]:-\rho_{i}\leq Y\leq
\rho_{i}\ \ \forall i\in\left[  r\right]  \right\} \label{eq:ADerror-lowerbound-trace-positive-1} \\
&  =\sup_{\lambda\geq0,\omega\in\mathcal{D}^\prime}\left\{  \operatorname{Tr}%
[\lambda\omega]:-\rho_{i}\leq \lambda\omega \leq
\rho_{i},\quad\forall  i\in\left[
r\right]  \right\} \label{eq:ADerror-lowerbound-1}  \\
&  =\sup_{\lambda\geq0,\omega\in\mathcal{D}^\prime}\left\{  \lambda:-\rho_{i}\leq \lambda\omega \leq
\rho_{i},\quad\forall  i\in\left[  r\right]  \right\}  \\
&  \geq \sup_{\lambda>0,\omega\in\mathcal{D}^\prime}\left\{  \lambda:-\rho_{i}\leq \lambda\omega \leq
\rho_{i},\quad\forall  i\in\left[  r\right]  \right\}  \\
&  =\sup_{\lambda>0,\omega\in\mathcal{D}^\prime}\left\{  \lambda:-\frac{1}{\lambda}\rho_{i}\leq \omega \leq
\frac{1}{\lambda}\rho_{i},\quad\forall  i\in\left[  r\right]  \right\}  \\
&  =\sup_{\lambda'>0,\omega\in\mathcal{D}^\prime}\left\{  \frac{1}{\lambda'}:-\lambda^\prime \rho_{i}\leq \omega \leq \lambda^\prime\rho_{i},\quad\forall  i\in\left[  r\right]  \right\} \label{eq:ADerror-lowerbound-2}  \\
&  =\left[  \inf_{\lambda'>0,\omega\in\mathcal{D}^\prime}\left\{  \lambda':-\lambda^\prime \rho_{i}\leq \omega \leq \lambda^\prime\rho_{i},\quad\forall  i\in\left[  r\right]  \right\}  \right]  ^{-1}\\
&  =\left[  \inf_{\omega\in\mathcal{D}^\prime}\exp\left(  \max_{i\in\left[  r\right]
}D_{\max}\left(  \omega\Vert\rho_{i}\right)  \right)  \right]  ^{-1}\\
&  =\left[  \exp\left(  \inf_{\omega\in\mathcal{D}^\prime}\max_{i\in\left[  r\right]
}D_{\max}(\omega\Vert\rho_{i})\right)  \right]  ^{-1}\\
&  =\exp\left(  -\inf_{\omega\in\mathcal{D}^\prime}\max_{i\in\left[  r\right]
}D_{\max}(\omega\Vert\rho_{i})\right)  .
\end{align}
The equality \eqref{eq:ADerror-lowerbound-trace-positive} follows because $Y=0$ is feasible in \eqref{eq:definition-kappa-quantity}.
The equality \eqref{eq:ADerror-lowerbound-1} follows because for any Hermitian operator $Y$ with positive trace, we can choose $\lambda = \operatorname{Tr}[Y]$ and $\omega = Y/\operatorname{Tr}[Y] \in \mathcal{D}^\prime$ so that $Y=\lambda \omega$;  and if $Y=0$ then we can choose $\lambda = 0$ and $\omega=\mathbb{I}/\dim(\mathcal{H}) \in \mathcal{D}^\prime$ so that $Y=\lambda \omega$. 
The equality \eqref{eq:ADerror-lowerbound-2} follows from the substitution
$\lambda=\frac{1}{\lambda'}$. 

The desired inequality \eqref{eq:optimal-error-dmax-upper-bound} is a direct consequence of \eqref{eq:improved-one-letter-upper-bound} and \eqref{eq:dual-err-double-prime-dmax}.
Also, we have
\begin{align}
\inf_{\omega\in\mathcal{D}^{\prime}}\max_{i\in\left[  r\right]  }D_{\max}(\omega
\Vert\rho_{i})  & =\inf_{\omega\in\mathcal{D}^{\prime}}\max_{\left\{
s_{i}\right\}  _{i\in\left[  r\right]  }}\sum_{i\in\left[  r\right]  }%
s_{i}D_{\max}(\omega\Vert\rho_{i})\\
& =\max_{\left\{  s_{i}\right\}  _{i\in\left[  r\right]  }}\inf_{\omega
\in\mathcal{D}^{\prime}}\sum_{i\in\left[  r\right]  }s_{i}D_{\max}(\omega\Vert\rho_{i}).
\end{align}
 The first equality follows because the maximum over a finite set
can be replaced with a maximum of the expected value of the elements of the
set, with the maximum taken over all possible distributions. The second
equality follows from an application of the Sion's minimax theorem \cite{sion_1958}:
indeed, the objective function $\sum_{i\in\left[  r\right]  }s_{i}D_{\max
}(\omega\Vert\rho_{i})$ is linear and continuous in the probability
distribution $\left\{  s_{i}\right\}  _{i\in\left[  r\right]  }$, and it is
lower semi-continuous and quasi-convex in $\omega\in\mathcal{D}^{\prime}$.
\end{proof}

\begin{remark}
\label{rem:log-euc-multivariate}

By replacing the set $\mathcal{D}^{\prime}$ with $\mathcal{D}$ (the set of density operators) in Theorem~\ref{thm:extended-dmax-err-exp-bnd}, we get an interesting (although weaker) upper bound on the optimal error exponent:
\begin{align}\label{eq:opt-error-weaker-upp-bound-dmax}
    \operatorname{E}(\rho_1,\ldots, \rho_r) \leq \max_{\left\{  s_{i}\right\}  _{i\in\left[  r\right]  }}\inf_{\omega
\in\mathcal{D}}\sum_{i\in\left[  r\right]  }s_{i}D_{\max}(\omega\Vert\rho_{i}).
\end{align}
This upper bound has a resemblance to
the following divergence:%
\begin{equation}
\max_{\left\{  s_{i}\right\}  _{i\in\left[  r\right]  }}\inf_{\omega
\in\mathcal{D}}\sum_{i\in\left[  r\right]  }s_{i}D(\omega\Vert\rho_{i}%
)=\max_{\left\{  s_{i}\right\}  _{i\in\left[  r\right]  }}\left(
-\ln\operatorname{Tr}\!\left[  \exp\left(  \sum_{i\in\left[  r\right]  }s_{i}%
\ln\rho_{i}\right)  \right]  \right)  ,\label{eq:d-convex-comb-log-euclid}%
\end{equation}
where the equality follows whenever each $\rho_{i}$ is positive definite.
Indeed, the only difference between
\eqref{eq:opt-error-weaker-upp-bound-dmax} and
\eqref{eq:d-convex-comb-log-euclid} is the substitution $D_{\max}(\rho\Vert \sigma)\rightarrow
D(\rho\Vert \sigma) \coloneqq \operatorname{Tr}[\rho (\ln \rho - \ln \sigma)]$, where the latter denotes the standard quantum relative entropy \cite{umegaki_1962}. The
equality in \eqref{eq:d-convex-comb-log-euclid} was established in Eq.~(V.121) and Example~V.25 of \cite{mosonyi2022geometric}. See Appendix~\ref{app:max-rel-entropy-bounds-similarity} for a review of the proof of \eqref{eq:d-convex-comb-log-euclid}. Finally, note that \eqref{eq:d-convex-comb-log-euclid} reduces to the multivariate classical Chernoff divergence when the states in the set $\{\rho_i\}_{i \in [r]}$ commute (have a common eigenbasis). As such, this quantity is a  multivariate quantum Chernoff divergence according to Definition~\ref{def:quantum-CH-div-def}.
\end{remark}

We end the section by deriving another alternative form for the $\kappa$ quantity.

\begin{boxed}{white}
\begin{proposition}\mbox{}
The quantity $\kappa(\rho_1,\ldots, \rho_r)$ can alternatively be written as%
\begin{equation}\label{eq:dual-err-double-prime}
\kappa(\rho_1,\ldots, \rho_r) =\inf_{\substack{Z_{1,i},Z_{2,i}%
\geq0\\\forall i\in\left[  r\right]  }}\left\{  \sum_{i\in [r]}%
\operatorname{Tr}[\left(  Z_{1,i}+Z_{2,i}\right)  \rho_{i}]:\mathbb{I}=\sum_{i\in [r]}%
Z_{2,i}-Z_{1,i}\right\}.
\end{equation}

\end{proposition}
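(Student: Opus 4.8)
The plan is to recognize that the definition of $\kappa(\rho_1,\ldots,\rho_r)$ in \eqref{eq:definition-kappa-quantity} is itself a semi-definite program whose maximization is over the Hermitian variable $Y$, and that the claimed expression \eqref{eq:dual-err-double-prime} is precisely its Lagrangian dual. First I would rewrite the sandwich constraint $-\rho_i \leq Y \leq \rho_i$ as the pair of positivity constraints $Y+\rho_i \geq 0$ and $\rho_i - Y \geq 0$, and introduce positive semi-definite Lagrange multipliers $Z_{1,i}$ and $Z_{2,i}$ for these two families of constraints, respectively. The associated Lagrangian is
\[
\operatorname{Tr}[Y] + \sum_{i\in[r]}\operatorname{Tr}[Z_{1,i}(Y+\rho_i)] + \sum_{i\in[r]}\operatorname{Tr}[Z_{2,i}(\rho_i - Y)],
\]
which, upon collecting the terms that are linear in $Y$, rearranges to
\[
\operatorname{Tr}\!\left[Y\left(\mathbb{I} + \sum_{i\in[r]}Z_{1,i} - \sum_{i\in[r]}Z_{2,i}\right)\right] + \sum_{i\in[r]}\operatorname{Tr}[(Z_{1,i}+Z_{2,i})\rho_i].
\]

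Next I would take the supremum over all Hermitian $Y$ to obtain the dual objective. Since $Y$ ranges over the entire real vector space of Hermitian operators, this supremum is $+\infty$ unless the operator multiplying $Y$ vanishes, that is, unless $\sum_{i\in[r]}(Z_{2,i}-Z_{1,i}) = \mathbb{I}$; when this linear constraint is satisfied, the dual objective collapses to $\sum_{i\in[r]}\operatorname{Tr}[(Z_{1,i}+Z_{2,i})\rho_i]$. Minimizing over all feasible $Z_{1,i},Z_{2,i}\geq 0$ therefore yields exactly the right-hand side of \eqref{eq:dual-err-double-prime}, and the standard Lagrangian bound gives weak duality, namely that $\kappa(\rho_1,\ldots,\rho_r)$ is at most this infimum.

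Finally, to upgrade weak duality to equality I would invoke Slater's condition. The primal value $\kappa(\rho_1,\ldots,\rho_r)$ is finite (it is nonnegative since $Y=0$ is feasible, and bounded above by $\operatorname{Tr}[\rho_1]=1$ since any feasible $Y$ satisfies $Y\leq\rho_1$), and the dual program is strictly feasible: for any $\varepsilon>0$ the choice $Z_{1,i}=\varepsilon\mathbb{I}$ and $Z_{2,i}=\tfrac{1+r\varepsilon}{r}\mathbb{I}$ gives strictly positive multipliers that satisfy the equality constraint $\sum_{i\in[r]}(Z_{2,i}-Z_{1,i})=\mathbb{I}$. Strong duality for semi-definite programs then forces the primal supremum to equal the dual infimum, which proves \eqref{eq:dual-err-double-prime}. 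I expect the one subtle point to be the justification of strong duality: because the states $\rho_i$ need not be invertible, strict feasibility can fail on the primal side (on the intersection of the kernels of the $\rho_i$), so I would deliberately establish the hypotheses of Slater's condition on the dual side, as exhibited above, rather than on the primal side.
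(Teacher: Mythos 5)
Your proposal is correct and follows essentially the same route as the paper: the same Lagrangian with multipliers $Z_{1,i}, Z_{2,i}\geq 0$ for the constraints $Y+\rho_i\geq 0$ and $\rho_i-Y\geq 0$, the same collapse of the supremum over Hermitian $Y$ to the equality constraint $\sum_{i\in[r]}(Z_{2,i}-Z_{1,i})=\mathbb{I}$, and strong duality via a strictly feasible dual point together with primal feasibility of $Y=0$ (the paper uses $Z_{1,i}=\mathbb{I}/r$, $Z_{2,i}=2\mathbb{I}/r$, which is your choice at $\varepsilon=1/r$). Your remark that Slater's condition must be checked on the dual side because the $\rho_i$ need not be invertible is a correct and slightly more explicit articulation of the point the paper handles implicitly.
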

\end{boxed}
\begin{proof}
We prove this by showing that the expression in the right-hand side of \eqref{eq:dual-err-double-prime} is the dual SDP\ of $\kappa(\rho_1,\ldots, \rho_r)$ and that the strong duality holds. 
We derive it as follows:%
\begin{align}
& \sup_{Y\in\text{Herm}}\left\{  \operatorname{Tr}[Y]:-\rho_{i}\leq Y\leq
\rho_{i}\ \ \forall i\in\left[  r\right]  \right\}  \nonumber\\
& =\sup_{Y\in\text{Herm}}\left\{  \operatorname{Tr}[Y]+\inf_{Z_{1,i}%
,Z_{2,i}\geq0}\left\{  \sum_{i \in [r]}\left(  \operatorname{Tr}[Z_{1,i}\left(
Y+\rho_{i}\right)  ]+\operatorname{Tr}[Z_{2,i}\left(  \rho_{i}-Y\right)
]\right)  \right\}  \right\}  \\
& =\sup_{Y\in\text{Herm}}\inf_{Z_{1,i},Z_{2,i}\geq0}\left\{  \operatorname{Tr}%
[Y]+\sum_{i \in [r]}\left(  \operatorname{Tr}[Z_{1,i}\left(  Y+\rho_{i}\right)
]+\operatorname{Tr}[Z_{2,i}\left(  \rho_{i}-Y\right)  ]\right)  \right\}  \\
& =\sup_{Y\in\text{Herm}}\inf_{Z_{1,i},Z_{2,i}\geq0}\left\{  \operatorname{Tr}%
\left[  Y\left(  \mathbb{I}+\sum_{i \in [r]}\left(  Z_{1,i}-Z_{2,i}\right)  \right)
\right]  +\sum_{i \in [r]}\operatorname{Tr}[\left(  Z_{1,i}+Z_{2,i}\right)
\rho_{i}]\right\}  \\
& \leq\inf_{Z_{1,i},Z_{2,i}\geq0}\sup_{Y\in\text{Herm}}\left\{
\operatorname{Tr}\left[  Y\left(  \mathbb{I}+\sum_{i \in [r]}\left(  Z_{1,i}%
-Z_{2,i}\right)  \right)  \right]  +\sum_{i \in [r]}\operatorname{Tr}[\left(
Z_{1,i}+Z_{2,i}\right)  \rho_{i}]\right\}  \\
& =\inf_{Z_{1,i},Z_{2,i}\geq0}\left\{  \sum_{i \in [r]}\operatorname{Tr}[\left(
Z_{1,i}+Z_{2,i}\right)  \rho_{i}]:\mathbb{I}=\sum_{i \in [r]}Z_{2,i}-Z_{1,i}\right\}  .
\end{align}
Strong duality holds here by picking $Z_{2,i}=2\mathbb{I}/r$ and $Z_{1,i}=\mathbb{I}/r$ for all
$i\in\left[  r\right]  $ in the dual and by picking $Y=0$ for the primal.
\end{proof}


\section{Conclusion}

\textbf{Summary}. We have solved the classical antidistinguishability problem of finding the optimal error exponent, which we proved to be equal to the   multivariate classical Chernoff divergence of the given probability measures. To the best of our knowledge, this result constitutes the first operational interpretation of the divergence involving three or more states.
We have also given various upper and lower bounds on the optimal error exponent in the quantum case, while it still remains an open problem to compute its exact expression. 
In analogy with the classical case, we believe that the quantity that gives the exact error exponent in the quantum case should be called {\it the}  multivariate quantum Chernoff divergence.

\textbf{Future directions}.
Recall from \cite{bandyopadhyay2014conclusive} that quantum $m$-state exclusion can be thought of as antidistinguishability of a set of  states related to the original set. We leave it as an intriguing open question to determine the optimal asymptotic error exponent for quantum $m$-state exclusion.

Analogous to the task of antidistinguishing quantum states, one may consider the problem of antidistinguishing an ensemble of quantum channels. In this problem, a quantum channel is chosen randomly from a finite set of quantum channels, with known {\it a priori} probability distribution. The antidistinguisher is allowed to pass one share of a bipartite quantum state through the channel, after which both the reference system and the channel output system are measured. Based on the measurement outcome, the antidistinguisher's goal is to rule out a quantum channel other than the selected one. It would be an interesting future work to study the asymptotics of the error rates for antidistinguishing an ensemble of quantum channels.


\section*{Acknowledgments}

We are especially grateful to Mil{\'a}n Mosonyi for several clarifying discussions about quantum hypothesis testing, as well as to Kaiyuan Ji, Felix Leditzky, Vishal Singh, and Aaron Wagner for insightful discussions. We also thank the anonymous referee and the editor for many helpful comments that improved the manuscript.
HKM and MMW acknowledge support from the National Science Foundation under Grant No.~2304816.

\section*{Declaration of interests}
On behalf of all authors, the corresponding author states that there is no conflict of interest.

\section*{Data availability statement}
Data sharing not applicable to this article as no datasets were generated or analysed during the current study.

\bibliographystyle{alpha}
\bibliography{Ref}

\appendix

\section{Expectation values at non-corner points}\label{app:explicit-form-gamma-i-non-corner-points}
We begin by stating a known property of convex functions in the lemma below. We include a proof of the statement for the sake of completeness.
\begin{lemma}\label{lem-right-deriv}
Let $a>0$ be arbitrary. Let $f:[0,a] \to \mathbb{R}$ be a convex and continuous function on $[0,a]$, and suppose $f$ is differentiable on $\left(
0,a\right) $. Then the one-sided derivative%
\begin{align}\label{eq:convex-one-sided-der-at-zero}
f_{+}^{\prime}\left(  0\right)  \coloneqq \lim_{t\searrow0}\frac{f\!\left(  t\right)
-f\!\left(  0\right)  }{t}%
\end{align}
exists and fulfills
\begin{equation}
f_{+}^{\prime}\left(  0\right)  =\lim_{t\searrow0}f^{\prime}\left(  t\right)
. \label{2nd-claim}%
\end{equation}
Here $f_{+}^{\prime}\left(  0\right)$ is either finite or takes the value $-\infty$; if $f$ takes its minimum value at $0$, then $f_{+}^{\prime}\left(  0\right)$ is finite and $f_{+}^{\prime}\left(  0\right) \geq 0$. 
\end{lemma}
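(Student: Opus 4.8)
The plan is to exploit the standard monotonicity of difference quotients of a convex function. First I would record the elementary slope inequality: for any $0 < t_1 < t_2 \le a$, writing $t_1 = (1-\lambda)\cdot 0 + \lambda t_2$ with $\lambda = t_1/t_2 \in (0,1)$ and applying convexity yields $f(t_1) - f(0) \le \tfrac{t_1}{t_2}\bigl(f(t_2)-f(0)\bigr)$, hence
\begin{equation}
\frac{f(t_1)-f(0)}{t_1} \le \frac{f(t_2)-f(0)}{t_2}.
\end{equation}
Thus the map $t \mapsto \bigl(f(t)-f(0)\bigr)/t$ is nondecreasing on $(0,a]$, so its limit as $t\searrow 0$ exists and equals $\inf_{t\in(0,a]}\bigl(f(t)-f(0)\bigr)/t$. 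This infimum is bounded above by the (finite) value at any fixed $t_0$, so it lies in $[-\infty,\infty)$; this simultaneously establishes the existence of $f_+'(0)$ in \eqref{eq:convex-one-sided-der-at-zero} and the dichotomy that it is either finite or $-\infty$.

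Next I would handle the identification \eqref{2nd-claim}. The derivative $f'$ of a convex differentiable function is nondecreasing on $(0,a)$ (this follows from the same slope inequality applied to interior points, passing to limits), so $\lim_{t\searrow 0} f'(t)$ exists in $[-\infty,\infty)$ as the infimum of $f'$ over $(0,a)$. To connect it to $f_+'(0)$, I would invoke the mean value theorem: since $f$ is continuous on $[0,t]$ and differentiable on $(0,t)$ for each $t\in(0,a)$, there is a point $\xi_t \in (0,t)$ with $\bigl(f(t)-f(0)\bigr)/t = f'(\xi_t)$. As $t\searrow 0$ we have $\xi_t\searrow 0$ as well, and because $\lim_{s\searrow 0} f'(s)$ exists, $f'(\xi_t)$ converges to that same limit. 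Combining with the previous paragraph gives $f_+'(0) = \lim_{t\searrow 0} f'(t)$.

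Finally, the last assertion is immediate from the infimum representation: if $f$ attains its minimum at $0$, then $f(t)\ge f(0)$ for all $t\in[0,a]$, so every difference quotient $\bigl(f(t)-f(0)\bigr)/t$ is nonnegative, whence $f_+'(0) = \inf_{t>0}\bigl(f(t)-f(0)\bigr)/t \ge 0$, and in particular it is finite. I expect no serious obstacle here; the only point requiring care is bookkeeping in the case $f_+'(0) = -\infty$, namely checking that the monotone-limit and mean-value-theorem arguments remain valid when the common limit is $-\infty$, which they do since all statements are about limits of monotone real sequences in the extended reals.
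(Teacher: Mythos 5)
Your proposal is correct and follows essentially the same route as the paper: monotonicity of the difference quotients to get existence of $f_{+}^{\prime}(0)$ in $[-\infty,\infty)$, monotonicity of $f^{\prime}$ on $(0,a)$, and the mean value theorem to identify $f_{+}^{\prime}(0)$ with $\lim_{t\searrow 0}f^{\prime}(t)$, with the final assertion read off from the nonnegativity of the difference quotients. The only cosmetic difference is that you prove the slope inequality for the difference quotients explicitly, whereas the paper cites it from a reference.
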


\begin{proof}
The map $t \mapsto (f(t)-f(0))/t$ defined on $(0,a)$ is non-decreasing. See \cite[Section~2.1, Exercise~7]{Borwein2006}). Also, the limit in \eqref{eq:convex-one-sided-der-at-zero} exists in $\mathbb{R}\cup \{-\infty\}$ \cite[Proposition~3.1.2]{Borwein2006}.
By the Lagrange mean-value theorem, for any $t \in (0,a)$ there exists $u_t \in (0,t)$ such that
\begin{align}
    \frac{f\!\left(  t\right)
-f\!\left(  0\right)  }{t} = f^\prime (u_t).
\end{align}
This implies
\begin{align}\label{eq:convex-one-sided-der-at-zero-2}
    f_+^\prime(0)=\lim_{t \searrow 0} \frac{f\!\left(  t\right)
-f\!\left(  0\right)  }{t} = \lim_{t \searrow 0} f^\prime (u_t).
\end{align}
The derivative of $f$ is a non-decreasing function on $(0,a)$. Indeed, let $t_1, t_2 \in (0,a)$ such that $t_1 < t_2$. Let $h \in (t_1, t_2)$ be arbitrary.
We have
\begin{align}
    h &= \left(\dfrac{h-t_1}{t_2-t_1} \right) t_2 + \left(1-\dfrac{h-t_1}{t_2-t_1} \right) t_1.
\end{align}
By convexity of $f$, we get
\begin{align}
    f(h) & \leq \left(\dfrac{h-t_1}{t_2-t_1} \right) f(t_2) + \left(1-\dfrac{h-t_1}{t_2-t_1} \right) f(t_1),
\end{align}
which implies
\begin{align}
    \dfrac{f(h)-f(t_1)}{h-t_1} \leq \dfrac{f(t_2)-f(t_1)}{t_2-t_1}.
\end{align}
By taking the limit $h \searrow t_1$ in the above inequality, we get
\begin{align}\label{eq:f-prime-left}
    f^{\prime}(t_1) \leq \dfrac{f(t_2)-f(t_1)}{t_2-t_1}.
\end{align}
A similar argument shows that
\begin{align}\label{eq:f-prime-right}
    \dfrac{f(t_2)-f(t_1)}{t_2-t_1} \leq f^{\prime}(t_2).
\end{align}
By \eqref{eq:f-prime-left} and \eqref{eq:f-prime-right}, we thus get $f^\prime(t_1) \leq f^\prime(t_2)$ and hence $f^\prime$ is non-decreasing on $(0,a)$. From \eqref{eq:convex-one-sided-der-at-zero-2} and the fact that $f^\prime$ is non-decreasing, we thus get
\begin{align}
    f_+^\prime(0)=\lim_{t \searrow 0} f^\prime(t),
\end{align}
with a possible value $-\infty$.
If $f$ is minimized at $0$, then we have $f(t)-f(0) \geq 0$ for all $t\in (0,a)$. It then directly follows from the definition \eqref{eq:convex-one-sided-der-at-zero} that $f_+^\prime (0) \geq 0$.
\end{proof}

Recall that $\mathbb{T}_r^1$ is the set of non-corner points of $\mathbb{T}_r$ given by \eqref{eq:corner-point-def}.
Let $\mathbf{t} \in \mathbb{T}_r^1$. Define a set
\begin{align}
    B_{\mathbf{t}} \coloneqq \left\{  i\in\left[  r-1\right]  : t_{i}>0\right\},
\end{align}
and let $B_{\mathbf{t}}^c \coloneqq [r-1] \backslash B_{\mathbf{t}}$. Let $\beta$ denote the cardinality of the set $B_{\mathbf{t}}$ so that $1 \leq \beta \leq r-1$.
We emphasize that $\mathbf{t}$ corresponds to an interior point of $\mathbb{T}_{\beta+1}$, which is the $\beta$-vector obtained by discarding the zero entries of $\mathbf{t}$. This allows us to use the properties of exponential family of densities given in \eqref{eq:density-pt-expression}.
 So, by the similar arguments as given for \eqref{neweqn3}, it follows that for any $i \in B_{\mathbf{t}}$, the expectation value $\mathbb{E}_{\mathbf{t}} [q_i]$ exists, and it satisfies $\partial_i\! \operatorname{K} (\mathbf{t})=\mathbb{E}_{\mathbf{t}} [q_i]$.
It remains to show for $i \in B_{\mathbf{t}}^c$ that $\mathbb{E}_{\mathbf{t}} [q_i]$ exists, and it is equal to $\partial_i^+\! \operatorname{K} (\mathbf{t})$. 
Let us fix an arbitrary index $i \in B_{\mathbf{t}}^c$. Choose a small number $\varepsilon > 0$ such that $\mathbf{t}+h \mathbf{e}_i \in \mathbb{T}_r^1$ for all $h \in [0,\varepsilon]$. The function $h \mapsto \operatorname{K}(\mathbf{t}+h \mathbf{e}_i)$ is continuous, convex on $[0, \varepsilon]$, and it is differentiable on $(0,\varepsilon)$. Lemma~\ref{lem-right-deriv} thus implies that 
\begin{align}\label{eq:existence-exp-non-corner-point}
    \partial_i^+\!\operatorname{K}(\mathbf{t}) = \lim_{h \searrow 0} \partial_i\!\operatorname{K}(\mathbf{t}+h \mathbf{e}_i)=\lim_{h \searrow 0} \mathbb{E}_{\mathbf{t}+h\mathbf{e}_i}[q_i].
\end{align}
Here we used the relation $\partial_i\!\operatorname{K}(\mathbf{t}+h \mathbf{e}_i)= \mathbb{E}_{\mathbf{t}+h\mathbf{e}_i}[q_i]$ proved earlier. 
We now claim that $\mathbb{E}_{\mathbf{t}}[q_i]$ exists and satisfies
\begin{equation}
\lim_{h \searrow 0} \mathbb{E}_{\mathbf{t}+h\mathbf{e}_i}[q_i]= \mathbb{E}_{\mathbf{t}}[q_i]\label{claim-crucial}%
\end{equation}
with a possible value $-\infty$.
Indeed, we have
\begin{align}
\mathbb{E}_{\mathbf{t}+h\mathbf{e}_i}[q_i]  & =\frac{1}{\operatorname{H}\left(
\mathbf{t}+h\mathbf{e}_{i}\right)  }\int\!\!\d\mu\ q_{i}p_{r}\exp\left(
\sum_{j\in\left[  r-1\right]  }t_{j}q_{j}+hq_{i}\right) .
\end{align}
By continuity of $\operatorname{H}$, we have $\operatorname{H}\left(
\mathbf{t}+h\mathbf{e}_{i}\right)  \rightarrow \operatorname{H}\left(  \mathbf{t}\right) $ as $h\searrow0$. Thus, for \eqref{claim-crucial} to hold, it suffices to prove
\begin{equation}
\lim_{h\searrow0}\int\!\!\d\mu\ q_{i}p_{r}\exp\left(
\sum_{j\in \left[r-1\right] }t_{j}q_{j}+hq_{i}\right)=\int\!\!\d\mu\ q_{i}p_{r}\exp\left(
\sum_{j\in \left[r-1\right] }t_{j}q_{j}\right).\label{cc-2}%
\end{equation}
Let $q_i = q_{i}^+-q_{i}^-$, where $q_i^+$ and $q_i^-$ are non-negative functions with mutually disjoint supports. This gives
\begin{multline}\label{eq:pos-neg-part-qi-exp}
    \int\!\!\d\mu\ q_{i}p_{r}\exp\left(
\sum_{j\in \left[r-1\right] }t_{j}q_{j}+hq_{i}\right)  \\
    = \int\!\!\d\mu\ q_{i}^+p_{r}\exp\left(
\sum_{j\in \left[r-1\right] }t_{j}q_{j}+hq_{i}^+\right) - \int\!\!\d\mu\ q_{i}^-p_{r}\exp\left(
\sum_{j\in \left[r-1\right] }t_{j}q_{j}-hq_{i}^-\right).
\end{multline}
The first integral term in the right-hand side of \eqref{eq:pos-neg-part-qi-exp} is finite. Indeed,
 since $h \in (0,\varepsilon)$,
we have
\begin{align}
    q_{i}^+p_{r}\exp\left(
\sum_{j\in \left[r-1\right] }t_{j}q_{j}+hq_{i}^+\right) \leq
q_{i}^+p_{r}\exp\left(
\sum_{j\in \left[r-1\right] }t_{j}q_{j}+\varepsilon q_{i}^+\right).\label{eq:upper-bd-qi+}
\end{align}
The function in the right-hand side of  \eqref{eq:upper-bd-qi+} is $\mu$-integrable,
because $\mathbf{t}+\varepsilon \mathbf{e}_i$ corresponds to an interior point of
$\mathbb{T}_{r-\beta+1}$ so that the properties of an exponential family of densities apply.
 Moreover, we have pointwise convergence
\begin{align}
q_{i}^+p_{r}\exp\left(
\sum_{j\in \left[r-1\right] }t_{j}q_{j}+hq_{i}^+\right)  \rightarrow q_{i}^+p_{r}\exp\left(
\sum_{j\in \left[r-1\right] }t_{j}q_{j}\right) \qquad \text{as }h\searrow0.
\end{align}
By the Lebesgue dominated convergence theorem, we have
\begin{align}
\lim_{h \searrow 0} \int\!\!\d\mu\ q_{i}^+p_{r}\exp\left(
\sum_{j\in \left[r-1\right] }t_{j}q_{j}+hq_{i}^+\right) = \int\!\!\d\mu\ q_{i}^+p_{r}\exp\left(
\sum_{j\in \left[r-1\right] }t_{j}q_{j}\right) < \infty. \label{eq:expectation_qi+}
\end{align}
We now consider the second integral term in the right-hand side of \eqref{eq:pos-neg-part-qi-exp}.
We have the pointwise monotone convergence
\begin{align}
q_{i}^-p_{r}\exp\left(
\sum_{j\in \left[r-1\right] }t_{j}q_{j}-hq_{i}^-\right) \nearrow q_{i}^-p_{r}\exp\left(
\sum_{j\in \left[r-1\right] }t_{j}q_{j}\right), \qquad \text{as } h \searrow 0.
\end{align}
By the monotone convergence theorem, we get
\begin{align}
\lim_{h \searrow 0}\int\!\!\d\mu\ q_{i}^-p_{r}\exp\left(
\sum_{j\in \left[r-1\right] }t_{j}q_{j}-hq_{i}^-\right) = \int\!\!\d\mu\ q_{i}^-p_{r}\exp\left(
\sum_{j\in \left[r-1\right] }t_{j}q_{j}\right) \label{eq:expectation_qi-}
\end{align}
regardless of whether the right-hand integral in \eqref{eq:expectation_qi-} is finite or infinite. 
The latter
point is explicitly stressed in Theorem~16.2 of \cite{billingsley1995probability}. 
By taking the limit $h \searrow 0$ in \eqref{eq:pos-neg-part-qi-exp}, and then using \eqref{eq:existence-exp-non-corner-point}, \eqref{eq:expectation_qi+} and \eqref{eq:expectation_qi-}, we get
\begin{align}
    \partial_i^+\!K(\mathbf{t}) = \mathbb{E}_{\mathbf{t}}[q_i^+]-\mathbb{E}_{\mathbf{t}}[q_i^-]=\mathbb{E}_{\mathbf{t}}[q_i].
\end{align}
Since $\mathbb{E}_{\mathbf{t}}[q_i^+]$ is a real number,  $\mathbb{E}_{\mathbf{t}}[q_i]$ takes a value in $\mathbb{R} \cup \{-\infty\}$. If $\mathbf{t}$ is a minimizer of $\operatorname{K}$ then by Lemma~\ref{lem-right-deriv} we have $\partial_i^+\!K(\mathbf{t}) \geq 0$, and hence $\mathbb{E}_{\mathbf{t}}[q_i]$ is finite.
We have thus accomplished that if  $\mathbf{t}\in\mathbb{T}_{r}^{1}$ is
a minimizer of $\operatorname{K}$ and $i\in\lbrack r-1]$, then the expectation value
$\mathbb{E}_{\mathbf{t}}[q_{i}]$ exists, is finite,  and satisfies
$\partial_{i}^{+}\!\operatorname{K}(\mathbf{t})=\mathbb{E}_{\mathbf{t}}[q_{i}]$.

\section{Proof of Theorem~\ref{thm:optimal-classical-general}}

\label{app:general-classical-AD-error-exponent}
One of the major differences between the special case of mutually absolutely continuous probability measures and the general case of probability measures with unequal supports is that, the Hellinger transform is continuous on the entire unit simplex in the special case, but only in the interior of the unit simplex in the general case.
As it turns out, in the definition \eqref{eq:classical-CH-divergence}, restricting the infimum to be over the interior of the unit simplex does not change the value of the   multivariate classical Chernoff divergence.

Define
\begin{equation}\label{eq:def-multi-chernoff-interior}
\xi_{\operatorname{cl}}^{\circ}\left(  P_{1},\ldots,P_{r}\right)  \coloneqq  -\ln \inf_{\mathbf{s}%
\in\mathbb{S}_{r}^{\circ}}\mathbb{H}_{\mathbf{s}}\left(  P_{1},\ldots
,P_{r}\right),
\end{equation}
where $\mathbb{S}_{r}^{\circ}$ denotes the interior of the unit simplex $\mathbb{S}_{r}$.
 We will
show that $\xi_{\operatorname{cl}}^{\circ}\left(  P_{1},\ldots,P_{r}\right)$ coincides with the   multivariate classical Chernoff divergence $\xi_{\operatorname{cl}}\left(  P_{1},\ldots,P_{r}\right)$ defined in \eqref{eq:classical-CH-divergence}, and also that it is the optimal error exponent for antidistinguishing the given probability measures.

As a first step towards showing $\xi_{\operatorname{cl}}^{\circ}\left(  P_{1},\ldots,P_{r}\right)=\xi_{\operatorname{cl}}\left(  P_{1},\ldots,P_{r}\right)$, we discuss a continuous extension of the Hellinger transform defined in the interior of the unit simplex to its boundary.
Recall that $\mu$ is the dominating measure given by \eqref{eq:dominating-measure} and $p_1,\ldots, p_r$ are the induced densities defined in \eqref{eq:induced-densities}. 
 Define
 \begin{align}
     D &  \coloneqq \{\omega \in \Omega: p_i(\omega)>0, \ \forall i \in [r]\}.
 \end{align}
For any $\mathbf{s}\coloneqq(s_1,\ldots, s_r) \in \mathbb{S}_r$, define
\begin{align}
    \mathbb{H}_{\mathbf{s}}^-(P_1,\ldots, P_r) &\coloneqq \int_D\!\!\d\mu\
p_{1} ^{s_{1}}\cdots p_{r} ^{s_{r}}.
\end{align}
We note that $\mathbb{H}_{\mathbf{s}}^-(P_1,\ldots, P_r)=\mathbb{H}_{\mathbf{s}}(P_1,\ldots, P_r)$ for all $\mathbf{s} \in \mathbb{S}_r^\circ$, which implies that the map $\mathbf{s} \mapsto \mathbb{H}_{\mathbf{s}}^-(P_1,\ldots, P_r)$ in continuous in the interior of the unit simplex $\mathbb{S}_r$. Also, its continuity on the boundary of the unit simplex follows by the Lebesgue dominated convergence theorem.
It is easy to see that $\mathbb{H}_{\mathbf{s}}^-(P_1,\ldots, P_r) \leq \mathbb{H}_{\mathbf{s}}(P_1,\ldots, P_r)$ for all $\mathbf{s} \in \mathbb{S}_r$. Indeed, let 
$\mathbf{s}\coloneqq(s_1,\ldots, s_r)\in \mathbb{S}_r$ be arbitrary, set $J \coloneqq \{j \in [r]: s_j> 0\}$ and $D_J\coloneqq \{\omega \in \Omega: p_j(\omega)>0, \ \forall j \in J\}$. Since $D \subset D_J$, we get
\begin{align}
    \mathbb{H}_{\mathbf{s}}^-(P_1,\ldots, P_r) 
    &= \int_D\!\!\d\mu\
    {\displaystyle\prod\limits_{j\in J}}
    p_{j} ^{s_{j}} \\
    &\leq \int_{D_J}\!\!\d\mu\
    {\displaystyle\prod\limits_{j\in J}}
    p_{j} ^{s_{j}} \\
    &= \mathbb{H}_{\mathbf{s}}(P_1,\ldots, P_r).
\end{align}
It should be noted that the last equality follows from the convention $0^0=1$ in the definition \eqref{eq:hellinger-trans-def} of the Hellinger transform. We thus get from \eqref{eq:def-multi-chernoff-interior} that
\begin{align}
\xi_{\operatorname{cl}}^\circ\left(  P_{1},\ldots,P_{r}\right)   
    & =-\ln \inf_{\mathbf{s}%
\in\mathbb{S}_{r}^{\circ}}\mathbb{H}_{\mathbf{s}}^-\left(  P_{1},\ldots
,P_{r}\right) \\  
&=-\ln \inf_{\mathbf{s}%
\in\mathbb{S}_{r}}\mathbb{H}_{\mathbf{s}}^-\left(  P_{1},\ldots
,P_{r}\right) \\
&=-\ln \inf_{\mathbf{s}\in\mathbb{S}_{r}}\mathbb{H}_{\mathbf{s}%
}\left(  P_{1},\ldots,P_{r}\right)  \\
 &=\xi_{\operatorname{cl}}\left(  P_{1},\ldots,P_{r}\right),
 \label{eq:xi-0-equals-xi}
\end{align}
as was claimed earlier.

For an achievable bound on the optimal error exponent, we follow the same arguments as in the development \eqref{eq:upper-bound-optimal-error-n-case}--\eqref{eq:upper-bound-optimal-error-n-case-last} to get
\begin{align}
\mathrm{Err}_{\mathrm{cl}}\left(\mathcal{E}_{\operatorname{cl}}^n\right) \leq\mathbb{H}_{\mathbf{s}}\left(  P_{1},\ldots,P_{r}\right)  ^{n}, \qquad \text{for all } \  \mathbf{s} \in \mathbb{S}_r^\circ.
\end{align}
This gives
\begin{align}
     \liminf_{n \to \infty} -\dfrac{1}{n} \ln \operatorname{Err}_{\operatorname{cl}} (\mathcal{E}_{\operatorname{cl}}^{n}) 
     & \geq -\ln \inf_{\mathbf{s}\in\mathbb{S}_r^\circ}  \mathbb{H}_{\mathbf{s}}(P_1,\ldots,P_r)=\xi_{\operatorname{cl}}^\circ(P_1,\ldots,P_r)=\xi_{\operatorname{cl}}(P_1,\ldots,P_r).\label{eq:achievability-error-exponent-general}
\end{align}

We now present an optimality bound on the optimal error exponent. The case $\mu(D)=0$ is trivial, since the antidistinguishability error probability is equal to zero in this case. So, we assume $\mu\left(  D\right)  >0$.
Define
\begin{align}\label{eq:def-alpha-i}
    \alpha_i \coloneqq P_i(D)>0, \qquad i \in [r]
\end{align}
and {\it conditional probability densities} on the set $D$ as
\begin{align}\label{eq:def-p-i-tilde}
    \tilde{p}_i \coloneqq \dfrac{p_i}{\alpha_i}, \qquad i \in [r].
\end{align}
Observe that the conditional probability densities $\tilde{p}_1,\ldots, \tilde{p}_r$ correspond to mutually absolutely continuous probability measures $P_1/\alpha_1,\ldots, P_r/\alpha_r$ on the set $D$. 
Let us define for $\mathbf{t}\coloneqq (t_1,\ldots, t_{r-1})\in \mathbb{T}_r$ and $\mathbf{s}_{\mathbf{t}}=(t_1,\ldots, t_{r-1}, 1-\sum_{i\in[r-1]}t_i) \in \mathbb{S}_r$,
\begin{align}
    \operatorname{H}^-(\mathbf{t}) &\coloneqq \mathbb{H}^-_{\mathbf{s}_{\mathbf{t}}}(P_1,\ldots,P_r), \label{eq:def-H-minus} \\
    \operatorname{K}^-(\mathbf{t}) &\coloneqq \ln \operatorname{H}^-(\mathbf{t}).\label{eq:def-K-minus}
\end{align}
From \eqref{eq:def-p-i-tilde} we get,
\begin{align}
    \operatorname{H}^-(\mathbf{t}) 
        &=  \int_D\!\!\d\mu\ p_1^{t_1}\cdots p_{r-1}^{t_{r-1}} p_r^{1-\sum_{i\in[r-1]}t_i} \\
        &= \alpha_1^{t_1}\cdots \alpha_{r-1}^{t_{r-1}} \alpha_r^{1-\sum_{i\in[r-1]}t_i}  \int_D\!\!\d\mu\ \tilde{p}_1^{t_1}\cdots \tilde{p}_{r-1}^{t_{r-1}} \tilde{p}_r^{1-\sum_{i\in[r-1]}t_i}. \label{eq:K-minus-expansion}
\end{align}
Define
\begin{align}
    \tilde{\operatorname{H}}(\mathbf{t}) 
        &\coloneqq  \int_D\!\!\d\mu\ \tilde{p}_1^{t_1}\cdots \tilde{p}_{r-1}^{t_{r-1}} \tilde{p}_r^{1-\sum_{i\in[r-1]}t_i}, \label{eq:def-H-tilde} \\
    \tilde{\operatorname{K}}(\mathbf{t}) 
        &\coloneqq \ln \tilde{\operatorname{H}}(\mathbf{t}). \label{eq:def-K-tilde}
\end{align}
From the above development \eqref{eq:def-H-minus}--\eqref{eq:def-K-tilde}, we get
\begin{align}\label{eq:K(minus)-K(tilde)-relation}
    \operatorname{K}^-(\mathbf{t}) = \tilde{\operatorname{K}}(\mathbf{t}) + \operatorname{L}(\mathbf{t}), \qquad \text{for all } \mathbf{t} \in \mathbb{T}_r,
\end{align}
where $\operatorname{L}$ is a linear map defined on $\mathbb{T}_r$ as
\begin{align}
    \operatorname{L}(\mathbf{t}) \coloneqq \ln \alpha_r+\sum_{i\in [r-1]} t_i \ln \left(\dfrac{\alpha_i}{\alpha_r}\right).
\end{align}
Analogous to \eqref{eq:density-pt-expression}, we define an exponential family of densities on $D$ as
\begin{align}
    \tilde{p}_{\mathbf{t}} \coloneqq \dfrac{1}{\tilde{\operatorname{H}}(\mathbf{t})} \tilde{p}_1^{t_1}\cdots \tilde{p}_{r-1}^{t_{r-1}} \tilde{p}_r^{1-\sum_{i\in[r-1]}t_i}.
\end{align}
Recall from the definition of the antidistinguishability error probability \eqref{eq:optimal-AD-error-probability} that we only need to work with the probability densities $p_1,\ldots, p_r$ restricted to the set $D$; i.e., we have for all positive integers $n$,
\begin{align}
\operatorname{Err}_{\operatorname{cl}}(\mathcal{E}_{\operatorname{cl}}^{n}) =  \int_D\!\!\d\mu^{\otimes n}\ \left(\eta_1 p_{1}^{\otimes n}\land \cdots \land \eta_r p_{r}^{\otimes n}\right).
\end{align}
Define 
\begin{equation}
\tilde{\mathbb{T}}_{r,f}^{1}\coloneqq \left\{  \mathbf{t}\in\mathbb{T}_{r}^{1}:\partial
_{i}^{+}\tilde{\operatorname{K}}(\mathbf{t})\neq-\infty, \ \forall i\in\left[  r-1\right]  \right\}
\label{Trf-tilde-definition}%
\end{equation}
analogous to \eqref{Trf-definition}.
Let $\omega^{n}\coloneqq (\omega_1,\ldots,\omega_n) \in D^{n}$ and $\mathbf{t} \in \tilde{\mathbb{T}}_{r,f}^{1}$ be arbitrary. We have that
\begin{align}
    p_i^{\otimes n} (\omega^n)&= \exp\! \left(n \tilde{G}^{(i)}_{\mathbf{t},n}(\omega^n)\right)\tilde{p}_{\mathbf{t}}^{\otimes n} (\omega^n),\label{eq:tensored-tilde-density-expression}
\end{align}
where
\begin{align}
\tilde{G}^{(i)}_{\mathbf{t},n}(\omega^n) &\coloneqq \dfrac{1}{n}  \sum_{j\in[n]} \ln \dfrac{\tilde{p}_i}{\tilde{p}_{\mathbf{t}}} (\omega_j) + \ln \alpha_i \qquad \text{for } i\in[r].\label{eq:def-G-i-tilde-2}
\end{align}
By following similar arguments as given in the development \eqref{eq:tensored-density-expression}--\eqref{eq:classical-error-exp-upp-bound-gamma-i}, we get
\begin{align}
    \limsup_{n \to \infty} - \dfrac{1}{n} \ln \operatorname{Err}_{\operatorname{cl}}(\mathcal{E}_{\operatorname{cl}}^n) \leq  -\min_{1\leq i \leq r} \left(\tilde{\gamma}_{i}(\mathbf{t})+\ln \alpha_i \right),\label{eq:classical-error-exp-upp-bound-gamma-i-tilde}
\end{align}
such that, similar to \eqref{eq:gamma-i-compact-form},
\begin{align}\label{eq:gamma-tilde-i-compact-form}
    \tilde{\gamma}_i (\mathbf{t}) =
    \begin{cases}
        \partial_i^+\tilde{\operatorname{K}}(\mathbf{t}) - \mathbf{t}^T \nabla^+ \tilde{\operatorname{K}}(\mathbf{t}) + \tilde{\operatorname{K}}(\mathbf{t}), & i\in [r-1], \\
        - \mathbf{t}^T \nabla^+ \tilde{\operatorname{K}}(\mathbf{t}) + \tilde{\operatorname{K}}(\mathbf{t}), & i=r,
    \end{cases} \qquad \text{for }  \mathbf{t} \in \tilde{\mathbb{T}}_{r,f}^{1}.
\end{align}
From \eqref{eq:K(minus)-K(tilde)-relation}, we have for $\mathbf{t}\in\mathbb{T}_{r}^{1}$,
\begin{align}
    \partial_{i}^{+}\tilde{\operatorname{K}}\!\left(  \mathbf{t}\right)  =\partial_{i}^{+}%
\operatorname{K}^{-}\!\left(  \mathbf{t}\right)  -\ln\!\left(  \frac{\alpha_{i}}{\alpha_{r}%
}\right)\label{eq:Kminus-Ktilde-der-relation}
\end{align}
and hence
\begin{align}
\mathbf{t}^{T}\nabla^{+}\tilde{\operatorname{K}}\!\left(  \mathbf{t}\right)  =\mathbf{t}%
^{T}\nabla^{+} \operatorname{K}^{-}\!\left(  \mathbf{t}\right)  -\operatorname{L}\!\left(  \mathbf{t}\right)
+\ln\alpha_{r}.\label{eq:gradient-Ktilde-Kminus-relation}
\end{align}
Substituting the relations \eqref{eq:Kminus-Ktilde-der-relation} and \eqref{eq:gradient-Ktilde-Kminus-relation} in \eqref{eq:gamma-tilde-i-compact-form}, we obtain
\begin{align}\label{eq:gamma-tilde-i-compact-form-2}
    \tilde{\gamma}_{i}\left(  \mathbf{t}\right)  +\ln\alpha_{i} =
    \begin{cases}
       \partial_{i}^{+}\operatorname{K}^{-}\!\left(  \mathbf{t}\right)  -\mathbf{t}^{T}\nabla
^{+}\operatorname{K}^{-}\!\left(  \mathbf{t}\right)  +\operatorname{K}^{-}\!\left(  \mathbf{t}\right) , & i\in [r-1], \\
     -\mathbf{t}^{T}\nabla^{+}\!\operatorname{K}^{-}\!\left(  \mathbf{t}\right)  +\operatorname{K}^{-}\!\left(
\mathbf{t}\right)   , & i=r,
    \end{cases} \qquad \text{for }  \mathbf{t} \in \tilde{\mathbb{T}}_{r,f}^{1}.
\end{align}
As the function $\operatorname{L}$ in \eqref{eq:K(minus)-K(tilde)-relation} is linear and $\tilde{\operatorname{K}}$ is convex on
$\mathbb{T}_{r}$, we conclude that $\operatorname{K}^{-}$ is convex as well, and it inherits
the smoothness properties of $\tilde{\operatorname{K}}$.
One can now essentially follow the steps described in the development \eqref{eqn4}--\eqref{eq:reparametrization-minimizer} to show that there is no loss in assuming that 
there exists a minimizer $\mathbf{t}^* \in \tilde{\mathbb{T}}_{r,f}^{1}$ of $\operatorname{K}^-$, and that it satisfies
\begin{align}
     -\min_{1\leq i \leq r} \left(\tilde{\gamma}_{i}(\mathbf{t}^*)+\ln \alpha_i \right) \leq -\operatorname{K}^-(\mathbf{t}^*).\label{eq:gamma-tilde-K-inequality}
\end{align}
Combining \eqref{eq:classical-error-exp-upp-bound-gamma-i-tilde} and \eqref{eq:gamma-tilde-K-inequality}, we thus get
\begin{align}
    \limsup_{n \to \infty} - \dfrac{1}{n} \ln \operatorname{Err}_{\operatorname{cl}}(\mathcal{E}_{\operatorname{cl}}^n) \leq \sup_{\mathbf{t} \in \mathbb{T}_r} -\operatorname{K}^-(\mathbf{t}) = \xi_{\operatorname{cl}}^\circ(P_1,\ldots,P_r)=\xi_{\operatorname{cl}}(P_1,\ldots,P_r). \label{eq:optimality-error-exponent-general}
\end{align}
\sloppy
This completes the optimality part. It follows from \eqref{eq:achievability-error-exponent-general} and \eqref{eq:optimality-error-exponent-general} that the limit $\lim_{n \to \infty} -\frac{1}{n}\ln \operatorname{Err}_{\operatorname{cl}}(\mathcal{E}_{\operatorname{cl}}^n)$ exists and the desired equality \eqref{eq:classical-asymptotic-error-rate-general} holds.

\section{Proof of Equation~\eqref{eq:pure-state-identity-TD}}

\label{app:pure-state-identity-TD}

\begin{proposition}
\label{prop:pure-state-identity-TD}
For arbitrary (not necessarily normalized)\ vectors $|\varphi\rangle
,|\zeta\rangle\in\mathcal{H}$, the following equality holds:%
\begin{equation}\label{eq:trace-norm-vectors-formula}
\left\Vert |\varphi\rangle\!\langle\varphi|-|\zeta\rangle\!\langle\zeta
|\right\Vert _{1}^{2}=\left(  \langle\varphi|\varphi\rangle+\langle\zeta
|\zeta\rangle\right)  ^{2}-4\left\vert \langle\zeta|\varphi\rangle\right\vert
^{2}.
\end{equation}

\end{proposition}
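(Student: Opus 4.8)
The plan is to exploit that $A \coloneqq |\varphi\rangle\!\langle\varphi| - |\zeta\rangle\!\langle\zeta|$ is a Hermitian operator whose range is contained in the (at most) two-dimensional space spanned by $|\varphi\rangle$ and $|\zeta\rangle$, so it has at most two nonzero eigenvalues $\lambda_1, \lambda_2$. Since the trace norm of a Hermitian operator equals the sum of the absolute values of its eigenvalues, I would reduce the entire computation to pinning down $\lambda_1$ and $\lambda_2$ through their first two power sums, which are easy to evaluate directly.

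First I would compute the relevant traces. For the first power, $\operatorname{Tr}[A] = \langle\varphi|\varphi\rangle - \langle\zeta|\zeta\rangle = \lambda_1 + \lambda_2$. For the second power, expanding $A^2$ and taking the trace (using $\operatorname{Tr}[|\varphi\rangle\!\langle\varphi|\zeta\rangle\!\langle\zeta|] = |\langle\zeta|\varphi\rangle|^2$, etc.) gives $\operatorname{Tr}[A^2] = \langle\varphi|\varphi\rangle^2 + \langle\zeta|\zeta\rangle^2 - 2|\langle\zeta|\varphi\rangle|^2 = \lambda_1^2 + \lambda_2^2$. Combining these via the Newton identity $2\lambda_1\lambda_2 = (\lambda_1+\lambda_2)^2 - (\lambda_1^2+\lambda_2^2)$ yields the product $\lambda_1\lambda_2 = |\langle\zeta|\varphi\rangle|^2 - \langle\varphi|\varphi\rangle\langle\zeta|\zeta\rangle$.

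The key observation is that, by the Cauchy--Schwarz inequality, this product is nonpositive, so the two eigenvalues cannot share the same sign; hence $|\lambda_1\lambda_2| = -\lambda_1\lambda_2$. Then $\Vert A\Vert_1^2 = (|\lambda_1|+|\lambda_2|)^2 = (\lambda_1^2+\lambda_2^2) + 2|\lambda_1\lambda_2| = (\lambda_1^2+\lambda_2^2) - 2\lambda_1\lambda_2$, and substituting the two expressions obtained above gives exactly $(\langle\varphi|\varphi\rangle + \langle\zeta|\zeta\rangle)^2 - 4|\langle\zeta|\varphi\rangle|^2$, as claimed.

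I expect the only delicate point to be the sign bookkeeping, namely verifying that $|\lambda_1\lambda_2| = -\lambda_1\lambda_2$ holds uniformly, including the degenerate cases where $|\varphi\rangle$ and $|\zeta\rangle$ are linearly dependent (so $A$ has rank at most one and an eigenvalue vanishes) or where the two eigenvalues happen to coincide. In each such case the Cauchy--Schwarz bound still forces $\lambda_1\lambda_2 \le 0$, so the identity $|\lambda_1\lambda_2| = -\lambda_1\lambda_2$ and hence the final formula remain valid without any separate case analysis.
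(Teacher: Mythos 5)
Your proposal is correct, and it takes a genuinely different route from the paper's. The paper normalizes the vectors, expands $|\zeta\rangle$ in an explicit orthonormal basis $\{|\varphi'\rangle,|\varphi^\perp\rangle\}$ of the two-dimensional span, writes $A$ as an explicit $2\times 2$ matrix, solves for the eigenvalues via the quadratic formula, and then verifies by a chain of inequalities (after assuming without loss of generality $c\geq d$, and after disposing of the zero-vector case separately at the outset) that the two eigenvalues have opposite signs before summing their absolute values. Your argument sidesteps the diagonalization entirely: the identities $\operatorname{Tr}[A]=\lambda_1+\lambda_2$ and $\operatorname{Tr}[A^2]=\lambda_1^2+\lambda_2^2$ (both of which you compute correctly), Newton's identity for $\lambda_1\lambda_2=|\langle\zeta|\varphi\rangle|^2-\langle\varphi|\varphi\rangle\langle\zeta|\zeta\rangle$, and Cauchy--Schwarz for the sign of the product do all the work, and the final substitution
\begin{equation}
(|\lambda_1|+|\lambda_2|)^2=\lambda_1^2+\lambda_2^2-2\lambda_1\lambda_2
=\left(\langle\varphi|\varphi\rangle+\langle\zeta|\zeta\rangle\right)^2-4|\langle\zeta|\varphi\rangle|^2
\end{equation}
checks out. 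What your approach buys is a basis-free, case-free computation: the degenerate situations (a zero vector, linearly dependent vectors, a rank-one $A$) are absorbed automatically because the power-sum identities and the Cauchy--Schwarz bound $\lambda_1\lambda_2\leq 0$ hold uniformly, whereas the paper must treat the zero-vector case separately and carry out explicit sign bookkeeping. What the paper's approach buys is the explicit eigenvalues themselves, which can be informative but are not needed for the stated identity. The only point worth making fully explicit in a write-up is that the span of $|\varphi\rangle$ and $|\zeta\rangle$ is an invariant subspace on whose orthogonal complement $A$ vanishes, so that $\lambda_1,\lambda_2$ (the eigenvalues of the restriction, with multiplicity, padded with zeros if the span has dimension less than two) exhaust the nonzero spectrum and $\|A\|_1=|\lambda_1|+|\lambda_2|$; this is immediate but is the one structural fact your power-sum bookkeeping relies on.
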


\begin{proof}
The equality \eqref{eq:trace-norm-vectors-formula} trivially holds if one of the vectors is zero. So, we assume that both $|\varphi\rangle$ and $|\zeta\rangle$ are non-zero vectors.
Define%
\begin{equation}
|\varphi^{\prime}\rangle\coloneqq \frac{|\varphi\rangle}{\left\Vert |\varphi
\rangle\right\Vert},\qquad|\zeta^{\prime}\rangle\coloneqq \frac{|\zeta\rangle
}{\left\Vert |\zeta\rangle\right\Vert}.
\end{equation}
Then the desired equality is equivalent to%
\begin{equation}
\left\Vert c|\varphi^{\prime}\rangle\!\langle\varphi^{\prime}|-d|\zeta^{\prime
}\rangle\!\langle\zeta^{\prime}|\right\Vert _{1}^{2}=\left(  c+d\right)
^{2}-4cd\left\vert \langle\zeta^{\prime}|\varphi^{\prime}\rangle\right\vert
^{2},
\end{equation}
where%
\begin{equation}
c\coloneqq \left\Vert |\varphi\rangle\right\Vert^{2},\qquad d\coloneqq \left\Vert
|\zeta\rangle\right\Vert^{2}.
\end{equation}
Defining $|\varphi^{\perp}\rangle$ to be the unit vector orthogonal to
$|\varphi^{\prime}\rangle$ in $\operatorname{span}\left\{  |\varphi^{\prime
}\rangle,|\zeta^{\prime}\rangle\right\}  $, we find that%
\begin{equation}
|\zeta^{\prime}\rangle=\cos(  \theta)  |\varphi^{\prime}%
\rangle+\sin(\theta)|\varphi^{\perp}\rangle,
\end{equation}
where%
\begin{equation}
\cos(  \theta)  =\langle\varphi^{\prime}|\zeta^{\prime}\rangle.
\end{equation}
Then it follows that%
\begin{align}
& c|\varphi^{\prime}\rangle\!\langle\varphi^{\prime}|-d|\zeta^{\prime}%
\rangle\!\langle\zeta^{\prime}|\nonumber\\
& =c|\varphi^{\prime}\rangle\!\langle\varphi^{\prime}|-d\left(  \cos(
\theta)  |\varphi^{\prime}\rangle+\sin(\theta)|\varphi^{\perp}%
\rangle\right)  \left(  \cos(  \theta)  \langle\varphi^{\prime
}|+\sin(\theta)\langle\varphi^{\perp}|\right)  \\
& =\left[  c-d\cos^{2}(  \theta)  \right]  |\varphi^{\prime}%
\rangle\!\langle\varphi^{\prime}|-d\sin(\theta)\cos(  \theta)
|\varphi^{\perp}\rangle\!\langle\varphi^{\prime}|\nonumber\\
& \qquad -d\sin(\theta)\cos(  \theta)  |\varphi^{\prime}\rangle
\langle\varphi^{\perp}|-d\sin^{2}(\theta)|\varphi^{\perp}\rangle\!\langle
\varphi^{\perp}|.
\end{align}
As a matrix with respect to the basis $\left\{
|\varphi^{\prime}\rangle,|\varphi^{\perp}\rangle\right\}$, the last line has
the following form:%
\begin{equation}%
\begin{bmatrix}
c-d\cos^{2}(  \theta)   & -d\sin(\theta)\cos(  \theta)
\\
-d\sin(\theta)\cos(  \theta)   & -d\sin^{2}(\theta)
\end{bmatrix}
,
\end{equation}
and this matrix has the following eigenvalues:%
\begin{align}
\lambda_{1}  & =\frac{1}{2}\left(  c-d+\sqrt{\left(  c+d\right)  ^{2}%
-4cd\cos^{2}(  \theta)  }\right)  ,\\
\lambda_{2}  & =\frac{1}{2}\left(  c-d-\sqrt{\left(  c+d\right)  ^{2}%
-4cd\cos^{2}(  \theta)  }\right)  .
\end{align}
Note that $c\geq0$ and $d\geq0$. Without loss of generality, suppose that
$c\geq d$. Then%
\begin{align}
0  & \leq4cd\sin^{2}(\theta)\\
& =4cd\left(  1-\cos^{2}(\theta)\right)  \\
\Rightarrow\qquad-2cd  & \leq2cd-4cd\cos^{2}(\theta)\\
\Rightarrow\qquad c^{2}-2cd+d^{2}  & \leq c^{2}+2cd+d^{2}-4cd\cos^{2}%
(\theta)\\
\Rightarrow\qquad\left(  c-d\right)  ^{2}  & \leq\left(  c+d\right)
^{2}-4cd\cos^{2}(\theta)\\
\Rightarrow\qquad c-d  & \leq\sqrt{\left(  c+d\right)  ^{2}-4cd\cos^{2}%
(\theta)}.
\end{align}
Then it follows that the square of the trace norm of $c|\varphi^{\prime
}\rangle\!\langle\varphi^{\prime}|-d|\zeta^{\prime}\rangle\!\langle\zeta^{\prime
}|$ is given by
\begin{align}
& \left\Vert c|\varphi^{\prime}\rangle\!\langle\varphi^{\prime}|-d|\zeta
^{\prime}\rangle\!\langle\zeta^{\prime}|\right\Vert _{1}^{2}\nonumber\\
& =\left(  \left\vert \lambda_{1}\right\vert +\left\vert \lambda
_{2}\right\vert \right)  ^{2}\\
& =\left(  \frac{1}{2}\left(  c-d+\sqrt{\left(  c+d\right)  ^{2}-4cd\cos
^{2}(  \theta)  }\right)  -\frac{1}{2}\left(  c-d-\sqrt{\left(
c+d\right)  ^{2}-4cd\cos^{2}(  \theta)  }\right)  \right)  ^{2}\\
& =\left(  c+d\right)  ^{2}-4cd\cos^{2}(  \theta)  ,
\end{align}
concluding the proof.
\end{proof}

\section{Proof of Proposition~\ref{prop:optimal-error-exponent-ch-divergence}}\label{app:proof-optimal-error-exponent-ch-divergence}

To prove the data-processing inequality, let $\mathcal{N}$ be any quantum channel. We denote by $\mathcal{N}(\mathcal{E})$ the ensemble $\{(\eta_i, \mathcal{N}(\rho_i)): i \in [r]\}$, which results from applying the channel $\mathcal{N}$ to each state in $\mathcal{E}$.
The optimal antidistinguishability error probability for the ensemble $\operatorname{Err}(\mathcal{E})$ is not more than that for the ensemble $\mathcal{N}(\mathcal{E})$. To see this, let $\mathscr{M}=\{M_1,\ldots, M_r\}$ be any POVM. We have
\begin{align}
     \operatorname{Err}(\mathscr{M};\mathcal{N}(\mathcal{E})) 
    &= \sum_{i \in [r]} \eta_i \Tr[M_i \mathcal{N}(\rho_i)] \\
    &= \sum_{i \in [r]} \eta_i \Tr[\mathcal{N}^{\dagger}(M_i) \rho_i] \\
    &\geq \operatorname{Err}(\mathcal{E}).\label{eq:measurement-channel-quantum-error}
\end{align}
The inequality \eqref{eq:measurement-channel-quantum-error} follows because $\{\mathcal{N}^{\dagger}(M_1),\ldots, \mathcal{N}^{\dagger}(M_r)\}$ is a POVM.
Since \eqref{eq:measurement-channel-quantum-error} holds for every POVM $\mathscr{M}$, we have
\begin{align}
    \operatorname{Err}(\mathcal{E}) \leq \operatorname{Err}(\mathcal{N}(\mathcal{E})).
    \label{eq:optimal-error-post-processing}
\end{align}
Therefore, for all $n\in \mathbb{N}$, we get
\begin{align}
    -\dfrac{1}{n} \ln \operatorname{Err}(\mathcal{E}^n) \geq -\dfrac{1}{n} \ln \operatorname{Err}(\mathcal{N}(\mathcal{E})^n),
\end{align}
which implies
\begin{align}
    \operatorname{E}(\rho_1,\ldots, \rho_r) \geq \operatorname{E}(\mathcal{N}(\rho_1),\ldots, \mathcal{N}(\rho_r)).
\end{align}

Now, assume that the states in the given ensemble commute with each other. The following arguments show that the optimal error of antidistinguishing the given states is equal to that of the induced probability measures. Let $P_1,\ldots, P_r$ be the probability measures on the discrete space $[\dim(\mathcal{H})]$ induced by the states in a common eigenbasis as defined in \eqref{eq:comm-states-prob-meaures}, and let $\mathcal{E}_{\operatorname{cl}}$ be the classical ensemble $\{(\eta_i, P_i): i \in [r]\}$.
Suppose $p_1,\ldots, p_r$ are the corresponding densities of the probability measures with respect to the counting measure $\mu$. 
This gives the following representation of each state
\begin{align}
    \rho_i = \int_{[\dim(\mathcal{H})]}\!\!\d\mu(\omega)\ p_i(\omega) |\omega \rangle\!\langle \omega|, \qquad i \in [r].\label{eq:commuting-states-representation}
\end{align}
We have
\begin{align}
    \operatorname{Err}(\mathscr{M}; \mathcal{E})
    &= \sum_{i \in [r]} \eta_i \Tr[M_i \rho_i]\label{eq:quantum-classical-error-equality-1} \\
    &= \sum_{i \in [r]} \eta_i \Tr\!\left[M_i \left(\int_{[\dim(\mathcal{H})]}\!\!\d\mu(\omega)\ p_i(\omega) |\omega \rangle\!\langle \omega|\right)\right] \\
    &=\int_{[\dim(\mathcal{H})]}\!\!\d\mu(\omega)\ \sum_{i \in [r]} \langle \omega|M_i|\omega \rangle \eta_i p_i(\omega)\\
    &= \operatorname{Err}_{\operatorname{cl}}(\delta; \mathcal{E}_{\operatorname{cl}}), \label{eq:quantum-classical-error-equality-2}
\end{align}
where $\delta$ is the decision rule given by $\delta(\omega)\coloneqq (\langle \omega | M_1|\omega \rangle, \ldots, \langle \omega | M_r|\omega \rangle)$. We note here that for any POVM $\mathscr{M}$, there corresponds a decision rule $\delta$ that satisfies \eqref{eq:quantum-classical-error-equality-1}--\eqref{eq:quantum-classical-error-equality-2}. Conversely, given any decision rule $\delta$ for antidistinguishing the classical ensemble $\mathcal{E}_{\operatorname{cl}}$ there corresponds a POVM $\mathscr{M}=\{M_1,\ldots, M_r\}$, given by
\begin{align}
    M_i \coloneqq \int_{[\dim(\mathcal{H})]}\!\!\d\mu(\omega)\ \delta_i(\omega) |\omega \rangle\!\langle \omega|,
\end{align}
that satisfies  \eqref{eq:quantum-classical-error-equality-1}--\eqref{eq:quantum-classical-error-equality-2}. 
This then implies
\begin{align}
    \inf_{\mathscr{M}} \operatorname{Err}(\mathscr{M}; \mathcal{E}) = \inf_{\delta} \operatorname{Err}(\delta; \mathcal{E}_{\operatorname{cl}}),
\end{align}
where the infima are taken over all POVMs $\mathscr{M}$ and decision rules $\delta$ corresponding to the given quantum and classical ensembles, respectively.
We have thus proved that
\begin{align}
     \operatorname{Err}(\mathcal{E}) =  \operatorname{Err}_{\operatorname{cl}}(\mathcal{E}_{\operatorname{cl}}),\label{eq:optimal-error-equality-classical-quantum}
\end{align}
which directly implies
\begin{align}
    \operatorname{E}(\rho_1,\ldots, \rho_r) = \operatorname{E}_{\operatorname{cl}}(P_1,\ldots, P_r).\label{eq:equality-quantum-classical-error-exponent}
\end{align}

\section{Proof of Proposition~\ref{prop:minimal-ch-divergence}}

\label{app:proof-prop:minimal-ch-divergence}
Define a map $\xi^{\prime}: \mathcal{D}^r \to [0,\infty]$ by
\begin{align}
    \xi^{\prime}(\rho_1,\ldots, \rho_r)\coloneqq \sup_{\mathcal{M}} \xi_{\operatorname{cl}}(P^{\mathcal{M}}_1,\ldots, P^{\mathcal{M}}_r)
\end{align}
as given in the right-hand side of \eqref{eq:min-CH-optim-expr}.
We first show that $\xi^{\prime}$ is a lower bound on any multivariate Chernoff divergence. Let $\xi: \mathcal{D}^r \to [0,\infty]$ be any  multivariate quantum Chernoff divergence and $\rho_1,\ldots,\rho_r$ be arbitrary quantum states. For any measurement channel $\mathcal{M}$, we have
\begin{align}
    \xi(\rho_1,\ldots, \rho_r) 
    &\geq \xi(\mathcal{M}(\rho_1),\ldots, \mathcal{M}(\rho_r)) = \xi_{\operatorname{cl}}(P_1^{\mathcal{M}},\ldots, P_r^{\mathcal{M}}).\label{eq:minimal-hellinger-inequality}
\end{align}
Here we used the assumptions that $\xi$ satisfies the data-processing inequality and reduces to the multivariate classical Chernoff divergence for commuting states. Since the inequality \eqref{eq:minimal-hellinger-inequality} holds for an arbitrary measurement channel $\mathcal{M}$,  taking the supremum over $\mathcal{M}$ gives
\begin{align}
    \xi(\rho_1,\ldots, \rho_r) \geq  \xi^{\prime}(\rho_1,\ldots,\rho_r).
\end{align}

We now show that $\xi^{\prime}$ is a multivariate quantum Chernoff divergence; i.e., it satisfies the data processing inequality and reduces to the multivariate classical Chernoff divergence for commuting states.
Consider a quantum channel $\mathcal{N}$ and any measurement channel $\mathcal{M}$ corresponding to a POVM $\{M_1,\ldots, M_t\}$ on the output Hilbert space of the channel $\mathcal{N}$. Let $\mathcal{M}_{\mathcal{N}}$ be the measurement channel corresponding to the POVM $\{\mathcal{N}^{\dagger}(M_1),\ldots, \mathcal{N}^{\dagger}(M_t)\}$.
Let $P_1^{\mathcal{M}_{\mathcal{N}}},\ldots, P_r^{\mathcal{M}_{\mathcal{N}}}$ denote the probability measures induced by $\mathcal{M}_{\mathcal{N}}$ corresponding to the states $\rho_1,\ldots, \rho_r$ as given in the development \eqref{eq:measurement-channel}--\eqref{eq:probability-measure-measurement-channel}.
Similarly, let $Q_1^{\mathcal{M}},\ldots, Q_r^{\mathcal{M}}$ denote the probability measures induced by $\mathcal{M}$ corresponding to the states $\mathcal{N}(\rho_1),\ldots, \mathcal{N}(\rho_r)$.
Since $\operatorname{Tr}[M_j \mathcal{N}(\rho_i)] = \operatorname{Tr}[\mathcal{N}^{\dag}(M_j) (\rho_i)]$ for all $i,j$, it follows that $Q_i^{\mathcal{M}}=P_i^{\mathcal{M}_{\mathcal{N}}}$ for $i \in [r]$.
This implies
\begin{align}
    \xi^{\prime}(\mathcal{N}(\rho_1),\ldots,\mathcal{N}(\rho_r))
    &= \sup_{\mathcal{M}} \xi_{\operatorname{cl}}(Q_1^{\mathcal{M}},\ldots, Q_r^{\mathcal{M}}) \\
    &= \sup_{\mathcal{M}} \xi_{\operatorname{cl}}(P_1^{\mathcal{M}_{\mathcal{N}}},\ldots, P_r^{\mathcal{M}_{\mathcal{N}}}) \\
    &\leq \xi^{\prime}(\rho_1,\ldots,\rho_r),
\end{align}
which means that $\xi^{\prime}$ satisfies the data-processing inequality. In the case when the states $\rho_1,\ldots, \rho_r$ commute, Theorem~\ref{thm:optimal-classical-general} and Proposition~\ref{prop:optimal-error-exponent-ch-divergence} give the following classical data-processing inequality
\begin{align}
    \xi_{\operatorname{cl}}(\rho_1,\ldots, \rho_r) 
    &\geq \xi_{\operatorname{cl}}(P_1^{\mathcal{M}},\ldots, P_r^{\mathcal{M}}). \label{eq:classical-data-processing-minimal-xi}
\end{align}
Also, the inequality in \eqref{eq:classical-data-processing-minimal-xi} is saturated for the measurement channel corresponding to a common eigenbasis of the commuting states. Therefore, we get
\begin{align}
    \xi^{\prime}(\rho_1,\ldots,\rho_r)= \xi_{\operatorname{cl}}(\rho_1,\ldots,\rho_r).
\end{align}
We thus conclude that $\xi^{\prime}$ is the minimal multivariate quantum Chernoff divergence.


\section{Proof of Proposition~\ref{prop:maximal-ch-divergence}}

\label{app:proof-prop:maximal-ch-divergence}

Define a map $\xi^{\prime \prime}: \mathcal{D}^r \to [0,\infty]$ by
\begin{align}
    \xi^{\prime \prime}(\rho_1,\ldots, \rho_r)\coloneqq \inf_{\substack{(\mathcal{P}, \{P_i\}_{i\in [r]})  }} \left\{\xi_{\operatorname{cl}}(P_1,\ldots,P_r) :\mathcal{P}(P_i)=\rho_i \quad \text{for all } i \in [r] \right\},
\end{align}
as given in the right-hand side of \eqref{eq:max-CH-div}.
We first show that $\xi^{\prime \prime}$ is an upper bound on any multivariate Chernoff divergence. Let $\xi: \mathcal{D}^r \to [0,\infty]$ be any  multivariate quantum Chernoff divergence and $\rho_1,\ldots,\rho_r$ be arbitrary quantum states. 
 Given a preparation channel $\mathcal{P}$ and probability measures $P_1,\ldots, P_r$ satisfying
\begin{align}
    \mathcal{P}(P_i)=\rho_i,\qquad \text{for }i\in [r],
     \label{eq:prepare-ensemble}
\end{align}
we have
\begin{align}
    \xi(\rho_1,\ldots,\rho_r) = \xi(\mathcal{P}(P_1),\ldots,\mathcal{P}(P_r)) \leq \xi_{\operatorname{cl}}(P_1,\ldots,P_r).\label{eq:maximal-hellinger-inequality}
\end{align}
In \eqref{eq:maximal-hellinger-inequality},
we used the assumptions that $\xi$ satisfies the data-processing inequality and reduces to the multivariate classical Chernoff divergence for commuting states. By taking the infimum in \eqref{eq:maximal-hellinger-inequality} over preparation channels and probability measures satisfying \eqref{eq:prepare-ensemble}, we thus get
\begin{align}
    \xi(\rho_1,\ldots,\rho_r) \leq \xi^{\prime \prime}(\rho_1,\ldots, \rho_r).
\end{align}

We now show that $\xi^{\prime \prime}$ is a multivariate quantum Chernoff divergence; i.e., it satisfies the data processing inequality and reduces to the multivariate classical Chernoff divergence for commuting states.
 Let $\mathcal{N}$ be any quantum channel.  We have
\begin{align}
    \xi^{\prime \prime}(\mathcal{N}(\rho_1),\ldots, \mathcal{N}(\rho_r)) 
    &= \inf_{\substack{(\mathcal{P}, \{P_i\}_{i\in [r]}) \\ \mathcal{P}(P_i)=\mathcal{N}(\rho_i)}} \xi_{\operatorname{cl}}(P_1,\ldots,P_r) \\
    &\leq \inf_{\substack{(\mathcal{P}, \{P_i\}_{i\in [r]}) \\ \mathcal{P}(P_i)=\rho_i}} \xi_{\operatorname{cl}}(P_1,\ldots,P_r) \\
    &= \xi^{\prime \prime}(\rho_1,\ldots, \rho_r),
\end{align}
where the inequality follows because for every preparation channel $\mathcal{P}$ satisfying $\mathcal{P}(P_i)=\rho_i$, its concatenation with $\mathcal{N}$ gives another preparation channel $\mathcal{N} \circ \mathcal{P}$ that satisfies $(\mathcal{N} \circ \mathcal{P})(P_i)=\mathcal{N}(\mathcal{P}(P_i))= \mathcal{N}(\rho_i)$.
If the states $\rho_1,\ldots,\rho_r$ commute, then by the classical data-processing inequality, for any preparation channel $\mathcal{P}$ and probability measures $P_1,\ldots, P_r$ satisfying \eqref{eq:prepare-ensemble}, we get
\begin{align}
     \xi_{\operatorname{cl}}(\rho_1,\ldots, \rho_r)= \xi_{\operatorname{cl}}(\mathcal{P}(P_1),\ldots, \mathcal{P}(P_r))
    \leq \xi_{\operatorname{cl}}(P_1,\ldots, P_r).
\end{align}
Also, the last inequality is equality for probability distributions prepared from a spectral decomposition of the commuting states in a common orthonormal basis. Therefore, we get
\begin{align}
    \xi^{\prime \prime}(\rho_1,\ldots, \rho_r) = \xi_{\operatorname{cl}}(\rho_1,\ldots, \rho_r).
\end{align}
We thus conclude that $\xi^{\prime \prime}$ is the maximal multivariate quantum Chernoff divergence.


\section{Additivity of the optimal error exponent}\label{app:additivity-optimal-error-exponent}

\begin{lemma}
\label{lem:additivity-exp}
Let $\mathcal{E}=\{(\eta_i, \rho_i): i\in [r]\}$ be an ensemble of states. 
The following equality holds%
\begin{equation}
\operatorname{E}(\rho_1,\ldots, \rho_r)=\frac{1}{\ell}\operatorname{E}(\rho_1^{\otimes \ell},\ldots, \rho_r^{\otimes \ell}) \qquad\text{for all }\ell\in\mathbb{N},
\end{equation}
where $\operatorname{E}(\rho_1,\ldots, \rho_r)$ is the optimal error exponent defined in \eqref{eq:qu-asy-error-rate-def}.

\end{lemma}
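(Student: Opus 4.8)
The plan is to reduce the claimed additivity to a statement purely about the sequence $a_m \coloneqq -\tfrac{1}{m}\ln\operatorname{Err}(\mathcal{E}^m)$, whose $\liminf$ over all $m$ is by definition $\operatorname{E}(\rho_1,\ldots,\rho_r)$. The first, essentially bookkeeping, step is the observation that $(\rho_i^{\otimes\ell})^{\otimes n}=\rho_i^{\otimes \ell n}$, so the $n$-fold ensemble built from the states $\rho_1^{\otimes\ell},\ldots,\rho_r^{\otimes\ell}$ is literally $\mathcal{E}^{\ell n}$, giving equal optimal error probabilities on both sides. Consequently
\[
\operatorname{E}(\rho_1^{\otimes\ell},\ldots,\rho_r^{\otimes\ell})=\liminf_{n\to\infty}\left(-\tfrac{1}{n}\ln\operatorname{Err}(\mathcal{E}^{\ell n})\right)=\ell\,\liminf_{n\to\infty}a_{\ell n},
\]
so the entire content of the lemma is the identity $\liminf_{m\to\infty}a_m=\liminf_{n\to\infty}a_{\ell n}$; i.e., that passing to the subsequence of indices divisible by $\ell$ does not change the $\liminf$.

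One inequality is immediate: a subsequence $\liminf$ is always at least the full-sequence $\liminf$, so $\liminf_n a_{\ell n}\ge \liminf_m a_m$. For the reverse inequality I would first establish that $\operatorname{Err}(\mathcal{E}^m)$ is non-increasing in $m$. This follows from the data-processing inequality \eqref{eq:optimal-error-post-processing} for the optimal antidistinguishability error probability: applying the channel $\mathcal{N}=\idc^{\otimes m}\otimes\operatorname{Tr}$ that discards one tensor factor sends $\mathcal{E}^{m+1}$ to $\mathcal{E}^m$, whence $\operatorname{Err}(\mathcal{E}^{m+1})\le\operatorname{Err}(\mathcal{N}(\mathcal{E}^{m+1}))=\operatorname{Err}(\mathcal{E}^m)$, and monotonicity in $m$ follows by induction.

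With monotonicity in hand, for each $m$ I would set $n\coloneqq\floor{m/\ell}$, so that $m\ge\ell n$ gives $\operatorname{Err}(\mathcal{E}^m)\le\operatorname{Err}(\mathcal{E}^{\ell n})$ and hence
\[
a_m=-\tfrac{1}{m}\ln\operatorname{Err}(\mathcal{E}^m)\ge\frac{\ell n}{m}\left(-\tfrac{1}{\ell n}\ln\operatorname{Err}(\mathcal{E}^{\ell n})\right)=\frac{\ell n}{m}\,a_{\ell n}.
\]
Since $\operatorname{Err}(\mathcal{E}^m)\le 1$ forces $a_{\ell n}\ge 0$, and since $\ell n\le m<\ell(n+1)$ gives $\tfrac{\ell n}{m}\to 1$, taking $\liminf_{m\to\infty}$ yields $\liminf_m a_m\ge\liminf_n a_{\ell n}$. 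Combining the two inequalities gives $\liminf_m a_m=\liminf_n a_{\ell n}$, and substituting into the first display produces $\operatorname{E}(\rho_1^{\otimes\ell},\ldots,\rho_r^{\otimes\ell})=\ell\,\operatorname{E}(\rho_1,\ldots,\rho_r)$, which is the claim.

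I expect the monotonicity step to be the real crux. Because $\operatorname{E}$ is defined as a $\liminf$ rather than a genuine limit, one cannot naively identify the subsequence $\liminf$ with the full one, and it is precisely the fact that extra copies never increase the optimal error---secured by data processing under partial trace---that bridges this gap. Everything else is elementary manipulation of $\liminf$ together with the trivial reindexing $\mathcal{F}^n=\mathcal{E}^{\ell n}$.
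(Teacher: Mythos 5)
Your proof is correct and follows essentially the same route as the paper: the easy direction is the same subsequence-$\liminf$ observation, and your monotonicity step $\operatorname{Err}(\mathcal{E}^{m+1})\leq\operatorname{Err}(\mathcal{E}^{m})$ via data processing under the partial trace is just the adjoint view of the paper's argument, which pads the optimal POVM for $\lfloor n/\ell\rfloor\ell$ copies with identity operators on the leftover factors. The concluding $\lfloor m/\ell\rfloor$ bookkeeping, using $a_{\ell n}\geq 0$ and $\ell n/m\to 1$, matches the paper's final step.
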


\begin{proof}
First, we have that
\begin{equation}
\operatorname{E}(\rho_1,\ldots, \rho_r) \leq \frac{1}{\ell}\operatorname{E}(\rho_1^{\otimes \ell},\ldots, \rho_r^{\otimes \ell}) \qquad\text{for all }\ell
\in\mathbb{N}, \label{eq:additivity-optimal-error-exponent-one-side}
\end{equation}
because $\left\{
-\frac{1}{n \ell} \ln\operatorname{Err}(\mathcal{E}^{n\ell}) \right\}
_{n\in\mathbb{N}}$ is a subsequence of $\left\{
-\frac{1}{n} \ln\operatorname{Err}(\mathcal{E}^{n}) \right\}
_{n\in\mathbb{N}}$. 
We now prove the converse inequality to \eqref{eq:additivity-optimal-error-exponent-one-side}. Let $\{M_{k, \ell}(1),\ldots, M_{k, \ell}(r)\}$ be a POVM attaining $\operatorname{Err}(\mathcal{E}^{k\ell})$ for all $k,\ell \in \mathbb{N}$. Then for any $n \in \mathbb{N}$ with $n \geq \ell$, we have
\begin{align}
    \operatorname{Err}(\mathcal{E}^n) 
        &\leq \sum_{i \in [r]} \eta_i \operatorname{Tr}\!\left[ \rho_i^{\otimes n} \left(M_{\lfloor \frac{n}{\ell} \rfloor, \ell}(i) \otimes \mathbb{I}^{\otimes (n-\lfloor \frac{n}{\ell} \rfloor)} \right) \right] \\
        &= \sum_{i \in [r]} \eta_i \operatorname{Tr}\!\left[ \rho_i^{\otimes \lfloor \frac{n}{\ell} \rfloor \ell} M_{\lfloor \frac{n}{\ell} \rfloor, \ell}(i) \right] \\
        &= \operatorname{Err}(\mathcal{E}^{\lfloor \frac{n}{\ell} \rfloor\ell}).
\end{align}
This implies
\begin{align}
    \liminf_{n \to \infty} -\dfrac{1}{n}\ln \operatorname{Err}(\mathcal{E}^{n}) \geq \liminf_{n \to \infty} -\dfrac{1}{\lfloor \frac{n}{\ell} \rfloor\ell}\ln \operatorname{Err}(\mathcal{E}^{\lfloor \frac{n}{\ell} \rfloor\ell}) = \dfrac{1}{\ell} \liminf_{k \to \infty} -\dfrac{1}{k}\ln \operatorname{Err}(\mathcal{E}^{k\ell}).
\end{align}
This completes the proof.
\end{proof}


\section{Limit of the regularized maximal multivariate quantum Chernoff divergence}\label{app:limits-regularized-ch-divergences}
Here we provide a proof of equation~\eqref{eq:limit-maximal-regularized-ch-divergence}. We first observe that the multivariate classical Chernoff divergence is subadditive; i.e.,
\begin{align}
    \xi_{\operatorname{cl}}(P_1 \otimes Q_1,\ldots, P_r \otimes Q_r) \leq \xi_{\operatorname{cl}}(P_1,\ldots, P_r)+ \xi_{\operatorname{cl}}( Q_1,\ldots, Q_r)
\end{align}
for any sets of probability densities $\{P_1,\ldots, P_r\}$ and $\{Q_1,\ldots, Q_r\}$ on a measureable space $(\Omega, \mathcal{A})$. This follows easily from the definitions of the Hellinger transform \eqref{eq:hellinger-trans-def} and multivariate Chernoff divergence \eqref{eq:classical-CH-divergence}.
So, from the definition \eqref{eq:max-CH-div}, we have for $\ell,m \in \mathbb{N}$ that
\begin{align}
    & \xi_{\operatorname{max}}(\rho_1^{\otimes (\ell+m)}, \ldots, \rho_r^{\otimes (\ell+m)}) \notag \\
    &= \inf_{\substack{(\mathcal{P}^{(\ell+m)}, \{P_i^{(\ell+m)}\}_{i\in [r]}) \\ \mathcal{P}^{(\ell+m)}(P_i^{(\ell+m)})=\rho_i^{\otimes \ell} \otimes \rho_i^{\otimes m}}}  \xi_{\operatorname{cl}}(P_1^{(\ell+m)},\ldots, P_r^{(\ell+m)}) \\
    &\leq \inf_{\substack{(\mathcal{P}^{(\ell)} \otimes \mathcal{P}^{(m)}, \{P_i^{(\ell)} \otimes P_i^{(m)}\}_{i\in [r]}) \\ \mathcal{P}^{(\ell)}(P_i^{(\ell)})=\rho_i^{\otimes \ell}, \mathcal{P}^{(m)}(P_i^{(m)})=\rho_i^{\otimes m}}}  \xi_{\operatorname{cl}}(P_1^{(\ell)}\otimes P_1^{(m)} ,\ldots, P_r^{(\ell)}\otimes P_r^{(m)}) \\
    &\leq \inf_{\substack{(\mathcal{P}^{(\ell)} \otimes \mathcal{P}^{(m)}, \{P_i^{(\ell)} \otimes P_i^{(m)}\}_{i\in [r]}) \\ \mathcal{P}^{(\ell)}(P_i^{(\ell)})=\rho_i^{\otimes \ell}, \mathcal{P}^{(m)}(P_i^{(m)})=\rho_i^{\otimes m}}}\left(  \xi_{\operatorname{cl}}(P_1^{(\ell)} ,\ldots, P_r^{(\ell)}) + \xi_{\operatorname{cl}}( P_1^{(m)} ,\ldots, P_r^{(m)}) \right)\\
    &= \inf_{\substack{(\mathcal{P}^{(\ell)}, \{P_i^{(\ell)}\}_{i\in [r]}) \\ \mathcal{P}^{(\ell)}(P_i^{(\ell)})=\rho_i^{\otimes \ell}}}\left(  \xi_{\operatorname{cl}}(P_1^{(\ell)} ,\ldots, P_r^{(\ell)}) \right)+ \inf_{\substack{(\mathcal{P}^{(m)}, \{P_i^{(m)}\}_{i\in [r]}) \\ \mathcal{P}^{(m)}(P_i^{(m)})=\rho_i^{\otimes m}}}\left(  \xi_{\operatorname{cl}}(P_1^{(m)} ,\ldots, P_r^{(m)}) \right)\\
    &= \xi_{\operatorname{max}}(\rho_1^{\otimes \ell}, \ldots, \rho_r^{\otimes \ell}) + \xi_{\operatorname{max}}(\rho_1^{\otimes m}, \ldots, \rho_r^{\otimes m}) .
\end{align}
We have thus proved that the sequence $\left( \xi_{\operatorname{max}}(\rho_1^{\otimes \ell}, \ldots, \rho_r^{\otimes \ell}) \right)_{\ell \in \mathbb{N}}$ is subadditive. 
It then follows from Fekete's subadditive lemma \cite{fekete1923distribution} that
the limit $\lim_{ \ell \to \infty} \xi_{\operatorname{max}}(\rho_1^{\otimes \ell}, \ldots, \rho_r^{\otimes \ell})/\ell$ exists and is given by
\begin{align}
    \lim_{\ell \to \infty} \dfrac{1}{\ell}\xi_{\operatorname{max}}(\rho_1^{\otimes \ell}, \ldots, \rho_r^{\otimes \ell}) = \inf_{\ell \in \mathbb{N}} \dfrac{1}{\ell} \xi_{\operatorname{max}}(\rho_1^{\otimes \ell}, \ldots, \rho_r^{\otimes \ell}) .
\end{align}

\section{Properties of the extended max-relative entropy} \label{app:properties_extended_dmax}

\begin{proposition}[Data-processing inequality]
    Let $X\neq 0$ be a Hermitian operator and $\sigma$ a positive semi-definite operator.
Let $\mathcal{N}$ be a positive map (a special case of which is a quantum channel, i.e., a completely positive and trace preserving map). Then
\begin{equation}
     D_{\max}(X \| \sigma)  \geq D_{\max}(\mathcal{N}(X) \| \mathcal{N}(\sigma)).
\end{equation}
\end{proposition}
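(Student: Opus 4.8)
The plan is to work directly from the variational form of the definition and reduce everything to the single fact that a positive linear map preserves operator inequalities.

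First I would dispose of the trivial case. If $D_{\max}(X\Vert\sigma) = +\infty$, the claimed inequality holds automatically, since the right-hand side is at most $+\infty$. So I may assume $D_{\max}(X\Vert\sigma) < +\infty$, which by the discussion preceding the statement means $\operatorname{supp}(X) \subseteq \operatorname{supp}(\sigma)$ and that the feasible set $\{\lambda \geq 0 : -\lambda\sigma \leq X \leq \lambda\sigma\}$ is nonempty.

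The core step is a feasibility-transfer argument. I would take an arbitrary feasible $\lambda \geq 0$, i.e.\ one satisfying $-\lambda\sigma \leq X \leq \lambda\sigma$, and rewrite this pair of operator inequalities as the positivity of two operators, $\lambda\sigma - X \geq 0$ and $\lambda\sigma + X \geq 0$. Because $\mathcal{N}$ is a positive linear map (and in particular Hermiticity-preserving, as every Hermitian operator is a difference of two positive ones), applying $\mathcal{N}$ and invoking linearity yields $\lambda\mathcal{N}(\sigma) - \mathcal{N}(X) \geq 0$ and $\lambda\mathcal{N}(\sigma) + \mathcal{N}(X) \geq 0$, equivalently $-\lambda\mathcal{N}(\sigma) \leq \mathcal{N}(X) \leq \lambda\mathcal{N}(\sigma)$. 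Thus every $\lambda$ feasible for the pair $(X,\sigma)$ is also feasible for the image pair $(\mathcal{N}(X),\mathcal{N}(\sigma))$. Consequently the infimum over feasible $\lambda$ defining the image quantity is no larger than the one defining the original, and taking $\ln$ of both infima gives $D_{\max}(\mathcal{N}(X)\Vert\mathcal{N}(\sigma)) \leq D_{\max}(X\Vert\sigma)$, as desired.

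The only genuine subtlety—and the place where I would be most careful rather than where any hard analysis is required—is the treatment of degenerate cases in the definition. I would note that $\operatorname{supp}(X)\subseteq\operatorname{supp}(\sigma)$ propagates to $\operatorname{supp}(\mathcal{N}(X))\subseteq\operatorname{supp}(\mathcal{N}(\sigma))$, so the right-hand quantity is well defined whenever $\mathcal{N}(X)\neq 0$ and $\mathcal{N}(\sigma)\neq 0$; and that if $\mathcal{N}(X)=0$ then $\lambda=0$ is feasible on the right, forcing $D_{\max}(\mathcal{N}(X)\Vert\mathcal{N}(\sigma))=\ln 0 = -\infty$, which is consistent with the inequality. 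Beyond these bookkeeping points, the entire argument is just the monotonicity of the Löwner order under positive maps, so no limiting or continuity considerations are needed.
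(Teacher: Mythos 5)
Your proof is correct and follows essentially the same route as the paper's: observe that any $\lambda\geq 0$ feasible for $(X,\sigma)$ remains feasible for $(\mathcal{N}(X),\mathcal{N}(\sigma))$ because a positive linear map preserves the operator inequalities $\lambda\sigma\pm X\geq 0$, so the infimum can only decrease. The extra bookkeeping you add for the degenerate cases is harmless and consistent with the paper's conventions.
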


\begin{proof}
    A special case of this inequality follows from  \cite[Lemma~2]{ww2020} by taking the limit $\alpha \to \infty$. Here we prove it for all positive maps.
Suppose that $\lambda \geq 0$ is such that $-\lambda \sigma \leq X \leq \lambda \sigma$ (for this inequality to be satisfied, we see that it is required for $\lambda \geq 0$ because $\lambda\sigma \geq -\lambda\sigma$ implies $\lambda \geq 0$). Then the following inequality holds $-\lambda \mathcal{N}(\sigma) \leq \mathcal{N}(X) \leq \lambda \mathcal{N}(\sigma)$, from the assumption that $\mathcal{N}$ is a positive map.
Consequently, we get 
\begin{align}
    D_{\max}(X \| \sigma) 
        &= \ln \inf_{\lambda\geq 0}\{\lambda : -\lambda \sigma \leq X \leq \lambda \sigma\} \\
        &\geq \ln \inf_{\lambda\geq 0} \{\lambda : -\lambda \mathcal{N}(\sigma) \leq \mathcal{N}(X) \leq \lambda \mathcal{N}(\sigma)\} \\
        &= D_{\max}(\mathcal{N}(X) \| \mathcal{N}(\sigma)),
\end{align}
concluding the proof.
\end{proof}

\begin{proposition}[Joint quasi-convexity]
Let $\mathscr{X}$ be a finite alphabet and $p$  a probability distribution on $\mathscr{X}$.
Let $X^x\neq 0$ and $\sigma^x$ be Hermitian and positive semi-definite operators, respectively, for all $x \in \mathscr{X}$. Then
\begin{equation}
    \max_{x \in \mathscr{X}} D_{\max}(X^x \| \sigma^x) \geq D_{\max}\left(\sum_{x \in \mathscr{X}} p(x) X^x \bigg \| \sum_{x \in \mathscr{X}} p(x) \sigma^x \right).
\end{equation}
\end{proposition}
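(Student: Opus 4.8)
The plan is to reduce the statement to the feasibility of a single scalar $\lambda$ in the variational definition of $D_{\max}$ on the averaged operators. First I would dispose of the trivial case: if $\max_{x\in\mathscr{X}} D_{\max}(X^x\Vert\sigma^x)=+\infty$, the claimed inequality holds automatically, so I may assume that every term $D_{\max}(X^x\Vert\sigma^x)$ is finite. By the definition of the extended max-relative entropy, finiteness means $\operatorname{supp}(X^x)\subseteq\operatorname{supp}(\sigma^x)$, and the feasible set $\{\lambda\geq 0 : -\lambda\sigma^x\leq X^x\leq\lambda\sigma^x\}$ is a nonempty closed interval of the form $[\lambda_x,\infty)$; hence the infimum is attained, and setting $\lambda_x\coloneqq\exp(D_{\max}(X^x\Vert\sigma^x))$ we have $-\lambda_x\sigma^x\leq X^x\leq\lambda_x\sigma^x$ for each $x$.

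Next I would pass to the common bound $\lambda^*\coloneqq\max_{x\in\mathscr{X}}\lambda_x$. The key elementary observation is that, because each $\sigma^x\geq 0$ and $\lambda^*-\lambda_x\geq 0$, we have $(\lambda^*-\lambda_x)\sigma^x\geq 0$, and therefore $-\lambda^*\sigma^x\leq -\lambda_x\sigma^x\leq X^x\leq\lambda_x\sigma^x\leq\lambda^*\sigma^x$ for every $x$. Thus the single value $\lambda^*$ simultaneously satisfies all of the operator sandwich inequalities.

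I would then take the $p$-weighted average over $x$. Since $p(x)\geq 0$ and a nonnegative combination of positive semi-definite operators is positive semi-definite, multiplying each inequality $-\lambda^*\sigma^x\leq X^x\leq\lambda^*\sigma^x$ by $p(x)$ and summing yields $-\lambda^*\sum_{x}p(x)\sigma^x\leq\sum_{x}p(x)X^x\leq\lambda^*\sum_{x}p(x)\sigma^x$. This exhibits $\lambda^*$ as a feasible point in the optimization defining $D_{\max}(\sum_{x}p(x)X^x\Vert\sum_{x}p(x)\sigma^x)$, so the infimum there is at most $\lambda^*$; taking logarithms gives $D_{\max}(\sum_{x}p(x)X^x\Vert\sum_{x}p(x)\sigma^x)\leq\ln\lambda^*=\max_{x}D_{\max}(X^x\Vert\sigma^x)$, as required.

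The argument is essentially mechanical; the only point demanding a little care is the reduction to a common $\lambda^*$, which relies on the monotonicity that $\sigma^x\geq 0$ makes $\lambda\mapsto\lambda\sigma^x$ order-preserving, together with confirming that the infimum defining each finite $D_{\max}$ is actually attained, so that the sandwich inequalities hold at $\lambda_x$ itself rather than only in a limit. If one prefers to avoid attainment, the same conclusion follows by replacing $\lambda_x$ with $\lambda_x+\varepsilon$ throughout and letting $\varepsilon\searrow 0$ at the end.
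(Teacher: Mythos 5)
Your proof is correct and follows essentially the same route as the paper's: both arguments rest on the observation that any $\lambda$ feasible for all of the individual sandwich constraints $-\lambda\sigma^x\leq X^x\leq\lambda\sigma^x$ remains feasible after taking the $p$-weighted average, so the infimum for the averaged pair is at most the common infimum, which equals the maximum of the individual infima. Your explicit treatment of attainment (or the $\varepsilon$-perturbation) is a harmless elaboration of the monotonicity fact the paper uses implicitly when it identifies $\inf\{\lambda:\text{all }x\}$ with $\max_x\inf\{\lambda:x\}$.
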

\begin{proof}
    
If $\lambda \geq 0$ satisfies $-\lambda \sigma^x \leq X^x \leq \lambda \sigma^x$ for all $x \in \mathscr{X}$, then we also have $-\lambda \sum_{x \in \mathscr{X}} p(x) \sigma^x \leq \sum_{x \in \mathscr{X}} p(x) X^x \leq \lambda \sum_{x \in \mathscr{X}} p(x) \sigma^x$.
This gives
\begin{align}
    D_{\max}\left(\sum_{x \in \mathscr{X}} p(x) X^x \bigg \| \sum_{x \in \mathscr{X}} p(x) \sigma^x \right)
        &= \ln \inf_{\lambda \geq 0} \bigg \{\lambda: -\lambda \sum_{x \in \mathscr{X}} p(x) \sigma^x \leq \sum_{x \in \mathscr{X}} p(x) X^x \leq \lambda \sum_{x \in \mathscr{X}} p(x) \sigma^x  \bigg\} \\
        &\leq  \ln \inf_{\lambda \geq 0} \big \{\lambda: -\lambda  \sigma^x \leq  X^x \leq \lambda  \sigma^x, \forall x \in \mathscr{X} \big\}  \\
        &= \max_{x \in \mathscr{X}}\ln \inf_{\lambda \geq 0} \big \{\lambda: -\lambda  \sigma^x \leq  X^x \leq \lambda  \sigma^x \big\} \\
        &= \max_{x \in \mathscr{X}} D_{\max}(X^x \| \sigma^x),
\end{align}
concluding the proof.
\end{proof}

\begin{proposition}[Lower semi-continuity]
Let $X$ be a Hermitian operator and $\sigma$ a positive semi-definite operator.
    For every sequence $(X_n)_{n \in \mathbb{N}}$ of Hermitian operators converging to $X$ in the operator norm,
    \begin{equation}
        \liminf_{n \to \infty} D_{\max}(X_n \| \sigma) \geq D_{\max}(X \| \sigma).
    \end{equation}
\end{proposition}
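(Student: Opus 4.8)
The plan is to exploit the fact that the feasible set in the definition of $D_{\max}$ is a closed half-line, so the defining infimum is attained, and then to push the relevant operator inequalities through the limit using the closedness of the positive semi-definite cone in finite dimension.

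First I would set $L \coloneqq \liminf_{n\to\infty} D_{\max}(X_n\Vert\sigma)$ and dispose of the trivial case $L=+\infty$, in which the claimed inequality holds automatically. Assuming $L<+\infty$, I would pass to a subsequence $(X_{n_k})_k$ along which $D_{\max}(X_{n_k}\Vert\sigma)\to L$; after discarding finitely many terms we may assume every value $D_{\max}(X_{n_k}\Vert\sigma)$ is finite.

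Next, for a fixed Hermitian operator $A$ I would observe that the set $\{\lambda\ge 0 : -\lambda\sigma\le A\le \lambda\sigma\}$ is an up-closed, closed subset of $[0,\infty)$: it is up-closed because adding $(\lambda'-\lambda)\sigma\ge 0$ preserves both inequalities, and closed because the positive semi-definite cone is closed. Hence, whenever nonempty, it has the form $[\lambda_A,\infty)$, so the infimum defining $D_{\max}(A\Vert\sigma)$ is attained. Applying this with $A=X_{n_k}$ and writing $\lambda_k \coloneqq \exp\!\big(D_{\max}(X_{n_k}\Vert\sigma)\big)$, I obtain the operator inequalities $-\lambda_k\sigma\le X_{n_k}\le \lambda_k\sigma$. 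Since $\lambda_k\to \lambda_\ast\coloneqq \exp(L)$ and $X_{n_k}\to X$ in operator norm (all norms being equivalent in finite dimension), taking $k\to\infty$ in $\lambda_k\sigma - X_{n_k}\ge 0$ and $X_{n_k}+\lambda_k\sigma\ge 0$ and again invoking closedness of the positive semi-definite cone yields $-\lambda_\ast\sigma\le X\le \lambda_\ast\sigma$. Thus $\lambda_\ast$ is feasible in the infimum defining $D_{\max}(X\Vert\sigma)$, giving $D_{\max}(X\Vert\sigma)\le \ln\lambda_\ast = L$, which is exactly the desired bound.

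The only real subtlety, and the step I would treat most carefully, is the attainment of the defining infimum together with the bookkeeping around possibly infinite values: one must ensure that along the chosen subsequence the support condition $\operatorname{supp}(X_{n_k})\subseteq\operatorname{supp}(\sigma)$ eventually holds (equivalently, that $D_{\max}(X_{n_k}\Vert\sigma)$ is finite), which is guaranteed precisely because $D_{\max}(X_{n_k}\Vert\sigma)\to L<\infty$. Everything else is a routine limiting argument enabled by finite dimensionality.
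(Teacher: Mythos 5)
Your proof is correct, and it takes a genuinely different route from the paper's. The paper first disposes of the case where $\operatorname{supp}(X_n)\nsubseteq\operatorname{supp}(\sigma)$ infinitely often, then in the remaining case invokes the closed-form expression $D_{\max}(X\Vert\sigma)=\ln\lVert\sigma^{-1/2}X\sigma^{-1/2}\rVert_\infty$ and the continuity of the operator norm, which in fact yields convergence (not merely lower semi-continuity) of $D_{\max}(X_n\Vert\sigma)$ to $D_{\max}(X\Vert\sigma)$ along such sequences. You instead work directly with the variational (SDP) definition: you show the feasible set $\{\lambda\ge 0:-\lambda\sigma\le A\le\lambda\sigma\}$ is closed and up-closed so the infimum is attained, extract a subsequence realizing the $\liminf$, and pass the attained operator inequalities to the limit via closedness of the positive semi-definite cone. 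What each approach buys: the paper's formula-based argument gives the stronger continuity statement on the set where the support condition holds, but leans on the pseudo-inverse representation; yours is the standard, more robust lower semi-continuity argument for infima of closed feasibility conditions, needs nothing beyond closedness of the PSD cone, and handles the case bookkeeping more cleanly --- indeed, your subsequence extraction along the $\liminf$ sidesteps a small imprecision in the paper's dichotomy (a sequence with infinitely many infinite terms interleaved with bounded finite ones has finite $\liminf$, so the paper's first case does not by itself force $\liminf_n D_{\max}(X_n\Vert\sigma)=+\infty$). The only edge case you leave implicit is $L=-\infty$ (forcing $X=0$), where your limiting argument with $\lambda_k\to 0$ still goes through.
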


\begin{proof}
    Let $(X_n)_{n \in \mathbb{N}}$ be an arbitrary sequence of Hermitian operators
that converges to $X$ in the operator norm $\left\|\cdot\right\|_\infty$.
Note that the convergence of the sequence is independent of the choice of the norm due to the finite dimensionality of the underlying Hilbert space.
If $\operatorname{supp}(X_n) \nsubseteq \operatorname{supp}(\sigma)$ for infinitely many $n$ then we have
$\liminf_{n \to \infty} D_{\max}(X_n \| \sigma)=\infty$, and hence $\liminf_{n \to \infty} D_{\max}(X_n \| \sigma) \geq D_{\max}(X\| \sigma)$ is trivially satisfied in this case.
Thus, we assume without loss of generality that $\operatorname{supp}(X_n) \subseteq \operatorname{supp}(\sigma)$ for large $n$.
For all $|v\rangle \in \mathcal{H}\backslash \operatorname{supp}(\sigma)$, we have
$X_n |v\rangle=0$ for large $n$, which implies $X |v\rangle=\lim_{n \to \infty} X_n |v\rangle=0$.
Consequently, we get $\operatorname{supp}(X) \subseteq \operatorname{supp}(\sigma)$.
Thus, we have
\begin{align}
    \liminf_{n \to \infty} D_{\max}(X_n \| \sigma)
        &= \lim_{n \to \infty} \ln \left\|\sigma^{-1/2} X_n \sigma^{-1/2}\right\|_\infty \\
        &= \ln \left\|\sigma^{-1/2} X \sigma^{-1/2}\right \|_\infty \\
        &= D_{\max}(X \| \sigma).
\end{align}
We have thus shown that, for every sequence $(X_n)_{n \in \mathbb{N}}$ of Hermitian operators converging to $X$, we have $\liminf_{n \to \infty} D_{\max}(X_n \| \sigma) \geq D_{\max}(X \| \sigma)$.
This proves lower semi-continuity of the extended max-relative entropy $D_{\max}(X \| \sigma)$ in $X$.
\end{proof}

\begin{proposition}[Non-negativity and faithfulness]
    Let $X$ be a Hermitian operator of unit trace, and let $\sigma$ be a quantum state. Then $D_{\max}(X \| \sigma) \geq 0$. Also, under the same conditions, $D_{\max}(X \| \sigma) = 0$ if and only if $X = \sigma$.
\end{proposition}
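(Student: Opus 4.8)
The plan is to read everything off the operator-inequality definition $D_{\max}(X\Vert\sigma)=\ln\inf\{\lambda\geq 0:-\lambda\sigma\leq X\leq\lambda\sigma\}$ together with the normalizations $\Tr[X]=\Tr[\sigma]=1$. For non-negativity I would fix any feasible $\lambda\geq 0$, so that in particular $X\leq\lambda\sigma$; taking the trace of this operator inequality and using that the trace is monotone on the positive semi-definite cone yields $1=\Tr[X]\leq\lambda\Tr[\sigma]=\lambda$. Thus every feasible $\lambda$ satisfies $\lambda\geq 1$, so the infimum is at least $1$ (and if there is no feasible $\lambda$ then $D_{\max}=+\infty$); in either case $D_{\max}(X\Vert\sigma)\geq\ln 1=0$.

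For the ``if'' part of faithfulness I would simply verify that $X=\sigma$ makes $\lambda=1$ feasible: the inequality $\sigma\leq\sigma$ is trivial and $-\sigma\leq\sigma$ holds because $2\sigma\geq 0$. Combined with the bound $\inf\lambda\geq 1$ just proved, this forces $\inf\lambda=1$ and hence $D_{\max}(\sigma\Vert\sigma)=0$.

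The substantive direction is the ``only if'' part. Assuming $D_{\max}(X\Vert\sigma)=0$, i.e. $\inf\{\lambda\geq 0:-\lambda\sigma\leq X\leq\lambda\sigma\}=1$, I would first argue that this infimum is attained, so that $\lambda=1$ is itself feasible and $-\sigma\leq X\leq\sigma$. This rests on two observations: (i) the feasible set is upward closed, since if $\lambda$ is feasible and $\lambda'\geq\lambda$ then $\lambda'\sigma-X=(\lambda'-\lambda)\sigma+(\lambda\sigma-X)\geq 0$ and likewise $\lambda'\sigma+X\geq 0$; and (ii) the maps $\lambda\mapsto\lambda\sigma\mp X$ are continuous and positive semi-definiteness is a closed condition, so the feasible set is a closed half-line $[\lambda^\ast,\infty)$ containing its endpoint $\lambda^\ast=1$. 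Granting $X\leq\sigma$, I would set $Y\coloneqq\sigma-X\geq 0$ and compute $\Tr[Y]=\Tr[\sigma]-\Tr[X]=0$; since a positive semi-definite operator of vanishing trace is the zero operator, this gives $Y=0$, i.e. $X=\sigma$.

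The only delicate point—and hence the expected main obstacle—is the attainment of the infimum used in the ``only if'' direction, which I would secure via the closedness and monotonicity of the feasible set described above; everything else reduces to taking traces and to the elementary fact that a trace-zero positive semi-definite operator vanishes. I would also note at the outset that if $\supp(X)\not\subseteq\supp(\sigma)$ then no $\lambda$ is feasible and $D_{\max}=+\infty$, so that configuration is automatically excluded whenever $D_{\max}=0$.
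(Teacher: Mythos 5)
Your proof is correct, and it follows the paper's strategy for the non-negativity part exactly (take the trace of $X\leq\lambda\sigma$ to force every feasible $\lambda\geq 1$). In the ``only if'' direction of faithfulness the two arguments diverge at the final step: both reduce the problem to showing that $\sigma-X\geq 0$ together with $\operatorname{Tr}[\sigma-X]=0$ forces $\sigma-X=0$, but the paper establishes this by bounding $\tfrac{1}{2}\Vert\sigma-X\Vert_1$ via the Helstrom--Holevo variational formula and weak SDP duality, whereas you invoke the elementary fact that a positive semi-definite operator with vanishing trace is the zero operator. Your route is shorter and avoids any appeal to duality; the paper's route buys nothing extra here. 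You also explicitly justify a point the paper passes over in silence, namely that $D_{\max}(X\Vert\sigma)=0$ actually yields a feasible $\lambda=1$ (so that $-\sigma\leq X\leq\sigma$ holds and not merely $-\lambda\sigma\leq X\leq\lambda\sigma$ for all $\lambda>1$): your observation that the feasible set is closed and upward closed, hence equal to $[1,\infty)$, is exactly the right way to close that gap, and it is a genuine (if minor) improvement in rigor over the published argument.
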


\begin{proof}
    
For every $\lambda \geq 0$ satisfying $-\lambda \sigma \leq X \leq \lambda \sigma$, we have that 
    $\lambda =  \operatorname{Tr}[\lambda \sigma] \geq \operatorname{Tr}X =1,$
implying $\ln \lambda \geq 0$.
By definition, we then get $D_{\max}(X \| \sigma) \geq 0$.

If $X=\sigma$, then it trivially follows by definition that $D_{\operatorname{max}}(X \| \sigma)=0$.
Conversely, suppose that $D_{\operatorname{max}}(X \| \sigma)=0$.
This implies $-\sigma \leq X \leq \sigma$, and hence $\sigma - X \geq 0$.
By the Helstrom–Holevo Theorem \cite[Eq.~$(5.1.17)$]{KW20}, and the fact that $\operatorname{Tr}[\sigma - X]=0$, we get
\begin{align}
    \frac{1}{2} \|\sigma-X\|_1 
        &= \sup_{M \geq 0} \{\operatorname{Tr}[M(\sigma-X)]: M \leq \mathbb{I}\} \label{eq:primal_sigma_minus_x} \\
        &\leq \inf_{Y \geq 0} \{\operatorname{Tr}[Y]: Y \geq \sigma-X\}, \label{eq:dual_sigma_minus_x}
\end{align}
where the last inequality follows by the weak duality of the SDP given in \eqref{eq:primal_sigma_minus_x}.
A feasible point in $\eqref{eq:dual_sigma_minus_x}$ is given by $Y=\sigma - X$,
and we have $\operatorname{Tr}[Y]=\operatorname{Tr}[\sigma-X]=0$.
It thus follows from \eqref{eq:dual_sigma_minus_x} that $\|\sigma-X \|_1 \leq 0$, which implies $\|\sigma-X \|_1=0$.
We have thus shown that $\sigma=X$.

\end{proof}

\begin{proposition}[Monotonicity]
    Let $X$ be a Hermitian operator and $\sigma', \sigma$ be positive semi-definite operators such that $\sigma' \leq \sigma$.
    Then
    \begin{equation}
    D_{\max}(X \| \sigma) \leq D_{\max}(X \| \sigma').
    \end{equation}
\end{proposition}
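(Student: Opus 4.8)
The plan is to work directly from the variational definition
\begin{equation}
    D_{\max}(X\Vert\sigma)=\ln\inf_{\lambda\geq0}\left\{\lambda:-\lambda\sigma\leq X\leq\lambda\sigma\right\},
\end{equation}
and to show that the feasible set defining $D_{\max}(X\Vert\sigma')$ is contained in the one defining $D_{\max}(X\Vert\sigma)$. Since the logarithm is monotone, it suffices to prove that every $\lambda\geq0$ satisfying $-\lambda\sigma'\leq X\leq\lambda\sigma'$ also satisfies $-\lambda\sigma\leq X\leq\lambda\sigma$; an infimum taken over a larger set is no larger, which yields the claim.

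The key step is the operator-inequality manipulation. Fix $\lambda\geq0$ with $-\lambda\sigma'\leq X\leq\lambda\sigma'$. Because $\lambda\geq0$ and $\sigma'\leq\sigma$, scaling the hypothesis $\sigma'\leq\sigma$ by $\lambda$ gives $\lambda\sigma'\leq\lambda\sigma$, and scaling by $-\lambda$ reverses the inequality to give $-\lambda\sigma\leq-\lambda\sigma'$. Chaining these with the assumed bounds yields
\begin{equation}
    -\lambda\sigma\leq-\lambda\sigma'\leq X\leq\lambda\sigma'\leq\lambda\sigma,
\end{equation}
so $\lambda$ is feasible for $\sigma$ as well. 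Hence
\begin{equation}
    \inf_{\lambda\geq0}\left\{\lambda:-\lambda\sigma\leq X\leq\lambda\sigma\right\}\leq\inf_{\lambda\geq0}\left\{\lambda:-\lambda\sigma'\leq X\leq\lambda\sigma'\right\},
\end{equation}
and taking logarithms gives $D_{\max}(X\Vert\sigma)\leq D_{\max}(X\Vert\sigma')$.

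The only point requiring a word of care is the degenerate case where the feasible set for $\sigma'$ is empty, i.e.\ when $\operatorname{supp}(X)\nsubseteq\operatorname{supp}(\sigma')$ and thus $D_{\max}(X\Vert\sigma')=+\infty$; here the desired inequality holds trivially. Conversely, when $\operatorname{supp}(X)\subseteq\operatorname{supp}(\sigma')$, the relation $\sigma'\leq\sigma$ forces $\operatorname{supp}(\sigma')\subseteq\operatorname{supp}(\sigma)$ and hence $\operatorname{supp}(X)\subseteq\operatorname{supp}(\sigma)$, so both quantities are finite and the set-inclusion argument above applies without modification. I do not anticipate any real obstacle: the entire content is the sign reversal under multiplication by $-\lambda$ together with the observation that enlarging the feasible region can only decrease the infimum, so the result follows directly from the definition.
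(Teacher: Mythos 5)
Your proof is correct and follows essentially the same route as the paper's: any $\lambda\geq 0$ feasible for $\sigma'$ is feasible for $\sigma$ because $-\lambda\sigma\leq-\lambda\sigma'\leq X\leq\lambda\sigma'\leq\lambda\sigma$, so the infimum over the larger feasible set can only be smaller. Your extra remark on the empty-feasible-set case is a harmless refinement that the paper's convention (infimum of the empty set is $+\infty$) already absorbs.
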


\begin{proof}
    Given an arbitrary $\lambda \geq 0$ that satisfies $-\lambda \sigma' \leq X \leq \lambda \sigma'$, this $\lambda$ also satisfies
    $-\lambda \sigma \leq X \leq \lambda \sigma$.
Consequently,
\begin{align}
    D_{\max}(X \| \sigma) 
        & = \ln \inf_{\lambda \geq 0} \{\lambda: -\lambda \sigma \leq X \leq \lambda \sigma\} \\
        & \leq \ln \inf_{\lambda \geq 0} \{\lambda: -\lambda \sigma' \leq X \leq \lambda \sigma' \} \\
        & = D_{\max}(X \| \rho),
\end{align}
concluding the proof.
\end{proof}

    If $A, B$ are Hermitian operators on a Hilbert space $\mathcal{H}$ then it is easy to prove that the kernel of their tensor product is given by $\operatorname{ker}(A \otimes B) = \operatorname{ker}(A) \otimes \mathcal{H} + \mathcal{H} \otimes \operatorname{ker}(B)$.
    We use this observation in the proof of the next property.
\begin{proposition}[Additivity]
    Let $X_1, X_2$ be non-zero Hermitian operators, and let $\sigma_1, \sigma_2$ be non-zero positive semi-definite operators.
    Then
    \begin{align}
        D_{\max} (X_1 \otimes X_2 \| \sigma_1 \otimes \sigma_2) = D_{\max} (X_1  \| \sigma_1)+ D_{\max} ( X_2 \| \sigma_2).
    \end{align}
\end{proposition}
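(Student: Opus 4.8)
The plan is to split the argument according to whether the relevant supports are nested, exploiting that the extended max-relative entropy is finite precisely when the support of the first argument is contained in that of the second. First I would record the support-factorization identity for tensor products, namely $\supp(A\otimes B)=\supp(A)\otimes\supp(B)$, which is the orthogonal-complement form of the kernel identity $\operatorname{ker}(A\otimes B)=\operatorname{ker}(A)\otimes\mathcal{H}+\mathcal{H}\otimes\operatorname{ker}(B)$ stated just before the proposition. Using that a nonzero product vector $|u\rangle\otimes|v\rangle$ lies in $W_1\otimes W_2$ if and only if $|u\rangle\in W_1$ and $|v\rangle\in W_2$, this identity shows that $\supp(X_1\otimes X_2)\subseteq\supp(\sigma_1\otimes\sigma_2)$ holds if and only if both $\supp(X_1)\subseteq\supp(\sigma_1)$ and $\supp(X_2)\subseteq\supp(\sigma_2)$ hold, where the nonzeroness of $X_1,X_2$ is used to produce the required product vectors. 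Consequently $D_{\max}(X_1\otimes X_2\Vert\sigma_1\otimes\sigma_2)=+\infty$ exactly when at least one of $D_{\max}(X_i\Vert\sigma_i)$ equals $+\infty$, so the claimed equality holds in the infinite regime with both sides equal to $+\infty$.

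It then remains to treat the finite case, in which both support containments hold and the closed form $D_{\max}(X\Vert\sigma)=\ln\left\Vert\sigma^{-1/2}X\sigma^{-1/2}\right\Vert_\infty$ (with inverses taken on supports) recorded earlier is available. The key computation I would carry out is that, on the support of $\sigma_1\otimes\sigma_2=\supp(\sigma_1)\otimes\supp(\sigma_2)$, one has $(\sigma_1\otimes\sigma_2)^{-1/2}=\sigma_1^{-1/2}\otimes\sigma_2^{-1/2}$, so that by the mixed-product property of the tensor product,
\[
(\sigma_1\otimes\sigma_2)^{-1/2}(X_1\otimes X_2)(\sigma_1\otimes\sigma_2)^{-1/2}=\left(\sigma_1^{-1/2}X_1\sigma_1^{-1/2}\right)\otimes\left(\sigma_2^{-1/2}X_2\sigma_2^{-1/2}\right).
\]
Applying the multiplicativity of the operator norm under tensor products, $\left\Vert A\otimes B\right\Vert_\infty=\left\Vert A\right\Vert_\infty\left\Vert B\right\Vert_\infty$, and then taking logarithms yields $D_{\max}(X_1\otimes X_2\Vert\sigma_1\otimes\sigma_2)=D_{\max}(X_1\Vert\sigma_1)+D_{\max}(X_2\Vert\sigma_2)$, which completes the finite case and hence the proof.

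Finally, I would remark that the sub-additivity inequality ``$\leq$'' can alternatively be obtained directly from the variational definition: if $-\lambda_i\sigma_i\leq X_i\leq\lambda_i\sigma_i$ for $i=1,2$, then Lemma~12.35 of \cite{KW20} (already invoked for Lemma~\ref{lem:supermult-kappa}) gives $-\lambda_1\lambda_2\,\sigma_1\otimes\sigma_2\leq X_1\otimes X_2\leq\lambda_1\lambda_2\,\sigma_1\otimes\sigma_2$, so that $\lambda_1\lambda_2$ is feasible for the tensor product and one may pass to the infimum. I expect the main obstacle to be purely the support bookkeeping in the infinite regime—in particular, verifying the support-factorization carefully so that the closed form and the factorization of $(\sigma_1\otimes\sigma_2)^{-1/2}$ are legitimate—rather than the concluding norm computation, which is immediate once the supports are correctly handled.
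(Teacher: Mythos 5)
Your proposal is correct and follows essentially the same route as the paper's proof: reduce the infinite case to a support-containment argument for tensor products (using the nonzeroness of $X_1,X_2$ to produce the witnessing product vectors), and settle the finite case via the closed form $D_{\max}(X\Vert\sigma)=\ln\lVert\sigma^{-1/2}X\sigma^{-1/2}\rVert_\infty$ together with multiplicativity of the operator norm under tensor products. The only detail worth making explicit (as the paper does) is that nonzeroness of $X_i$ also guarantees $D_{\max}(X_i\Vert\sigma_i)>-\infty$, so the right-hand side is a well-defined sum equal to $+\infty$ in the infinite regime.
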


\begin{proof}
    First, suppose that $\operatorname{supp}(X_1) \nsubseteq \operatorname{supp}(\sigma_1)$.
    This implies that $\operatorname{supp}(X_1 \otimes X_2) \nsubseteq \operatorname{supp}(\sigma_1 \otimes \sigma_2)$.
    Indeed, let $|x_1 \rangle \in \operatorname{supp}(X_1) \backslash \operatorname{supp}(\sigma_1)$.
    Also, $X_2 \neq 0$ implies that there exists a non-zero vector $|x_2 \rangle \in \operatorname{supp}(X_2)$.
    We thus have $(X_1 \otimes X_2)(|x_1 \rangle \otimes |x_2 \rangle) \neq 0$ and $(\sigma_1 \otimes \sigma_2)(|x_1 \rangle \otimes |x_2 \rangle) = 0$, implying that $\operatorname{supp}(X_1 \otimes X_2) \nsubseteq \operatorname{supp}(\sigma_1 \otimes \sigma_2)$.
    Also, the assumption that $X_2$ and $\sigma_2$ are non-zero implies that $D_{\max}(X_2 \|\sigma_2) > -\infty$.
    Therefore, in this case, both $D_{\max}(X_1 \otimes X_2 \|\sigma_1 \otimes \sigma_2)$ and $D_{\max}(X_1 \|\sigma_1)+ D_{\max}(X_2 \|\sigma_2)$ are equal to $\infty$.
    We also get by similar arguments for the case $\operatorname{supp}(X_2) \nsubseteq \operatorname{supp}(\sigma_2)$ that both $D_{\max}(X_1 \otimes X_2 \|\sigma_1 \otimes \sigma_2)$ and $D_{\max}(X_1 \|\sigma_1)+ D_{\max}(X_2 \|\sigma_2)$ are equal to $\infty$.

    To complete the proof, we now consider the case when $\operatorname{supp}(X_1) \subseteq \operatorname{supp}(\sigma_1)$ and $\operatorname{supp}(X_2) \subseteq \operatorname{supp}(\sigma_2)$.
    In this case, we have $\operatorname{supp}(X_1 \otimes X_2) \subseteq \operatorname{supp}(\sigma_1 \otimes \sigma_2)$.
    This is because we have $\operatorname{ker}(\sigma_1) \subseteq \operatorname{ker}(X_1)$ and $\operatorname{ker}(\sigma_2) \subseteq \operatorname{ker}(X_2)$, which gives
    \begin{align}
        \operatorname{ker}(\sigma_1 \otimes \sigma_2)
            &= \operatorname{ker}(\sigma_1) \otimes \mathcal{H} + \mathcal{H} \otimes \operatorname{ker}(\sigma_2) \\
            &\subseteq \operatorname{ker}(X_1) \otimes \mathcal{H} + \mathcal{H} \otimes \operatorname{ker}(X_2) \\
            &= \operatorname{ker}(X_1 \otimes X_2).
    \end{align}
    We thus have
    \begin{align}
        D_{\max} (X_1 \otimes X_2 \| \sigma_1 \otimes \sigma_2) 
            &= \ln \big\|(\sigma_1^{-1/2} \otimes \sigma_2^{-1/2}) (X_1 \otimes X_2) (\sigma_1^{-1/2} \otimes \sigma_2^{-1/2}) \big\|_\infty \\
            &= \ln \big\|\sigma_1^{-1/2} X_1 \sigma_1^{-1/2} \otimes  \sigma_2^{-1/2} X_2 \sigma_2^{-1/2} \big\|_\infty \\
            &= \ln \left(\big\|\sigma_1^{-1/2} X_1 \sigma_1^{-1/2}\|_\infty \cdot \|\sigma_2^{-1/2} X_2 \sigma_2^{-1/2} \big\|_\infty \right) \\
            &= \ln \big\|\sigma_1^{-1/2} X_1 \sigma_1^{-1/2} \big\|_\infty + \ln \big\|\sigma_2^{-1/2} X_2 \sigma_2^{-1/2} \big\|_\infty\\
            &= D_{\max}(X_1 \| \sigma_1) + D_{\max}(X_2 \| \sigma_2),
    \end{align}
    concluding the proof.
\end{proof}


\section{Proof of Equation~\eqref{eq:d-convex-comb-log-euclid}}

\label{app:max-rel-entropy-bounds-similarity}

Let $\omega \in \mathcal{D}$ be arbitrary and $(s_1,\ldots, s_r)\in \mathbb{R}^r$ be any probability vector. Since the quantum states $\rho_1,\ldots, \rho_r$ have full support, we have
\begin{align}
&  \sum_{i\in [r]}s_{i}D(\omega\Vert\rho_{i})\nonumber\\
&  =\sum_{i\in [r]}s_{i}\operatorname{Tr}[\omega(\ln\omega-\ln\rho_{i})]\\
&  =\operatorname{Tr}[\omega\ln\omega]-\operatorname{Tr}\!\left[  \omega\left(
\sum_{i\in [r]}s_i\ln\rho_{i}\right)  \right]  \\
&  =\operatorname{Tr}[\omega\ln\omega]-\operatorname{Tr}\!\left[  \omega\ln
\exp\left(  \sum_{i\in [r]}s_i\ln\rho_{i}\right)  \right]  \\
&  =\operatorname{Tr}[\omega\ln\omega]-\operatorname{Tr}\!\left[  \omega
\ln\left(  \frac{\exp\left(  \sum_{i\in [r]}s_i\ln\rho_{i}\right)  }%
{\Tr\!\left[\exp\left(  \sum_{i\in [r]}s_i\ln\rho_{i}\right) \right]}\cdot \Tr\!\left[\exp\left(  \sum_{i\in [r]}s_i\ln\rho_{i}\right) \right]\right)  \right]  \\
&  =\operatorname{Tr}[\omega\ln\omega]-\operatorname{Tr}\!\left[  \omega
\ln\left(  \frac{\exp\left(  \sum_{i\in [r]}s_i\ln\rho_{i}\right)  }%
{\Tr\!\left[\exp\left(  \sum_{i\in [r]}s_i\ln\rho_{i}\right) \right]}\right)  \right]  -\ln
\Tr\!\left[\exp\left(  \sum_{i\in [r]}s_i\ln\rho_{i}\right) \right]\\
&  =D\left(  \omega\middle\Vert\frac{\exp\left(  \sum_{i\in [r]}s_i\ln
\rho_{i}\right)  }{\Tr\!\left[\exp\left(  \sum_{i\in [r]}s_i\ln\rho_{i}\right) \right]}\right)
-\ln \Tr\!\left[\exp\left(  \sum_{i\in [r]}s_i\ln\rho_{i}\right) \right]\\
&  \geq-\ln \Tr\!\left[\exp\left(  \sum_{i\in [r]}s_i\ln\rho_{i}\right) \right],
\end{align}
where the inequality follows from the non-negativity of quantum relative
entropy for quantum states. The lower bound is achieved by picking
$\omega=\frac{\exp\left(  \sum_{i\in [r]}s_i\ln\rho_{i}\right)  }%
{\Tr\left[\exp\left(  \sum_{i\in [r]}s_i\ln\rho_{i}\right) \right]}$, so that%
\begin{align}
&  \inf_{\omega \in \mathcal{D}}\sum_{i\in [r]}
s_{i}D(\omega\Vert\rho_{i})\nonumber\\
&  =\inf_{\omega \in \mathcal{D}}D\left(
\omega\middle\Vert\frac{\exp\left(  \sum_{i\in [r]}s_i\ln\rho_{i}\right)
}{\Tr\!\left[\exp\left(  \sum_{i\in [r]}s_i\ln\rho_{i}\right) \right]}\right)  -\ln
\Tr\!\left[\exp\left(  \sum_{i\in [r]}s_i\ln\rho_{i}\right) \right]\\
&  =-\ln \Tr\!\left[\exp\left(  \sum_{i\in [r]}s_i\ln\rho_{i}\right) \right].
\end{align}
This directly gives \eqref{eq:d-convex-comb-log-euclid}.



\end{document}